\theoremstyle{plain}
\newtheorem{theorem}{Theorem}
\newtheorem{proposition}{Proposition}[section]
\newtheorem{lemma}[proposition]{Lemma}
\newtheorem{observation}[proposition]{Observation}
\newtheorem{claim}[proposition]{Claim}
\theoremstyle{definition}
\newtheorem{definition}[proposition]{Definition}
\newtheorem*{lemmarec}{Lemma}
\def\ifpdf\input{#.pdf_t}\else\input{#.ps_t}\fi1{\ifpdf\input{#1.pdf_t}\else\input{#1.ps_t}\fi}
\begin{document}

\title[On path decompositions of $2k$-regular graphs]{On path decompositions of $2k$-regular graphs}

\author{F\'abio Botler\textsuperscript{1} and  Andrea Jim\'enez\textsuperscript{2}}

 
 \address{\textsuperscript{1}Instituto de Matem\'atica e Estat\'{\i}stica, Universidade de 
	   S\~ao Paulo \\
   	\textsuperscript{2}CIMFAV, Facultad de Ingenier\'ia, Universidad de Valpara\'iso	}


\thanks{We thank the partial support of 
FAPESP~(2013/03447-6) and
{CNPq~(477203/2012-4)}, Brazil.
The first author thanks partial support FAPESP Projects (Proc. 2014/01460-8 and \mbox{2011/08033-0})
and CNPq Project ({456792/2014-7}), Brazil.
The second author thanks the  partial  support of CONICYT/FONDECYT/POSTDOC\-TORADO~3150673
and of Nucleo Milenio Informaci\'on y Coordinaci\'on en Redes ICM/FIC RC130003, Chile.\\
Email: \textsuperscript{1}fbotler@ime.usp.br, \textsuperscript{2}andrea.jimenez@uv.cl}

\begin{abstract}
Tibor Gallai conjectured that the edge set of every connected graph $G$
on $n$ vertices can be partitioned into $\lceil n/2\rceil$ paths.
Let $\mathcal{G}_{k}$ be the class of all $2k$-regular graphs 
of girth at least $2k-2$ that admit a pair of disjoint perfect matchings.
In this work, we show that Gallai's conjecture holds in $\mathcal{G}_{k}$, for every $k \geq 3$.
Further, we prove that for every graph $G$ in $\mathcal{G}_{k}$ on $n$ vertices, there exists
a partition of its edge set into $n/2$ paths of lengths in $\{2k-1,2k,2k+1\}$. 

\smallskip
\noindent \textbf{Keywords.} Path decomposition, length-constrained, $2k$-regular graphs, 
perfect \mbox{matchings.}
\end{abstract}
\maketitle

\section{Introduction}

A \emph{decomposition}  of a graph is a set of subgraphs that partition 
its edge set. If all subgraphs are paths, then it is called a \emph{path decomposition}.
In~1966, Erd{\"{o}}s (see~\cite{Lovasz66}) raised the question: \emph{given $n > 0$, what is the minimum~$\lambda$ such that every connected graph $G$ on $n$ vertices admits a decomposition into $\lambda$ paths?}
Gallai conjectured that~$\lambda$ is at most $\lfloor (n+1)/2 \rfloor$.
Despite the best efforts~\cite{Fan05,JimenezW14+,Lovasz66,Pyber96}, the conjecture of Gallai remains widely open.
Over the years, Gallai's conjecture has been explored on particular classes of graphs.
In this paper, we focus on regular graphs. A seminal result of Lov\'asz~\cite{Lovasz66}
shows that all odd regular graphs satisfy the conjecture of Gallai.
In contrast, for even regular graphs not much is known. 
Indeed, in addition to classical results for complete graphs, it is known that 
the conjecture holds on 4-regular graphs~\cite{FKfav}.
We make progress within this direction. Let $k\geq 3$ an integer 
and $\mathcal{G}_{k}$ be the class of all $2k$-regular graphs 
of girth at least $2k-2$ that admit a pair of disjoint perfect matchings.
As usual, the \emph{girth} of a graph is the length of a shortest cycle.
We establish the following theorem.

\begin{theorem}\label{theo:submain}
Every $G\in \mathcal{G}_k$ on $n$ vertices has a decomposition into $n/2$ paths.
\end{theorem}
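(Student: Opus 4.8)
The plan is to construct exactly one path for each edge of $M_1$, using that edge as the centre of its path and growing two \emph{arms} from its endpoints out of the remaining edges of $G$. Since $|M_1| = n/2$ this produces the right number of paths, and because the total number of edges is $nk$, the paths must average length $2k$; in fact, once all lengths lie in $\{2k-1,2k,2k+1\}$ the counting identities $a+b+c=n/2$ and $(2k-1)a+2kb+(2k+1)c=nk$ force the number $a$ of short paths to equal the number $c$ of long ones, which is the content of the stronger statement in the abstract.

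To organise the edges globally, I would first note that $M_1\cup M_2$ is a disjoint union of even cycles, hence a $2$-factor of $G$, so $G-(M_1\cup M_2)$ is $(2k-2)$-regular and, by Petersen's theorem, decomposes into $k-1$ edge-disjoint $2$-factors $F_1,\dots,F_{k-1}$. Orienting every cycle of $M_2$ and of each $F_i$ cyclically equips each vertex with a well-defined out-edge in each of the $k$ layers $M_2,F_1,\dots,F_{k-1}$. The routing rule is then to grow each arm by repeatedly following the prescribed out-edges, one layer at a time, so that every non-matching edge is handed to a unique path; since there are $k$ layers, each arm has length at most $k$ and each path length at most $2k+1$. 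Verifying that this rule uses every edge exactly once is a bookkeeping argument about the chosen orientations, and is where the regular structure supplied by Petersen's theorem and the two perfect matchings is used.

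It remains to check that each walk thus produced is genuinely a path, and here the girth enters. A vertex repeated within a single arm would create a closed sub-walk of length at most $k$, which for $k\ge 3$ is strictly less than the girth $2k-2$ and hence impossible, so every arm is a path. The only remaining danger is a vertex $w$ shared by the two arms of a single path, say at distances $i$ and $j$ from the central edge; this yields a cycle of length $i+j+1$ through that edge, forcing $i+j+1\ge 2k-2$. Thus collisions can occur only when both arms are close to their maximal length $k$ --- which is unavoidable precisely because the paths must average length $2k$. Whenever following the next prescribed out-edge would close such a cycle, I would stop the arm one step early, producing a shorter path and rerouting the skipped edge into a neighbouring path; the $\pm1$ slack in the target lengths is exactly the room needed to absorb these truncations while keeping every length in $\{2k-1,2k,2k+1\}$.

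The main obstacle is to make a single routing rule meet two competing demands at once: the \emph{global} requirement that the arms, taken over all of $M_1$, partition the non-matching edges exactly, and the \emph{local} requirement that no two arms of the same path collide. The orientations of the $2$-factors give a transparent handle on the first, and the girth bound rules out all collisions except those between near-maximal arms; it is exactly these near-maximal arms, which the edge-count forces to be the generic case, that must be handled by hand, trading length against collision via the $\{2k-1,2k,2k+1\}$ flexibility. I expect this reconciliation --- and in particular showing that the truncations can always be balanced so that no length falls outside the allowed window --- to be the technical heart of the argument, and the place where both the precise girth $2k-2$ and the existence of two disjoint perfect matchings are indispensable.
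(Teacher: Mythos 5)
Your proposal has two genuine gaps, and the first already breaks the edge-partition you defer to ``bookkeeping.'' The layer $M_2$ is a perfect matching: it has no cycles, so it cannot be ``oriented cyclically,'' and each vertex meets exactly one $M_2$-edge. Consequently, under any rule in which every arm at some point crosses the $M_2$-edge at its current vertex, every edge $wx\in M_2$ is traversed twice --- once by the arm arriving at $w$ and once by the arm arriving at $x$. The count confirms this: your $n$ arms of length $k$ plus the $n/2$ central edges account for $nk+n/2$ edge-slots, while $|E(G)|=nk$, an overshoot of exactly $|M_2|$. Put differently, since the $n/2$ paths must average length $2k$, the arms must average length $k-\tfrac12$, so no uniform ``one edge per layer'' rule can possibly give a decomposition: each $M_2$-edge has to be assigned to exactly one of its two incident arms, and making those $n/2$ binary choices so that all lengths land in $\{2k-1,2k,2k+1\}$ and the paths stay simple is itself a global combinatorial problem, not an afterthought.

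The second gap is the one you flag yourself: collisions between the two arms of one path are to be fixed by ``stopping one step early and rerouting the skipped edge into a neighbouring path,'' but this is a restatement of the difficulty rather than an argument. The skipped edge is a prescribed out-edge in some layer $F_j$; handing it to another path can create a repeated vertex there, push that path's length outside the allowed window, or spoil the end-vertex balance needed for the counting, and such repairs cascade. This is exactly where the paper spends essentially all of its effort: it does not grow arms at all, but instead takes the decomposition of $G-M_1$ into paths of length $2k-1$ from Theorem~\ref{theo:for5} as a black box (that theorem already encapsulates the Petersen-factorization-plus-girth difficulties you are re-deriving), and then the machinery of Sections~\ref{sec:mainlemma} and~\ref{sec:theorem} --- trapped and malicious edges, the exceptional configuration of Definition~\ref{def:exceptional}, the case analysis of Lemma~\ref{lem:4}, and the cycle-elimination Lemmas~\ref{lemma:sccp}--\ref{lemma:2cycles} --- exists precisely to resolve the conflicts your truncate-and-reroute step waves at. Your counting observation (that $n/2$ paths with lengths in $\{2k-1,2k,2k+1\}$ force the numbers of paths of lengths $2k-1$ and $2k+1$ to coincide, and conversely) is correct and is indeed how Theorem~\ref{theo:submain} follows from Theorem~\ref{theo:main}, but the construction that the counting is supposed to certify is missing.
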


A closely related problem to the conjecture of Gallai that has drew great interest~\cite{BoMoWa14+,FavaronGK09,Kotzig57,th2013,ZhaiL06} 
is as follows: given a family of paths \(\mathcal{H}\), is there
a decomposition \(\mathcal{D}\) of $G$ such that 
each graph in \(\mathcal{D}\) is isomorphic to a graph in \(\mathcal{H}\)?
We give a step forward towards obtaining constrained 
path decompositions of even regular graphs.  
Actually, Theorem~\ref{theo:submain} follows from a stronger statement.
In this work, the \emph{length} of a graph refers to its number of edges.
The main contribution of this paper is the following statement.

\begin{theorem}\label{theo:main}
For every $G\in \mathcal{G}_k$ there exists a decomposition $\mathcal{D}$ into paths 
of lengths in $\{2k-1,2k,2k+1\}$. Further, in $\mathcal{D}$, the number of paths of length $2k-1$ 
is equal to the number of paths of length $2k+1$.
\end{theorem}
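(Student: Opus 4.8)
The plan is to exploit the hypothesis that $G$ admits two disjoint perfect matchings $M_1,M_2$ by setting $F=M_1\cup M_2$ and $H=G-F$. Since each vertex has degree one in $M_1$ and in $M_2$, the graph $F$ is a $2$-regular spanning subgraph, i.e.\ a $2$-factor, and each of its cycles alternates between $M_1$ and $M_2$ and is therefore even; thus $F$ is a union of even cycles and $H$ is $(2k-2)$-regular. I two-colour the vertices of each cycle of $F$ alternately, obtaining colour classes $S$ (black) and $W$ (white) with $|S|=|W|=n/2$, so that every edge of $F$ joins $S$ to $W$. Before constructing anything I record the arithmetic that governs the statement: a decomposition of $E(G)$ into paths has $nk=|E(G)|$ edges in total, and if it consists of $p$ paths with $a$, $b$, $c$ of them of length $2k-1$, $2k$, $2k+1$ respectively, then $a+b+c=p$ and $(2k-1)a+2kb+(2k+1)c=nk$, whence $c-a=2k(n/2-p)$. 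Hence $p=n/2$ is equivalent to $a=c$, and both are equivalent to the average path length being exactly $2k$. Consequently it suffices to produce a decomposition of $G$ into exactly $n/2$ paths, each of length in $\{2k-1,2k,2k+1\}$; the balancing clause $a=c$ of Theorem~\ref{theo:main} is then automatic.

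The construction is by a local pairing (transition system) of the edges at each vertex, designed so that each edge of $F$ is the first or last edge of exactly one path. At every black vertex I leave its two $F$-edges unpaired (these will be terminal edges) and pair its $2k-2$ edges of $H$ into $k-1$ pass-through pairs. At every white vertex I pair each of its two $F$-edges with an edge of $H$ and pair the remaining $2k-4$ edges of $H$ among themselves (possible since $k\ge 3$). Following the transitions produces a decomposition of $G$ whose pieces all have the shape ``$F$-edge, then a trail in $H$, then $F$-edge'': indeed an $F$-edge is terminal at its black endpoint, so it can only occur as the first or last edge of a piece, and with two $F$-edges terminal at each of the $n/2$ black vertices one gets exactly $n/2$ pieces. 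The two $F$-edges of a piece are attached at white vertices to the $H$-edges they were paired with, so the problem reduces to a constrained trail decomposition of $H$: partition $E(H)$ into $n/2$ trails whose endpoints occur exactly twice at each white vertex and not at all at each black vertex, and---to land in $\{2k-1,2k,2k+1\}$ after prepending and appending the two $F$-edges---each of length in $\{2k-3,2k-2,2k-1\}$. The average length $|E(H)|/(n/2)=2k-2$ is consistent with these bounds.

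To obtain such a trail decomposition I would use that $H$ is even-regular, so each of its components is Eulerian. I traverse an Euler tour of each component and cut it into consecutive segments, choosing the cut points so that (i) they fall at white vertices, with exactly two cuts charged to each white vertex, and (ii) consecutive cuts are $2k-3$, $2k-2$, or $2k-1$ apart, using the flexibility of $\pm1$ to reconcile the prescribed number of cuts per vertex with divisibility, introducing the lengths $2k-1$ and $2k-3$ only in balanced pairs so as not to disturb the average. Realising (i) and (ii) simultaneously is the first genuinely delicate point, and I expect it to require local exchange moves on the Euler tour (rerouting at a vertex, or swapping which incident edges are paired) rather than a one-shot greedy cut.

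The main obstacle, however, is guaranteeing that the pieces are genuine paths rather than merely trails. Here the girth hypothesis enters: a piece of length at most $2k+1$ that repeats a vertex must contain a cycle of length at most $2k+1$, and since $H\subseteq G$ has girth at least $2k-2$ such a cycle has length in $\{2k-2,2k-1,2k,2k+1\}$; the piece is then a cycle together with at most three extra edges. Segments of $H$ of length at most $2k-3<2k-2$ are automatically paths, so only the bodies of length $2k-2$ and $2k-1$ (giving paths of length $2k$ and $2k+1$) can fail, and only by almost coinciding with a shortest cycle. Ruling this out, while keeping the endpoint counts and length constraints intact, is where the bulk of the work lies: one must argue that whenever a segment closes up into, or nearly into, a girth-length cycle, a local exchange along the Euler tour---permitted because $H$ is $(2k-2)$-regular and $k\ge 3$ leaves spare incident edges---removes the repetition without creating a new one and without changing the length class. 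Assembling these pieces with their attached $F$-edges then yields the required $n/2$ paths of lengths in $\{2k-1,2k,2k+1\}$, and, as noted, the equality between the numbers of paths of length $2k-1$ and $2k+1$ follows from the edge count.
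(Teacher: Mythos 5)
Your setup is sound as far as it goes: taking $F=M_1\cup M_2$ (a $2$-factor of even cycles), $H=G-F$ a $(2k-2)$-regular Eulerian graph, and the arithmetic observation that for lengths in $\{2k-1,2k,2k+1\}$ having exactly $n/2$ paths is equivalent to the balancing condition $a=c$, are all correct (the paper uses the same counting identity, inside the proof of its Lemma~\ref{lem:lemaforgpaths}). But the proposal is a plan, not a proof: the two steps you yourself flag as ``delicate'' and ``where the bulk of the work lies'' are precisely the content of the theorem, and they are left unproven. First, the cutting step: you must cut the Euler tours of $H$ into $n/2$ segments of lengths in $\{2k-3,2k-2,2k-1\}$ whose endpoints land exactly twice at each white vertex and never at a black vertex. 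This is not even feasible component by component without further argument: the bipartition of $V(G)$ comes from the cycles of $F$, and nothing controls how its classes distribute over the connected components of $H$. A component $K$ of $H$ with $v_K$ vertices has $(k-1)v_K$ edges and needs exactly $w_K$ segments (one cut per white vertex), forcing $(2k-3)w_K\le (k-1)v_K\le (2k-1)w_K$; a component containing few (or no) white vertices violates this outright, and no choice of ``black/white'' per $F$-cycle is shown to avoid it. Second, the path-versus-trail step: your hope is that girth plus ``local exchange moves'' will remove repeated vertices without breaking the length and endpoint constraints. The paper's own proof is evidence that this is not a local matter: even starting from the much stronger foothold of Theorem~\ref{theo:for5} (a decomposition of $G-M$ into paths of length exactly $2k-1$), making the pieces into paths requires the long case analysis of Lemma~\ref{lem:4}, including genuinely exceptional configurations at $k=3$, and still unavoidably produces cycles of length $2k$ (Theorem~\ref{theo:dec6}), which must then be eliminated by a global minimal-counterexample argument over ``complete'' decompositions in Section~\ref{sec:theorem}. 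Note in particular that in your scheme a piece can close up into a $2k$-cycle (e.g.\ when the two $F$-edges attached to a body share their black endpoint); this is exactly the obstruction the paper devotes a whole section to, and it cannot be waved away by a length-class-preserving swap.

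It is also worth noting what your route gives up: the paper leans on the known result for $(2k-1)$-regular graphs of girth $\ge 2k-2$ with a perfect matching (Theorem~\ref{theo:for5}), so that all the ``hard'' combinatorics is concentrated in reattaching one matching. By discarding both matchings at once you forgo that theorem entirely and must rebuild length-constrained path structure in $H$ from nothing but Eulerian-ness and girth --- a strictly harder starting point. So the approach is genuinely different, but as written it has concrete gaps (component feasibility of the cut, and the trail-to-path repair, including $2k$-cycles) that are not minor technicalities; they are where the theorem actually lives.
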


In the same vein, Botler et al. (Theorem~\(4.2\)~of~\cite{BoMoWa14+} and Theorem~\(3.2\)~of~\cite{BoMoWa15}) 
guaranteed existence of length-constrained path decompositions for ($2k-1$)-regular graphs.

\begin{theorem}[Botler et al.~\cite{BoMoWa15,BoMoWa14+}]\label{theo:for5}
Let $k \geq 3$ and \(G\) be a \(2k-1\)-regular graph of girth at least $2k-2$ that admits a perfect matching. 
 	Then \(G\) has a decomposition into paths of length $2k-1$.
\end{theorem}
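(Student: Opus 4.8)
The plan is to give every path in the decomposition the same shape. A path of length $2k-1$ is odd, so it has a unique central edge, and I will build each path so that its central edge is a matching edge. Write $n=|V(G)|$ and set $M$ to be a perfect matching of $G$; since $G$ is $(2k-1)$-regular, the graph $H:=G-M$ is $2(k-1)$-regular. A path with central edge $uv\in M$ then has the form $A_u + uv + A_v$, where $A_u$ and $A_v$ are paths of length $k-1$ lying in $H$ and having $u$, respectively $v$, as an endpoint; I call these the \emph{arms}. The construction therefore reduces to decomposing $H$ into $n$ arms together with a rule that designates, for each vertex $w$, one arm having $w$ as an endpoint to be paired with the matching edge at $w$. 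The arithmetic matches: $H$ has $n(k-1)$ edges, so $n$ arms exhaust $E(H)$, each of the $n/2$ edges of $M$ serves as the centre of exactly one path, and $n/2$ paths of length $2k-1$ exhaust the $n(2k-1)/2$ edges of $G$.

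First I would produce the arms. Each connected component $H'$ of $H$ is $2(k-1)$-regular, hence Eulerian, and on $n'$ vertices it has $n'(k-1)$ edges; thus $k-1$ divides the edge count of every component and no componentwise divisibility obstruction arises. Fixing a closed Eulerian trail of $H'$ and cutting it into consecutive segments of length $k-1$ yields $n'$ arms, and here the girth hypothesis does the essential work: a closed walk of length at most $k-1$ would contain a cycle of length at most $k-1<2k-2$, which is impossible, so every length-$(k-1)$ segment is automatically a \emph{path} rather than merely a trail. It then remains to designate start vertices. Forming the bipartite incidence between the $n$ arms and the $n$ vertices (an arm is adjacent to its two endpoints), I need a perfect matching of this incidence, equivalently a system of distinct representatives assigning to each arm one endpoint so that every vertex is chosen exactly once. \textbf{This is the main obstacle.} The cut points of a single Eulerian trail are rigidly spaced $k-1$ apart, so to guarantee that every vertex occurs as an endpoint of some arm and that the incidence satisfies the Hall-type condition one must exploit the freedom in choosing the Eulerian trail and allow local re-splicing of adjacent segments at shared vertices. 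This is precisely the step that uses regularity and girth jointly, and where $k=3$ (arms are single cherries, so the requirement collapses to the existence of an Eulerian orientation of a $4$-regular graph) and $k\ge 4$ genuinely differ.

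Finally I would splice. For each $uv\in M$ let $A_u,A_v$ be the arms designated to $u,v$, and form the walk consisting of $A_u$ reversed, then $uv$, then $A_v$. Writing $A_u=u x_1\cdots x_{k-1}$ and $A_v=v y_1\cdots y_{k-1}$, any repeated vertex would yield a cycle through $uv$; a coincidence $x_i=y_j$ gives a cycle of length $i+j+1$, and girth at least $2k-2$ forces $i+j\ge 2k-3$, while a coincidence $x_i=v$ or $y_j=u$ would give a cycle of length at most $k\le 2k-3$ and is therefore excluded outright. Hence the only admissible collisions occur among the outermost pairs $(i,j)\in\{(k-1,k-1),(k-1,k-2),(k-2,k-1)\}$, i.e. between the last one or two vertices of the two arms, and I would eliminate them by a local tip exchange: swap the offending tips between two arms (or re-root one arm), which preserves all arm lengths and the start system while destroying the bad coincidence, with girth guaranteeing an available exchange partner. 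Performing this cleanup over all matching edges yields $n/2$ paths, each of length exactly $2k-1$, whose edge sets partition $E(G)$, proving the theorem. The two delicate points are the existence of the balanced arm system (the Hall condition) and the termination of the tip-exchange cleanup, and both rely on the bound $2k-2$ on the girth being essentially tight against the path length $2k-1$.
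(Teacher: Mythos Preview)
The paper does not prove this theorem; it is quoted from \cite{BoMoWa15,BoMoWa14+} and used as a black box to obtain the initial decomposition $\mathcal{P}$ of $G-M$. So there is no ``paper's own proof'' to compare against, and your proposal must stand on its own.

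It does not. You correctly identify the two load-bearing steps and then do not carry either of them out. First, the arm system: cutting an Eulerian circuit of the $2(k-1)$-regular graph $H$ into consecutive length-$(k-1)$ segments gives $n$ paths, but a vertex $v$ of $H$ is visited $k-1$ times along the circuit and there is no reason any of those visits lands at a cut point; $v$ may be an endpoint of $0$ arms, in which case no system of distinct representatives exists at all, let alone one satisfying Hall's condition. Saying that ``local re-splicing'' and ``freedom in choosing the Eulerian trail'' fix this is not an argument---you would need to prove that an Eulerian circuit with cut points can always be chosen so that every vertex is a cut point (equivalently, that $H$ admits a decomposition into $(k-1)$-paths with each vertex an endpoint of exactly two of them), and that is essentially the whole difficulty of the theorem. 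Second, the tip-exchange cleanup: you show that collisions in $A_u+uv+A_v$ can only involve the last one or two vertices of each arm, which is correct, but ``swap the offending tips between two arms'' is not a procedure. You have not specified which other arm to swap with, why such an arm exists, why the swap preserves the arm decomposition and the start system, or why iterating terminates rather than cycling. Both gaps are exactly the places where the cited papers do substantial work, and neither is a formality.
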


This result is the outset of the proof of Theorem~\ref{theo:main}.

 \subsection{Organization of the paper}
The proof of Theorem~\ref{theo:main} consists of two main steps. In the
first step, developed in Section~\ref{sec:mainlemma}, we prove that one can decompose
graphs in $\mathcal{G}_k$ into paths of lengths in $\{2k-1,2k,2k+1\}$
and cycles of length $2k$.
In the second step, developed in Section~\ref{sec:theorem},
we show how to turn the decompositions obtained in the first step into the desired ones.
The last section contains the proof of a technical lemma (Lemma~\ref{lem:4}) on which our results
are based on.

 %

%

\section{Decomposing into paths and cycles of lengths in $\{2k-1,2k,2k+1\}$}\label{sec:mainlemma}

Let $N\subset \mathbb{N}$. We refer to a decomposition into paths and cycles as 
 an \emph{$N$-decomposition} if all its paths and cycles have their lengths in $N$.
The following definition introduces a type of decomposition that is fundamental to prove the existence
of $\{2k-1,2k,2k+1\}$-decompositions into paths only.

\begin{definition}\label{def:balanced}
Let $G$ be a graph and $k \geq 3$.
Let $v$ be a vertex of $G$ and $\mathcal{D}$ be a $\{2k-1,2k,2k+1\}$-decomposition of $G$.
We say that $\mathcal{D}$ is \emph{$k$-balanced} (or simply \emph{balanced}) at $v$ if 
the number of paths of $\mathcal{D}$ of 
length at most $2k$ with end vertex $v$ is at least 
the number of paths of $\mathcal{D}$ of 
length at least $2k+1$ with end vertex $v$.
Further, we say that $\mathcal{D}$ is a \emph{$k$-balanced decomposition} of $G$ 
if it is \(k\)-balanced at every vertex of $G$
and the number of paths of length $2k-1$ is equal to the number of paths of length $2k+1$.
\end{definition}

Recall that $\mathcal{G}_{k}$ denotes the class of all $2k$-regular graphs 
of girth at least $2k-2$ that admit a pair of disjoint perfect matchings.
The following is the main result of this section.

\begin{theorem}\label{theo:dec6} 
For every graph in $\mathcal{G}_{k}$ there is a $k$-balanced decomposition $\mathcal{D}$.
Moreover, the cycles in $\mathcal{D}$ have length $2k$.
\end{theorem}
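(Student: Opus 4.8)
The plan is to reduce to Theorem~\ref{theo:for5} by deleting one perfect matching, and then to reincorporate the deleted edges while controlling lengths and maintaining balance. Let $G\in\mathcal{G}_k$ have $n$ vertices and fix a pair of disjoint perfect matchings $M_1,M_2$. First I would set $G'=G-M_1$. Then $G'$ is $(2k-1)$-regular, has girth at least that of $G$ and hence at least $2k-2$, and admits the perfect matching $M_2$ (disjoint from $M_1$). Theorem~\ref{theo:for5} therefore yields a decomposition $\mathcal P$ of $G'$ into paths of length $2k-1$. Since $|E(G')|=n(2k-1)/2$, the decomposition $\mathcal P$ has exactly $n/2$ paths, so the endpoints of its paths account for $n$ incidences. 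As $\deg_{G'}(v)=2k-1$ is odd, every vertex is the endpoint of an odd, hence positive, number of paths of $\mathcal P$; comparing the total $n$ with the $n$ vertices forces \emph{each vertex to be the endpoint of exactly one path of $\mathcal P$}. This is the structural fact that drives the rest of the argument.

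Next I would reintroduce $M_1$. Pairing the two endpoints of each path of $\mathcal P$ defines a perfect matching $M_{\mathrm{end}}$ of $G$, and the superposition $M_1\cup M_{\mathrm{end}}$ is a disjoint union of cycles alternating between the two matchings (a matching edge that already joins the two endpoints of one path counts as a length-two alternating cycle). Equivalently, contracting each path of $\mathcal P$ to a single super-vertex turns $M_1$ into a $2$-regular multigraph whose components are exactly these alternating cycles. Along one such cycle, say through paths $P_1,\dots,P_t$ consecutively linked by edges of $M_1$, the paths and the matching edges concatenate into a closed trail $C$ of length $t(2k-1)+t=2kt$. These closed trails are edge-disjoint and together use every edge of $G$, so it suffices to produce, inside each $C$, a $k$-balanced $\{2k-1,2k,2k+1\}$-decomposition whose only cycles have length $2k$.

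For a single matching edge closing a path into itself ($t=1$) the trail $C$ is already a cycle of length $2k$ and there is nothing to do. For $t\geq 2$ the natural attempt is to cut $C$ into $t$ consecutive segments of length $2k$: such a decomposition uses no path of length $2k-1$ or $2k+1$ at all, so it is balanced both globally and at every vertex for free. The difficulty is that $C$ need not be a simple cycle---its constituent paths may share vertices---so a length-$2k$ segment could revisit a vertex and fail to be a path. Here the girth hypothesis is essential: any repetition inside a short segment would force a cycle of length at most $2k$, and girth at least $2k-2$ confines where this can occur. When a naive cut is blocked, I would shift the cut by one edge, trading a $2k$-segment for an adjacent pair of lengths $2k-1$ and $2k+1$; this preserves the global count (one extra short path for one extra long path) and is designed to keep the local balance inequality of Definition~\ref{def:balanced}. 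I expect the main obstacle, and the content encapsulated by Lemma~\ref{lem:4}, to be exactly this: converting an arbitrary closed trail of length $2kt$ arising from the superposition into simple pieces of lengths in $\{2k-1,2k,2k+1\}$ together with $2k$-cycles, while certifying $k$-balancedness at each affected endpoint---a verification that is purely local once the girth condition rules out short cycles crossing a cut.
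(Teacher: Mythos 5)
Your first half coincides with the paper's own reduction: delete one matching $M$, apply Theorem~\ref{theo:for5} to $G-M$, observe that every vertex is the end vertex of exactly one path of $\mathcal{P}$ (your parity count is correct, and the paper leaves it implicit), and partition $E(G)$ into $(M,\mathcal{P})$-alternating closed trails of length $2kt$; the trails with $t=1$ are precisely the $2k$-cycles allowed in the statement. Up to this point your argument is sound and is the same as the paper's. Your handling of the untrapped situation is also essentially the paper's: when for each $i$ at least one of $P_ie_i$, $e_iP_{i+1}$ is a path, one cuts the trail so that every length-$(2k+1)$ piece has both end edges in $M$, and balancedness follows because each vertex lies in at most one matching edge.

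The gap is in the remaining case, and it is not a matter of missing routine detail: the repair mechanism you propose provably does not suffice. You only consider decompositions of the closed trail $C$ into \emph{contiguous} segments, fixed by shifting a cut by one edge (e.g.\ replacing $P_ie_i$, $P_{i+1}e_{i+1}$ by $P_i$ and $e_iP_{i+1}e_{i+1}$). This fails at \emph{trapped} edges (Definition~\ref{def:trapped}): the girth hypothesis does \emph{not} prevent the end of $P_{i+1}$ incident to $e_i$ from being an internal vertex of $P_i$ while, simultaneously, the end of $P_i$ incident to $e_i$ is an internal vertex of $P_{i+1}$ (Claim~\ref{lem:3} and Figure~\ref{fig:pepconfig} catalogue exactly these configurations). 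For such an $e_i$, neither $P_ie_i$ nor $e_iP_{i+1}$ is a path, so no shift of cuts at the path ends helps: the obstruction is a repeated vertex in the interior of the segment, not at a cut point. The paper's Lemma~\ref{lem:4} resolves trapped edges with decompositions whose pieces are \emph{not} sub-trails of $C$ at all; typically $e_i$ is used as an interior edge and an end edge is exchanged between $P_i$ and $P_{i+1}$ (e.g.\ the piece $(P-x\tilde{z})\cup e\cup e^*$ in its proof), so the edges are genuinely recombined rather than cut. Worse, your expectation that the verification is ``purely local'' is false for $k=3$: the exceptional configuration of Figure~\ref{fig:badcase} is a trapped configuration admitting no quasi-balanced decomposition into two paths, so no local repair of any kind exists there. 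The paper must prove the structural statement Lemma~\ref{lem:4}\emph{(ii)} — exceptional configurations are never consecutive along the trail — and then absorb each one by grouping \emph{three} consecutive paths of $\mathcal{P}$ (Claim~\ref{niceext}), besides organizing the whole induction around maximal trapped sequences and quasi-$k$-balanced decompositions to keep control of balancedness at the seams. All of this, which is the actual content of the paper's proof, is absent from your proposal.
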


Throughout this section, $G \in \mathcal{G}_{k}$ is a graph
on $n$ vertices and $M$ denotes one of its two disjoint perfect matchings.
The starting point of the proof of Theorem~\ref{theo:dec6} is the following.
By removing all edges of $M$ from $G$, we obtain 
a $(2k-1)$-regular graph $G'$ with girth at least $2k-2$ provided with a perfect matching.
Due to Theorem~\ref{theo:for5}, 
there exists a decomposition $\mathcal{P}$ of~$G'$ into paths of length $2k-1$.
The rest of the proof of Theorem~\ref{theo:dec6} relies on our ability to extend $\mathcal{P}$ to 
a $k$-balanced
decomposition of $G$. Indeed, Lemma~\ref{lem:4} characterizes, and shows a way to overcome, the obstructions 
to such extension. 
Before we present the proof of Theorem~\ref{theo:dec6}, we formulate this key lemma.

\subsection{A toolbox} \label{sub:toolbox}

A sequence of vertices and edges 
$W:=v_1 e_1 v_2 \cdots  v_t e_{t} v_{t+1}$ 
is called a \emph{trail} if
\(e_i=v_iv_{i+1} \in E(W)\)
for each \(i\in\{1,\ldots,t\}\), and $e_i \neq e_j$ if $i \neq j$. If $v_1 =v_{t+1}$, then $W$ is a \emph{closed trail}. 
A closed trail $W$ is called an \emph{$(M,\mathcal{P})$-alternating $r$-closed trail} 
(or simply an \emph{alternating closed trail})
if $W$ is the union of $r$ 
distinct paths $P_1,  \ldots, P_{r}$ of $\mathcal{P}$ and $r$ distinct matching edges $e_1,  \ldots, e_{r}$ 
of $M$, for some $r \geq 1$,  such that, for each $i \in \mathbb{Z}_r$
the matching edge $e_i=uv$ is such that $u$ is an end 
vertex of $P_{i}$ and $v$ is an end vertex of $P_{i+1}$.
We write $W$ as $P_0 e_0 \cdots P_{r-1} e_{r-1} P_0$.
Note that a \(1\)-closed trail is a cycle of length~$2k$. 
A subsequence of $W$ is called an \emph{$(M,\mathcal{P})$-alternating trail} 
(or simply an \emph{alternating trail}).

The following definition presents the structure of the main obstruction.
\begin{definition} \label{def:exceptional} Let $G \in \mathcal{G}_{3}$.
	We say that a subgraph of $G$ is \emph{exceptional}
	if it isomorphic to an $(M,\mathcal{P})$-alternating trail $PeP'e'$
	where \(P = x_0\cdots x_5\), \(P' = y_0\cdots y_5\),
	and \(e = x_5y_0\) such that
	 \(e' =  y_5y_2\),  \(y_0 = x_1\), \(y_4 = x_5\) and \(y_2 = x_3\)
	(see Figure~\ref{fig:badcase}).  
\end{definition}

\begin{SCfigure}[5][h]
 \hspace*{1.5cm}
 \begin{tikzpicture}[scale = .7]

	\node (0) [tiny black vertex] at (0,0) {};
	\node (a) [tiny black vertex] at (120:1cm) {};
	\node (b) [tiny black vertex] at (160:1cm) {};
	\node (c) [tiny black vertex] at (200:1cm) {};
	\node (d) [tiny black vertex] at (240:1cm) {};
	\node (1) [tiny black vertex] at (140:2cm) {};
	\node (2) [tiny black vertex] at (220:2cm) {};
	\node (x) [tiny black vertex] at (140:3cm) {};
	\node (y) [tiny black vertex] at (270:2cm) {};	

	\draw[line width=1pt, color=black] (x) -- (1) -- (b) -- (0) -- (c) -- (2);
	\draw[line width=1pt,color=red] (1) -- (a) -- (0) -- (d) -- (2) -- (y);
	
	\draw[line width=1pt,dashed] (1) -- (2) (y) -- (0);
	
\end{tikzpicture}\hspace*{0.3cm}\vspace*{-.6cm}
  \caption{Exceptional graph: 
  $P$ is the black path, $P'$ is the red path and the dashed edges represent the matching edges.}
\label{fig:badcase}
\end{SCfigure}
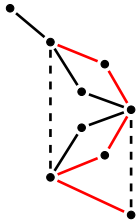
\vspace*{.3cm}

  \bigskip
The notation $s(j, l)$ stands for the $(M,\mathcal{P})$-alternating trail $P_{j} e_{j} \cdots P_{j+l} e_{j+l} P_{j+l+1}$, which embraces the matching edges \(e_j,\ldots,e_{j+l}\). 
In the case that $P_j = P_{j+l+1}$, we say that $s(j, l)$ is cyclic.
If $s(j, l)$ is non-cyclic, 
we refer to the alternating trails 
$$s[j,l) = e_{j-1}s(j, l), \quad s(j,l] = s(j, l)e_{j+l+1}, \quad s[j,l] = e_{j-1}s(j, l)e_{j+l+1},$$

\noindent
and to $s(j, l)$ itself, as \emph{extensions} of $s(j, l)$.

\begin{observation}\label{obs:cyc-exc}
Because graphs in $\mathcal{G}_{3}$ are triangle-free,
if the $(M,\mathcal{P})$-alternating trail $s(j, l)$ is cyclic and contains an exceptional
subgraph, then $l \geq 2$.
\end{observation}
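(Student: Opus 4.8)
The plan is to read the bound directly off the definitions, the key point being that the second matching edge $e'$ of an exceptional subgraph is forced to run into a \emph{third} path of the trail. Throughout I would write the exceptional subgraph as the alternating trail $PeP'e'$ of Definition~\ref{def:exceptional}, with $P=x_0\cdots x_5$, $P'=y_0\cdots y_5$, $e=x_5y_0$, $e'=y_5y_2$, and the coincidences $y_0=x_1$, $y_2=x_3$, $y_4=x_5$; note in particular that $y_2=x_3$ is an interior vertex of both $P$ and $P'$.

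First I would locate $e$ and $e'$ inside the closed trail. Since $s(j,l)$ is cyclic it is the whole alternating closed trail, so its only matching edges are $e_j,\dots,e_{j+l}$ and its paths are exactly the $l+1$ distinct paths $P_j,\dots,P_{j+l}$. As $e,e'\in M$ lie in $s(j,l)$, each is one of these matching edges, and by the defining property of an alternating closed trail each joins the \emph{ends} of two cyclically consecutive paths. The edge $e$ has ends $x_5$ (an end of $P$) and $y_0$ (an end of $P'$), so $P$ and $P'$ are consecutive, say $P=P_{m-1}$, $e=e_{m-1}$, $P'=P_m$; the edge $e'=e_m$ then has the end $y_5$ of $P'=P_m$ as one endpoint, so its other endpoint $y_2=x_3$ must be an end of the following path $Q:=P_{m+1}$.

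The heart of the argument is that $Q$ is genuinely a third path. Here I would invoke the stated hypothesis that $G\in\mathcal{G}_3$ is triangle-free to guarantee that the configuration embeds non-degenerately, i.e. that the nine vertices $x_0,\dots,x_5,y_1,y_3,y_5$ are pairwise distinct (for instance $x_0=y_5$ would, together with the edge $e=x_1x_5$, close the triangle $x_0x_1x_5$). Consequently $x_3$ is a true interior vertex of both $P$ and $P'$, and since the paths of $\mathcal{P}$ are simple it is an end of neither; hence $Q\notin\{P,P'\}$. Thus $s(j,l)$ contains the three pairwise distinct paths $P$, $P'$ and $Q$, and as a cyclic $s(j,l)$ uses exactly $l+1$ distinct paths we conclude $l+1\ge 3$, that is $l\ge 2$. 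I expect the only delicate point to be this non-degeneracy step, where triangle-freeness is used to rule out accidental vertex identifications; once $x_3$ is certified interior, everything else is bookkeeping against the definition of an alternating closed trail.
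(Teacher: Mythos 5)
Your proof is correct, and it actually supplies more than the paper does: Observation~\ref{obs:cyc-exc} is stated without any proof, its only justification being the parenthetical appeal to triangle-freeness, so your write-up fills a genuine gap. Your skeleton is the right way to read \(l\geq 2\) off the definitions: the exceptional trail's paths and matching edges must be among those of the cyclic trail (edge-disjointness of \(\mathcal{P}\), and path edges avoid \(M\)); the edge \(e\) forces \(P\) and \(P'\) to be cyclically consecutive; \(e'\) is then forced to be the next matching edge \(e_m\) (this identification deserves one more line than you give it: \(e_{m-1}\) and \(e_m\) are disjoint members of the matching \(M\), so \(e_m\) is the unique matching edge of the trail incident to \(y_5\), whence \(e'=e_m\)); and the far endpoint \(y_2=x_3\) of \(e_m\) must be an end vertex of a third path \(Q=P_{m+1}\), giving \(l+1\geq 3\). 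The one substantive remark is that your ``heart of the argument'' overshoots, and the hypothesis you invoke there is in fact dispensable: to conclude \(Q\notin\{P,P'\}\) you only need that \(y_2=x_3\) is an end vertex of neither \(P\) nor \(P'\), and this is immediate from the simplicity of the paths of \(\mathcal{P}\) (\(x_3\neq x_0,x_5\) since \(P\) is a path, and \(y_2\neq y_0,y_5\) since \(P'\) is a path). The pairwise distinctness of all nine vertices --- which is the only place triangle-freeness enters your argument, e.g.\ to exclude \(x_0=y_5\) --- is never used afterwards. So your proof would survive verbatim with that paragraph replaced by this two-line simplicity remark, and it thereby shows that the triangle-freeness highlighted both by you and by the paper's own phrasing is not actually needed for this observation; what does the work is simplicity of the paths, disjointness of the matching edges, and the defining property of alternating closed trails.
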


The following definition captures the notion of balanced decompositions in extensions.

\begin{definition}[Quasi-balanced decompositions]
        Let $X$ be an extension of $s(j,l)$. 
        We refer to a $\{2k-1,2k,2k+1\}$-decomposition \(\mathcal{D}\) of $X$ 
        as \emph{quasi-$k$-balanced} (or simply, quasi-balanced)
        if \(\mathcal{D}\) is $k$-balanced at all vertices of \(X\),
	except possibly, at the vertices of odd degree of $X$
	incident to an edge of \(\{e_{j-1},e_{j+l+1}\}\).
	In addition, if \(\mathcal{D}\) is not $k$-balanced at 
	$v \in e$ for  $e \in \{e_{j-1},e_{j+l+1}\}$, then 
	$e \in X$.
\end{definition}

For a set $S$, the notation $|S|$ stands for the cardinality of $S$.
The next observation follows.

\begin{observation}\label{obs:quasi}
Let $X_1, \ldots, X_l$ be a partition of the edges of an $(M,\mathcal{P})$-alternating trail $s$  
(not necessarily closed) into extensions
and $\mathcal{D}_1, \ldots, \mathcal{D}_l$
be a set of quasi-\(k\)-balanced decompositions of $X_1, \ldots, X_l$, respectively.
Then, $\mathcal{D}=\cup_{i=1,\ldots,l} \mathcal{D}_i$ is a quasi-\(k\)-balanced decomposition of $s$.
Further, if for each $i \in \{1,\ldots,l\}$, $|\mathcal{D}_i|$ equals the number of paths of $\mathcal{P}$ in $X_i$
and $s$ is an alternating closed trail, then $\mathcal{D}$ is a balanced
decomposition of $s$.
\end{observation}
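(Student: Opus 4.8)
The plan is to separate the easy structural claim from the balance bookkeeping. That $\mathcal D=\bigcup_i\mathcal D_i$ is a decomposition of $s$ is immediate: since $X_1,\dots,X_l$ partition $E(s)$ and each $\mathcal D_i$ partitions $E(X_i)$ into paths and cycles of length in $\{2k-1,2k,2k+1\}$, the union partitions $E(s)$ into such paths and cycles. So the real content is the balance. For a decomposition $\mathcal F$ and a vertex $v$, let $\beta(\mathcal F,v)$ denote the number of paths of $\mathcal F$ of length at most $2k$ ending at $v$ minus the number of paths of length $2k+1$ ending at $v$, so that $\mathcal F$ is $k$-balanced at $v$ exactly when $\beta(\mathcal F,v)\ge 0$. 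Since every path of $\mathcal D$ belongs to a unique $\mathcal D_i$, I get $\beta(\mathcal D,v)=\sum_{i}\beta(\mathcal D_i,v)$ at each vertex $v$, the sum ranging over those $i$ with $v\in X_i$. The goal becomes to show $\beta(\mathcal D,v)\ge 0$ at every vertex except the odd-degree vertices of $s$ incident to a boundary edge of $s$, and to check that any surviving imbalance indeed sits on such a vertex together with its boundary edge.

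First I would dispose of the summands that cannot be negative: whenever $v$ is not an odd-degree endpoint of a boundary edge of $X_i$ contained in $X_i$, quasi-balance gives $\beta(\mathcal D_i,v)\ge 0$ outright, so only these ``bad incidences'' can contribute a deficit. Each bad incidence is attached to a boundary matching edge $e\in X_i$ with $v$ an odd-degree endpoint of $e$ in $X_i$, and I would split these edges into the genuine boundary edges of $s$ (its first and last matching edges) and the internal cut edges shared by two consecutive extensions. The genuine boundary case needs no compensation: the resulting imbalance lives precisely at an odd-degree vertex of $s$ incident to a boundary edge $e\in E(s)$, which is exactly what quasi-balance of $s$ permits (and the ``in addition'' clause is inherited since $e\in X_i\subseteq E(s)$).

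The heart of the argument is the internal cuts. Consider a cut at a matching edge $e_c=ab$ between consecutive extensions, with $e_c$ lying in $X_i$ while the incident path $P_c$ (with endpoint $a$) lies in the neighbour $X_{i'}$. Then $a$ is incident to $e_c$ in $X_i$ and may be imbalanced there, contributing a deficit, whereas $e_c$ is the boundary matching edge immediately following $X_{i'}$ and is \emph{not} contained in $X_{i'}$; by the ``in addition'' clause this forces $\mathcal D_{i'}$ to be $k$-balanced at $a$. When $s$ does not revisit $a$, both extensions see $a$ with degree one, so each contributes one path-end at $a$: the forced balance in $X_{i'}$ makes its path short (a credit $+1$), while the worst case in $X_i$ is a long path (a debit $-1$), and the two cancel, giving $\beta(\mathcal D,a)\ge 0$. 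The main obstacle is to make this cancellation rigorous when $s$ revisits $a$, so that $a$ is a bad vertex of several extensions and carries several incident cut edges at once; then one must pair each deficit with the credit forced on the opposite side of the \emph{same} cut edge. I expect this to reduce to the fact that each internal cut edge contributes one outer path-end (a deficit of at most $1$) and, across that edge, one forced-balanced path-end of length at most $2k$ (a credit of $1$); assembling these pairings over all cut edges incident to $a$ is the single delicate point.

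For the ``Further'' statement I would first note that a closed $(M,\mathcal P)$-alternating trail has every vertex of even degree and no boundary edge, so the exceptional set is empty and the quasi-balance just proved is full $k$-balance everywhere. It then remains to match the path counts. Writing $p_i$ and $m_i$ for the numbers of paths of $\mathcal P$ and of matching edges in $X_i$, one has $|E(X_i)|=(2k-1)p_i+m_i$; writing each length as $2k+\delta$ with $\delta\in\{-1,0,+1\}$ and using that the cycles have length $2k$, one also has $|E(X_i)|=2k\,|\mathcal D_i|+(n^+_i-n^-_i)$, where $n^+_i,n^-_i$ count the paths of $\mathcal D_i$ of length $2k+1$ and $2k-1$. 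The hypothesis $|\mathcal D_i|=p_i$ yields $n^+_i-n^-_i=m_i-p_i$, and summing over $i$ gives $n^+-n^-=M_s-P_s$, the total numbers of matching edges minus $\mathcal P$-paths in $s$. Finally, a closed trail $P_0e_0\cdots P_{r-1}e_{r-1}P_0$ has equally many paths and matching edges, so $M_s=P_s$, whence $n^+=n^-$ and $\mathcal D$ is a balanced decomposition of $s$.
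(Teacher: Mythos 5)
The paper offers no argument for this statement at all (it is introduced with ``The next observation follows''), so your attempt is a genuine reconstruction, and its architecture is the natural one: additive imbalance bookkeeping via $\beta$, classification of where deficits can occur, pairing each internal cut against a forced credit across it, and an edge-counting argument for the ``Further'' clause. Two parts of your write-up are fully correct and worth keeping: the counting proof that $n^+=n^-$ for closed trails, and the credit mechanism, which in fact survives revisits by parity --- in the extension $X_{i'}$ containing the $\mathcal{P}$-path ending at $a$, the degree of $a$ is $1$ plus twice the number of passing paths, hence odd, so forced balance gives $\beta(\mathcal{D}_{i'},a)\geq 1$, not merely $\geq 0$. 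Also, you worry about the wrong difficulty: since $M$ is a matching, $a$ lies in exactly one matching edge, so $a$ is incident to at most one cut edge and at most one $\mathcal{D}_i$ can be imbalanced at $a$; the scenario of ``several incident cut edges at once'' never occurs.

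The genuine gap is on the deficit side. Nothing in the definition of quasi-balanced decompositions bounds the size of the imbalance at an exceptional vertex; it only says balance may fail there. In the extension $X_i$ that contains the cut edge $e_c=ab$ as a boundary edge, the vertex $a$ has degree $1+2t$ in $X_i$, where $t$ counts the $\mathcal{P}$-paths of $X_i$ passing through $a$; if $t\geq 1$, the definition permits $\mathcal{D}_i$ to have three (or more) paths of length $2k+1$ ending at $a$, i.e.\ $\beta(\mathcal{D}_i,a)\leq -3$, which the single $+1$ credit from across the cut cannot cancel. So your asserted reduction, ``each internal cut edge contributes one outer path-end (a deficit of at most $1$)'', is precisely the step that is unproved --- and unprovable from the stated definitions alone. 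To close it one must strengthen quasi-balance to require $\beta\geq -1$ at exceptional vertices (equivalently, at most one excess long path there); this holds for every decomposition the paper actually produces --- e.g.\ the two-path decompositions of Lemma~\ref{lem:4} satisfy it automatically, since at an odd-degree vertex an odd number of at most two paths end, hence exactly one --- and it is preserved under your gluing (at any vertex the sum is one deficit $\geq -1$, plus one forced credit $\geq +1$ whenever that deficit sits at a vertex of even degree in $s$ or at an internal cut, plus nonnegative terms). Finally, note a sub-case you skipped: an endpoint of a genuine boundary edge of $s$ may have \emph{even} degree in $s$ (when the $\mathcal{P}$-path ending there lies in a different extension than the boundary edge); such an imbalance is not excused by quasi-balance of $s$ and must be cancelled by the same credit mechanism, not waved through as ``permitted.''
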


The toolbox of this paper is a result (Lemma~\ref{lem:4}) which claims that each alternating
trail containing 2 paths from $\mathcal{P}$ has a quasi-$k$-balanced decomposition into 2 paths,
unless it contains an exceptional subgraph (in particular, $k=3$). 
Moreover, Lemma~\ref{lem:4} shows that exceptional subgraphs are disjoint and are not subsequent,
which allows us to overcome them. 
The proof of Theorem~\ref{theo:dec6} relies on this result; specifically, 
it is used to prove Claim~\ref{niceext}.

\begin{lemma}\label{lem:4}
Let $e_{i-1}P_i \cdots P_{i+2} e_{i+2}$ be an $(M,\mathcal{P})$-alternating trail
and $X$ be an extension of $P_ie_iP_{i+1}$. 
The following two statements hold.
\begin{itemize}
 \item[\emph{(i)}] If $X$ does not contain an exceptional subgraph, then $X$ has a 
 quasi-$k$-balanced decomposition into 2 paths. 
 \item[\emph{(ii)}] If $X$ is exceptional (in particular, $k=3$),
 then $e_{i-1} \neq e_{i+1}$.
 If, in addition, $X = P_ie_iP_{i+1}e_{i+1}$ (analogously for $e_{i-1}P_ie_iP_{i+1}$), 
 then neither $e_iP_{i+1}e_{i+1}P_{i+2}$, nor $P_{i+1}e_{i+1}P_{i+2}e_{i+2}$ is exceptional. 
\end{itemize}
\end{lemma}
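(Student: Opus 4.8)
The plan is to make everything local: fix coordinates $P_i=x_0\cdots x_{2k-1}$ and $P_{i+1}=y_0\cdots y_{2k-1}$ with $e_i=x_{2k-1}y_0$, so that a present boundary edge $e_{i-1}$ (resp.\ $e_{i+1}$) is incident to the free end $x_0$ (resp.\ $y_{2k-1}$). The number of edges of $X$ is $4k-1$, $4k$, or $4k+1$ according to whether $X=s(i,0)$, $X\in\{s[i,0),s(i,0]\}$, or $X=s[i,0]$, and the only partitions of these numbers into two parts from $\{2k-1,2k,2k+1\}$ are $\{2k-1,2k\}$ for $4k-1$, $\{2k,2k\}$ or $\{2k-1,2k+1\}$ for $4k$, and $\{2k,2k+1\}$ for $4k+1$. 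Two reductions will carry the balance bookkeeping. First, at a vertex where no path of length $2k+1$ ends the balance inequality holds vacuously, so any decomposition of $X$ into two paths of lengths at most $2k$ is $k$-balanced at every vertex; thus balance is only an issue when a length-$(2k+1)$ path is forced, and there it suffices to orient that long path so that its exposed end is one of the two exempt boundary vertices incident to $e_{i-1}$ or $e_{i+1}$. Second, a vertex of degree $d$ must lie on at least $\lceil d/2\rceil$ of the paths of any path decomposition, so a vertex of degree at least $5$ already forbids a decomposition of $X$ into two paths; I will see that this is the entire source of the obstruction.

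For statement (i) I would run a case analysis on how $P_i$ and $P_{i+1}$ meet, governed by the girth bound $2k-2$. If they are vertex-disjoint then $X$ is a single path and I merely cut it at the vertex realising the required two lengths, putting the cut at $x_{2k-1}$ or $y_0$ and, when a long part is unavoidable, letting it run to a boundary vertex. When $P_i$ and $P_{i+1}$ share vertices, each shared vertex has degree at least $4$ in $X$; since $P_i\cup P_{i+1}$ has only $4k-2$ edges and any two shared vertices together with the two connecting subpaths close a cycle of length at least $2k-2$, the shared vertices are few and widely spaced. Provided every vertex of $X$ has degree at most $4$, I would then pair the edges at each degree-$4$ vertex and choose a single cut, using the matching edge $e_i$ to switch tracks at an intersection, so that $X$ breaks into two genuine paths of lengths in $\{2k-1,2k,2k+1\}$. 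The main obstacle is exactly this enumeration: one must verify not only that the two lengths add up correctly but that each reroute produces a \emph{path} (no repeated vertex) rather than a shorter closed trail with an unused remainder, and that the resulting long path, when present, can be oriented toward a boundary vertex; this is where the girth is used most heavily.

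The lone configuration that escapes every admissible cut is the one in which $X$ acquires a vertex of degree $5$. By the incidence analysis this requires a vertex $w$ that is interior to both $P_i$ and $P_{i+1}$ and is simultaneously met by a boundary matching edge; concretely $w$ is the far endpoint of $e_{i+1}$ (or of $e_{i-1}$), so that edge joins the free end $y_{2k-1}$ to a doubly-interior vertex. Feeding this into the girth constraint forces the two subpaths between consecutive shared vertices to be as short as $2k-2$ permits; solving the resulting identifications pins $X$ down to $P=x_0\cdots x_5$, $P'=y_0\cdots y_5$, $e=x_5y_0$, $e'=y_5y_2$ with $y_0=x_1$, $y_4=x_5$, $y_2=x_3$, and in particular to $k=3$, which is precisely the exceptional graph of Figure~\ref{fig:badcase}; for $k\ge 4$ the required short cycles cannot occur, so no degree-$5$ vertex arises and (i) always applies.

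For statement (ii) I would argue from this rigid structure. The edge $e_{i+1}=e'=y_5y_2$ is incident to the doubly-interior vertex $y_2=x_3$ and to $y_5$, whereas $e_{i-1}$ is incident to the free end $x_0$; since $x_0\notin\{x_3,y_5\}$ in the exceptional graph, the two boundary edges share no endpoint, so $e_{i-1}\neq e_{i+1}$, which in particular prevents the exceptional graph from living inside a $2$-closed trail. For the non-subsequence claim I would assume, toward a contradiction, that a second exceptional subgraph overlaps the first along $P_{i+1}$ together with $e_{i+1}$, and then match the two rigid templates: the shared edge $e_{i+1}$ would be forced to meet $P_{i+1}$ at the position prescribed by the template for an inter-path or chord edge, whereas it in fact meets $P_{i+1}$ at $y_2$, and reconciling these two incidences would close a cycle of length below $2k-2=4$, contradicting the girth. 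Hence neither $e_iP_{i+1}e_{i+1}P_{i+2}$ nor $P_{i+1}e_{i+1}P_{i+2}e_{i+2}$ is exceptional, which is exactly what lets the global argument of Theorem~\ref{theo:dec6} route around the exceptional graphs.
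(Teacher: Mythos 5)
Your outline gets several things right: a vertex of degree $5$ does forbid a decomposition into two paths (a path uses at most two edges at any vertex), the exceptional graph of Figure~\ref{fig:badcase} has such a vertex (namely $y_2=x_3$), your length bookkeeping is correct, and your treatment of part (ii) is essentially the paper's argument --- the paper phrases the template rigidity by observing that $P_{i+1}\cup e_{i+1}$ is a $(2,4)$-lollipop, whereas exceptionality of $e_iP_{i+1}e_{i+1}P_{i+2}$ or of $P_{i+1}e_{i+1}P_{i+2}e_{i+2}$ would force it to be a $(1,5)$-lollipop; note that this contradiction is purely structural, so your appeal to a short cycle there is unnecessary. The genuine gap is in part (i): both load-bearing claims of your plan are announced rather than proved. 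The assertion ``provided every vertex of $X$ has degree at most $4$, I would pair the edges at each degree-$4$ vertex and choose a single cut\dots so that $X$ breaks into two genuine paths'' \emph{is} the lemma; you yourself flag that one must check, reroute by reroute, that the result is a path of admissible length whose long end (if any) reaches an exempt boundary vertex lying on an edge that belongs to $X$, and this is never carried out. The paper's proof of (i) consists exactly of that work: Claim~\ref{lem:3} classifies the possible intersection patterns of $P_i$ and $P_{i+1}$ under the girth bound (Figure~\ref{fig:pepconfig}), and then three cases with many subcases perform explicit surgeries and verify quasi-balance in each configuration. Nothing in your proposal substitutes for this enumeration.

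The second unproved claim --- that a degree-$5$ vertex together with the girth bound ``pins $X$ down'' to the exceptional graph, and cannot occur when $k\geq 4$ --- is in fact false under the hypotheses as stated. Let $P_i$ and $P_{i+1}$ meet in exactly one vertex $w$, internal to both, with $e_{i+1}$ joining the end $y_{2k-1}$ of $P_{i+1}$ to $w$, where $w$ is at distance $2$ from $y_0$ along $P_{i+1}$ and at distance $2$ from the end of $P_i$ not incident to $e_i$. Then $X=P_ie_iP_{i+1}e_{i+1}$ contains exactly two cycles, of lengths $2k-2$ and $2k$, so the girth condition holds for every $k\geq 3$; yet $w$ has degree $5$, and $X$ contains no exceptional subgraph (that would require $P_i$ and $P_{i+1}$ to share three vertices and would produce a $5$-cycle, which $X$ does not have). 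So the girth alone does not force exceptionality, for any $k$. What rules such configurations out in the paper is not the girth but the fact that Lemma~\ref{lem:4} is only ever invoked for trapped sequences and nice extensions (Definition~\ref{def:int}, Claim~\ref{niceext}); the paper's own proof assumes trappedness of $e_i$ implicitly, since its segments $P(x,x')$ and $P'(x,x')$ exist only in that case. Your argument would need to import that hypothesis explicitly, and even then the step ``solving the resulting identifications'' is precisely the case analysis that must be written out; it is where the entire difficulty of the lemma lives.
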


In the following we present the proof of Theorem~\ref{theo:dec6}.
The proof of Lemma~\ref{lem:4} is in Section~\ref{sub:techlemmas}.

\subsection{Proof of Theorem~\ref{theo:dec6}}

Recall that $G \in \mathcal{G}_{k}$ is a $2k$-regular graph on $n$ vertices 
of girth at least $2k-2$ that admits a pair of disjoint perfect matchings,
$M$ denotes one of its two disjoint perfect matchings
and the $(2k-1)$-regular graph $G'=G-M$ has a decomposition~$\mathcal{P}$ into paths of length $2k-1$.

Since each vertex of $G'$ is the end vertex of exactly
one path in $\mathcal{P}$, it follows that the edge set of $G$ admits a partition $\mathcal{W}$ 
into $(M,\mathcal{P})$-alternating closed trails.
Each alternating $1$-closed trail in $\mathcal{W}$ is a cycle of length~$2k$.
Hence, to conclude the result 
of Theorem~\ref{theo:dec6}, it suffices to prove that for each alternating $r$-closed trail with $r \geq 2$ 
there exists a $k$-balanced decomposition into $r$ paths; this is the statement of our next lemma.
Thus, the validity of Theorem~\ref{theo:dec6} follows from Lemma~\ref{lem:lemaforgpaths}.

\begin{lemma}\label{lem:lemaforgpaths}
Let $r \geq 2$ and $W$ be an $(M,\mathcal{P})$-alternating $r$-closed trail.
Then, $W$  admits a $k$-balanced decomposition.
\end{lemma}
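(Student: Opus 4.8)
The plan is to prove Lemma~\ref{lem:lemaforgpaths} by induction on $r$, partitioning the alternating $r$-closed trail $W = P_0 e_0 \cdots P_{r-1} e_{r-1} P_0$ into a set of \emph{extensions} of the form $P_i e_i P_{i+1}$ (consuming two paths and the connecting matching edge), applying Lemma~\ref{lem:4}(i) to decompose each extension into $2$ paths in a quasi-$k$-balanced way, and then gluing these local decompositions together via Observation~\ref{obs:quasi}. The point of the quasi-balanced notion is precisely that each local decomposition is allowed to fail the balance condition only at the endpoints incident to the boundary matching edges $e_{j-1}, e_{j+l+1}$; when we tile the \emph{closed} trail $W$ by extensions so that these boundary edges are shared between consecutive tiles, Observation~\ref{obs:quasi} promotes the union to a genuinely $k$-balanced decomposition of the whole trail. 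Moreover, since each extension $X_i$ carries exactly two paths of $\mathcal{P}$ and Lemma~\ref{lem:4}(i) produces exactly two paths from it, the counting hypothesis $|\mathcal{D}_i| = $ (number of $\mathcal{P}$-paths in $X_i$) of Observation~\ref{obs:quasi} holds, so the union is automatically balanced as an alternating closed trail.

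Concretely, I would first treat the generic case where no extension along $W$ is exceptional (so in particular whenever $k > 3$). Here I would simply pair up the $r$ paths consecutively around the cyclic sequence. If $r$ is even, this tiles $W$ by $r/2$ extensions $P_0 e_0 P_1$, $P_2 e_2 P_3, \ldots$, each handled by Lemma~\ref{lem:4}(i), and Observation~\ref{obs:quasi} assembles them into a $k$-balanced decomposition of $W$ into $r$ paths. If $r$ is odd, one tile must absorb three paths rather than two; I would use one larger extension together with $(r-3)/2$ two-path extensions, invoking Lemma~\ref{lem:4}(i) on the appropriate sub-extensions and again stitching with Observation~\ref{obs:quasi}. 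The induction is used to reduce the odd case (or any awkward leftover piece) to a shorter closed trail once a suitable segment has been decomposed.

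The hard part will be the \emph{exceptional} subgraphs, which by Definition~\ref{def:exceptional} force $k = 3$. Lemma~\ref{lem:4}(ii) is designed exactly to control these: it guarantees that an exceptional $X = P_i e_i P_{i+1} e_{i+1}$ forces $e_{i-1} \neq e_{i+1}$ (so consecutive exceptional extensions cannot overlap in a degenerate way) and that two exceptional extensions cannot be subsequent, i.e. an exceptional tile is always followed by a non-exceptional one. Using this structural non-adjacency, I would choose the tiling of $W$ so that every exceptional extension is merged with its (guaranteed non-exceptional) successor into a single longer extension containing $3$ or $4$ paths, which \emph{can} then be given a quasi-$k$-balanced decomposition by applying Lemma~\ref{lem:4}(i) to its non-exceptional sub-extensions. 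The combinatorial bookkeeping — verifying that such a tiling exists all the way around the cyclic trail, that the boundary matching edges match up so Observation~\ref{obs:quasi} applies, and that the final per-tile path count matches the $\mathcal{P}$-path count so the global balance and the equality between the numbers of length-$(2k-1)$ and length-$(2k+1)$ paths both hold — is where the real work lies.

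I expect the genuine obstacle to be the interaction between the cyclic (wrap-around) structure of the closed trail and the exceptional tiles: one must ensure that the greedy pairing does not strand an exceptional extension at the "seam" where the tiling closes up, and that in the odd-$r$ case the unavoidable three-path tile is not itself forced to be exceptional in an unresolvable way. Lemma~\ref{lem:4}(ii), together with Observation~\ref{obs:cyc-exc} (which guarantees $l \geq 2$ for a cyclic exceptional-containing trail and hence gives enough room to maneuver), should supply exactly the slack needed to always reroute the seam to a non-exceptional boundary; making this rerouting argument precise, and checking the balance counts survive it, is the crux of the proof.
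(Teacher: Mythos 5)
Your high-level plan (tile $W$ by two-path extensions, decompose each tile via Lemma~\ref{lem:4}(i), glue with Observation~\ref{obs:quasi}, and use Lemma~\ref{lem:4}(ii) to shift the tiling around exceptional tiles) is indeed the skeleton of the paper's argument, and it does work verbatim in the clean situation where $k>3$ and $r$ is even. But there is a genuine gap: you never address the obstruction the paper calls \emph{trapped} edges and \emph{malicious} paths (Definitions~\ref{def:trapped} and~\ref{def:int}). Lemma~\ref{lem:4}(i) applies only to extensions built from \emph{two} paths of $\mathcal{P}$; every variant of your tiling beyond the clean case --- the odd-$r$ tile ``absorbing three paths,'' and every re-tiling forced by an exceptional tile, where the preceding path is expelled and the succeeding tile is cannibalized --- strands a single path $P$ of $\mathcal{P}$ that must absorb an adjacent matching edge $e$. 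Your plan implicitly treats $P\cup e$ as a path of length $2k$, but it need not be one: the far endpoint of $e$ can be an \emph{internal} vertex of $P$ (that is, $P$ is malicious for $e$), in which case $P\cup e$ is a lollipop of length $2k$ and admits no $\{2k-1,2k,2k+1\}$-decomposition into paths at all. This failure occurs for every $k\geq 3$, not just $k=3$, and it is exactly why the paper does not argue as you propose: it splits $W$ into a part $W'$ covered by paths together with matching edges for which some path is nice, and a part $W^*$ consisting of the maximal trapped sequences, proves that $W^*$ can be partitioned into \emph{nice} extensions (Claim~\ref{cl:claim}, which requires a genuine argument about end paths of maximal trapped sequences), and only then runs the induction of Claim~\ref{niceext}, whose niceness hypothesis is precisely what guarantees that the stranded piece $eP_j$ is a path when an exceptional tile forces a shift.

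Moreover, the part you yourself flag as ``the crux'' --- that the cascade of shifts caused by exceptional tiles and malicious edges can be closed up consistently around the cyclic trail --- is left entirely unproved, and it is not routine: pushing a matching edge forward into the next two-path tile can create a \emph{new} exceptional-containing extension, and Lemma~\ref{lem:4}(ii) only controls the two trails immediately following an exceptional one, not the tile you create by prepending a matching edge farther downstream, so the greedy rerouting can propagate around the cycle and return to the seam. The paper's case distinction (whether or not $W$ is itself a cyclic trapped sequence), together with Claims~\ref{cl:claim} and~\ref{niceext} and Observation~\ref{obs:cyc-exc}, is precisely the bookkeeping that resolves this; in your proposal it is named but not carried out, so what you have is a plausible plan rather than a proof.
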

%

Before we show Lemma~\ref{lem:lemaforgpaths}, we present two useful definitions.

\begin{definition}[malicious paths, nice paths and trapped edges]\label{def:trapped}
 Let $PeP'$ be a path-edge-path $(M,\mathcal{P})$-alternating trail.
 Suppose $e=xx'$, where $x$ is an end vertex of $P$ and 
 $x'$ is an end vertex of $P'$. 
 If $x'$ is an internal vertex of $P$, then 
 we say that $P$ is \emph{malicious} for $e$; otherwise,
 we say $P$ is \emph{nice} for $e$. If $P$ and $P'$ are malicious for $e$,
 we say that edge $e$ is \emph{trapped} (in the sequence $PeP'$).
 \end{definition}

 \begin{definition}[trapped sequences, nice and malicious extensions]\label{def:int}
Let $r \geq 2$ and $W$ be the $(M,\mathcal{P})$-alternating $r$-closed trail $P_0 e_0 \cdots P_{r-1} e_{r-1}P_0$.
If all matching edges $e_{j}, e_{j+1},\ldots , e_{j+l}$ are trapped in $s(j, l)$, 
then we say that $s(j, l)$ is a \emph{trapped sequence}.
Moreover, we say that an extension $X$ of $s(j, l)$ is \emph{nice}, if the following holds:
if $e_{j-1} \in X$ (resp. $e_{j+l+1} \in X$) and $e_{j-1} \neq e_{j+l+1}$, then $P_{j}$ (resp. $P_{j+l+1}$) 
is nice for $e_{j-1}$ (resp. $e_{j+l+1}$). Otherwise, it is called \emph{malicious}. 
In particular,  $s(j, l)$ is nice as an extension of itself.
\end{definition}

Note that an extension \(X\) of \(s(j,l)\) is malicious only if at least one of the following holds.
\begin{itemize}
	\item \(e_{j-1}\in X\), \(e_{j-1} \neq e_{j+l+1}\), 
		and \(P_j\) is malicious for \(e_{j-1}\); or
	\item \(e_{j+l+1}\in X\), \(e_{j-1}\neq e_{j+l+1}\),
		and \(P_{j+l+1}\) is malicious for \(e_{j+l+1}\).
\end{itemize}

We are ready to prove Lemma~\ref{lem:lemaforgpaths}.

\subsection*{Proof of Lemma~\ref{lem:lemaforgpaths}}

Let $W = P_0 e_0 \cdots P_{r-1} e_{r-1}P_0$ be an alternating $r$-closed trail, for some $r \geq 2$. 
Recall that we want to prove that $W$ admits a $k$-balanced decomposition. 
Note that
the property that the number of paths of length $2k-1$ equals the number of paths
of length $2k+1$ of the decomposition is equivalent to the one that the 
decomposition consists of~$r$ paths (Observation~\ref{obs:cyc-exc}). 

Let us first suppose that for every $i \in \mathbb{Z}_{r}$, the matching edge $e_i$ is not trapped in~$P_{i}e_iP_{i+1}$. 
Since $r \geq 2$, for each $i \in \mathbb{Z}_{r}$, we have $P_{i}e_i$ or $e_iP_{i+1}$ is a path. 
Moreover, if \(e_{i-1}P_{i}\) and \(P_{i}e_i\) are paths, then \(e_{i-1}P_ie_i\) is a path (because \(e_0, \ldots, e_{r-1}\) are elements of a matching of \(G\))
and it has length \(2k+1\).
Therefore, it is possible to decompose $W$ into $r$ paths whose lengths are in $\{2k-1, 2k, 2k+1\}$
such that each path of length $2k+1$ has both end edges in the perfect matching $M$. Hence,
each vertex of $W$ is the end vertex of at most one path of length $2k+1$,
thus, the decomposition is $k$-balanced.

We now assume that $W$ contains trapped edges. 
We study first the case that $W$ is not a cyclic trapped sequence,
and later the remaining case.

\subsection*{Case that $W$ is not a cyclic trapped sequence}
We say that the matching edges $e_{i-1}, e_{i}$ are the 
\emph{neighbouring edges} of $P_i$ in $W$, for all $i \in \mathbb{Z}_{r}$.
Let $\mathcal{S}$ be the set of all maximal trapped sequences in $W$,
$\mathcal{P}'\subset\mathcal{P}$ be the set of all paths that
do not belong to any trapped sequence in $W$,
and $W'$ be the subgraph of $W$ consisting of the union of all the paths in 
$\mathcal{P}'$ and all matching edges in $M$
satisfying that for each of them there exists a path in $\mathcal{P}'$ that is nice.
Note that each maximal trapped sequence starts and ends with paths in $\mathcal{P}$
and that $P \in \mathcal{P}'$ if and only if either $P$ is nice for one of its {two} neighbouring
edges, or there are paths in $\mathcal{P}$ that are nice for its neighbouring edges (and thus, such paths are also in 
$\mathcal{P}$).
Hence, $W'$ is a disjoint union of extensions.
Trivially, it follows that there exists 
a $\{2k-1,2k,2k+1\}$-decomposition  $\mathcal{P}(W')$ of $W'$ into~$|\mathcal{P}'|$ paths 
such that~$\mathcal{P}(W')$ is quasi-$k$-balanced at each extension of $W'$.

Let $W^* = W - E(W')$ be the graph obtained from $W$ by deleting the edge set $E(W')$
and all (if any) isolated vertices. Recall that $s(j, l)$ denotes $P_{j} e_{j} \cdots P_{j+l} e_{j+l} P_{j+l+1}$.
We refer to the paths $P_j$ and $P_{j+l+1}$ as the \emph{end paths} 
of $s(j, l)$, to $P_j$ as its \emph{inital path} and to $P_{j+l+1}$ as its \emph{final path}. 
Observe that $W^*$ is the disjoint union of all the maximal trapped sequences in $\mathcal{S}$
and a set $M^*\subset M$ (possibly empty). 
Let us understand the behavior of the edges in $M^*$ arise. 
If $e \in M^*$, then there is no path in $\mathcal{P}'$ that is nice for $e$.
Thus, $e$ is a neighbouring edge of an end path $P_e$ of a maximal trapped sequence $s_e$ of $\mathcal{S}$;
otherwise, there are paths $P,P'$ in $\mathcal{P}'$ such that $PeP'$ is a trapped sequence, a contradiction.
We now prove that we can choose $s_e$ and $P_e$ so that 
$P_e$ is nice for $e$.
We have already shown that there exists a maximal trapped 
sequence $s(j,l) \in \mathcal{S}$, such that $e=e_{j-1}$, or $e= e_{j+l+1}$.
{Suppose, without loss of generality, that $e=e_{j-1}$.}
Suppose that the result does not hold with such sequence; that is, 
if $e=e_{j-1}$ 
then $P_{j}$ 
is malicious for $e$.
Due to the maximality of $s(j, l)$, the path $P_{j-1}$ 
 is nice for $e$.
Hence, $P_{j-1}$ 
is neither an end path of $s(j,l)$, nor 
a path in $\mathcal{P}'$.
Therefore, $P_{j-1}$ 
is the final 
path of a maximal
trapped sequence $s \in\mathcal{S}$, obviously distinct of $s(j,l)$,  and the result follows by taking
$s_e=s$ and $P_e=P_{j-1}$ 
Consequently, we can claim the following.
%
%
%
%

\begin{claim}\label{cl:claim}
$W^*$ can be decomposed into nice extensions of maximal trapped sequences.
\end{claim}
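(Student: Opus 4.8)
The plan is to build the decomposition by distributing the leftover matching edges of $M^*$ among the ends of the maximal trapped sequences in $\mathcal{S}$, attaching each such edge only at an end whose end path is nice for it. First I would record the coarse structure already established: the cores of the maximal trapped sequences are pairwise edge-disjoint, since a path $P_i$ has at most one trapped neighbouring edge when it is an end path and two when it is interior, so each path lies in at most one trapped sequence and is an end path of at most one; consequently $W^*$ is the edge-disjoint union of these cores together with the set $M^*$. Hence it suffices to assign every edge of $M^*$ to exactly one sequence-end so that the resulting extensions are nice and jointly cover $M^*$.

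For the assignment I would invoke the paragraph preceding the claim: every $e\in M^*$ is a neighbouring edge of an end path of some maximal trapped sequence, and one can choose a sequence $s_e\in\mathcal{S}$ and an end path $P_e$ of $s_e$ with $e$ incident to $P_e$ and $P_e$ nice for $e$. The key point to verify is that no nice path for $e$ lies in $\mathcal{P}'$: otherwise $e$ would have a nice path in $\mathcal{P}'$ and hence belong to $W'$, contradicting $e\in M^*$. Thus every nice path for $e$ is an end path of a trapped sequence, and at least one such path exists because $e$ is not trapped. I would then assign $e$ to the end of $s_e$ carrying $P_e$, choosing one arbitrarily if $e$ admits two nice attachments. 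Since each end of a trapped sequence has a unique outside neighbouring edge ($e_{j-1}$ at the initial end and $e_{j+l+1}$ at the final end), distinct edges of $M^*$ are assigned to distinct ends, so the assignment is consistent and places each edge exactly once.

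Grouping each maximal trapped sequence $s(j,l)$ with the (at most two) edges of $M^*$ assigned to its ends yields an extension $X$ of $s(j,l)$, and by the previous two paragraphs these extensions partition $E(W^*)$. Finally I would check niceness in the sense of Definition~\ref{def:int}: the extension $X$ contains $e_{j-1}$ only when that edge was assigned to the initial end, which by construction happens only if $P_j$ is nice for $e_{j-1}$, and symmetrically for $e_{j+l+1}$. The condition $e_{j-1}\neq e_{j+l+1}$ appearing in the definition disposes of the degenerate wrap-around case in which the single non-trapped edge coincides at both ends, for which niceness is vacuous. Therefore every extension in the decomposition is nice, and the claim follows.

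The step I expect to be the main obstacle is the bookkeeping behind the consistency of the assignment rather than any deep argument: one must ensure that an edge of $M^*$ incident to end paths of two distinct trapped sequences (or to both ends of the same sequence) is counted exactly once and attached only at a nice end, and that this never forces an attachment at a malicious end. The maximality argument recalled above rules out the malicious-only situation, so the remaining work is purely combinatorial accounting on the cyclic arrangement of $W$.
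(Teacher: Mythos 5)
Your proposal is correct and takes essentially the same approach as the paper: the paper's proof of this claim is exactly the argument in the paragraph preceding it (every $e\in M^*$ is a neighbouring edge of an end path $P_e$ of some maximal trapped sequence $s_e$, and $s_e$, $P_e$ can be chosen so that $P_e$ is nice for $e$), after which the edges of $M^*$ are attached at those nice ends to form the extensions, just as you do. Your extra bookkeeping---edge-disjointness of the maximal trapped sequences, the fact that each end has a unique outside neighbouring edge so the assignment is consistent, and the vacuously nice wrap-around case $e_{j-1}=e_{j+l+1}$---only makes explicit what the paper leaves implicit.
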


Let us assume that the following claim holds:

\begin{claim}\label{niceext}
For every nice extension $X$
of a trapped sequence $s(j,l)$, there exists a quasi-$k$-balanced decomposition  $D$ of $X$ into $l+2$ 
paths.
\end{claim}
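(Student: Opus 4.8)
The plan is to prove Claim~\ref{niceext} by induction on the number $l+1$ of matching edges in the trapped sequence $s(j,l)$, using Lemma~\ref{lem:4} as the engine and Observation~\ref{obs:quasi} as the glue. The base case is $l=0$, where $X$ is a nice extension of a single path-edge-path trail $P_je_jP_{j+1}$; here I would apply Lemma~\ref{lem:4}(i) directly, provided $X$ contains no exceptional subgraph, to obtain a quasi-$k$-balanced decomposition into $2=l+2$ paths. The exceptional case (necessarily $k=3$) must be handled separately, and this is where Lemma~\ref{lem:4}(ii) is meant to be invoked: it guarantees $e_{j-1}\neq e_{j+1}$ and controls how exceptional subgraphs can sit relative to one another, so that one can locally repartition the edges to avoid an exceptional obstruction and still produce the right number of paths.

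\smallskip

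For the inductive step with $l\geq 1$, I would split the trapped sequence into a short initial piece and a remainder. Concretely, I would isolate the first three consecutive paths $P_je_jP_{j+1}e_{j+1}P_{j+2}$, which is exactly the shape $e_{i-1}P_i\cdots P_{i+2}e_{i+2}$ that Lemma~\ref{lem:4} addresses. The goal is to peel off an extension $X_1$ of $P_je_iP_{j+1}$ that admits a quasi-$k$-balanced decomposition into $2$ paths, leaving a nice extension $X_2$ of the shorter trapped sequence $s(j+1,l-1)$ (or $s(j+2,l-2)$, depending on how many matching edges the peeled piece absorbs). By the induction hypothesis, $X_2$ has a quasi-$k$-balanced decomposition into $(l-1)+2=l+1$ paths, and then Observation~\ref{obs:quasi} combines $\mathcal{D}_1\cup\mathcal{D}_2$ into a quasi-$k$-balanced decomposition of all of $X$. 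The count works out because $2+(l+1)=l+3$; I would be careful here to match the edge-accounting so that the total number of paths is exactly $l+2$, which forces the peeled piece and the remainder to share the matching-edge bookkeeping correctly (the edge $e_{j+1}$ should be counted once, as the junction between $X_1$ and $X_2$). The niceness hypothesis on $X$ is what ensures the remainder $X_2$ is again a \emph{nice} extension, so that the induction hypothesis genuinely applies.

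\smallskip

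The main obstacle I anticipate is the exceptional case, and more precisely the interaction between consecutive exceptional subgraphs when $k=3$. Lemma~\ref{lem:4}(ii) tells me that if $X=P_ie_iP_{i+1}e_{i+1}$ is exceptional then neither of the two overlapping successor trails is exceptional; this non-subsequence property is exactly what should prevent the induction from stalling, because it guarantees that after I handle one exceptional block the next peel is governed by part~(i) rather than part~(ii). Making this rigorous requires a careful case analysis of where along the trapped sequence the exceptional subgraphs occur, and verifying that at each step I can peel off either a clean (non-exceptional) piece via Lemma~\ref{lem:4}(i), or an exceptional pair whose immediate successors are guaranteed non-exceptional. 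I expect the delicate point to be ensuring that the quasi-balanced condition at the shared junction vertices is preserved: Observation~\ref{obs:quasi} permits imbalance only at odd-degree vertices incident to the boundary matching edges $e_{j-1},e_{j+l+1}$, so I must check that the internal junctions created by peeling do not introduce forbidden imbalances, which in turn relies on each peeled decomposition having exactly as many paths as paths of $\mathcal{P}$ it contains, matching the hypothesis in the second sentence of Observation~\ref{obs:quasi}.
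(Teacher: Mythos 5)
Your skeleton (induction on $l$, peel off a two-path piece with Lemma~\ref{lem:4}(i), glue with Observation~\ref{obs:quasi}) is the same as the paper's, but both places you flag as ``delicate'' are exactly where your plan, as written, fails. Take the exceptional configurations first. In the base case $l=0$ you propose to handle an exceptional $X$ separately by using Lemma~\ref{lem:4}(ii) to ``locally repartition the edges.'' This cannot work: an extension of $s(j,0)$ contains only the two paths $P_j,P_{j+1}$ of $\mathcal{P}$, and an exceptional such extension genuinely admits no quasi-balanced decomposition into two paths --- that is precisely why it is an obstruction, and there is nothing else inside $X$ to regroup with. The correct resolution (and the paper's) is that an exceptional configuration simply cannot occur in a \emph{nice} extension of $s(j,0)$: exceptionality forces a boundary matching edge of $X$ to end at an internal vertex of an end path, i.e.\ forces maliciousness, which niceness forbids when $e_{j-1}\neq e_{j+1}$, while the case $e_{j-1}=e_{j+1}$ is excluded by Observation~\ref{obs:cyc-exc}. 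The same problem recurs in your inductive step: you speak of ``handling one exceptional block'' and then peeling the next piece by part (i), but the exceptional block itself cannot be decomposed. The mechanism missing from your plan is to \emph{shift the window}: when $P_je_jP_{j+1}e_{j+1}$ is exceptional (so $k=3$), decompose $e_jP_{j+1}e_{j+1}P_{j+2}e_{j+2}$ instead --- non-exceptional by Lemma~\ref{lem:4}(ii), hence two paths by part (i) --- and let $eP_j$ stand alone as a single decomposition path of length $2k-1$ or $2k$. This in turn requires the preliminary reduction, also absent from your plan, that one may assume $P_j$ is nice for $e$ (reversing the trail otherwise), so that $eP_j$ really is a path.

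Second, the bookkeeping you leave open hides a real error rather than a routine check. Peeling $eP_je_jP_{j+1}e_{j+1}$ leaves an extension of $s(j+2,l-2)$, not of $s(j+1,l-1)$ ($P_{j+1}$ is gone); it contains $l$ paths of $\mathcal{P}$, so the induction hypothesis gives $l$ paths and the total is $2+l=l+2$, with no edge ``counted once as a junction'' --- the pieces partition $E(X)$ outright. More importantly, your assertion that ``the niceness hypothesis on $X$ is what ensures the remainder is again a nice extension'' is false at the initial boundary: $e_{j+1}$ is a \emph{trapped} edge of $s(j,l)$, so both $P_{j+1}$ and $P_{j+2}$ are malicious for it, and any remainder retaining $e_{j+1}$ as a boundary edge is a malicious extension, to which the induction hypothesis does not apply. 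Niceness of the remainder at its initial end is obtained only by absorbing $e_{j+1}$ into the peeled piece (niceness of $X$ does take care of the final edge $e_{j+l+1}$, and of the small cases $l=1,2$ one must treat when the remainder degenerates to a bare path or to the empty graph). So your approach is the right one, but the exceptional-case mechanism, the identity of the residual trapped sequence, and the niceness inheritance all need to be supplied essentially as the paper does.
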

Then, according to Observation~\ref{obs:quasi}, Lemma~\ref{lem:lemaforgpaths} follows from the fact that $W$ is the edge disjoint union of $W'$ and $W^*$, 
the existence of a quasi-$k$-balanced decomposition for each extension of $W'$ and, 
the validity of Claims~\ref{cl:claim} and~\ref{niceext}.

In what follows, we prove Claim~\ref{niceext}.
The proof of Claim~\ref{niceext} relies on the technical result, namely Lemma~\ref{lem:4},
described in Subsection~\ref{sub:toolbox}.


%

    \begin{proof}[Proof of Claim~\ref{niceext}]
    
    Let us proceed by induction on $l$ (clearly, $l\geq 0$).
    For the case that $l=0$, we have that \(s(j,0)\) is a path-edge-path alternating trail.
    Let $X$ be a nice extension of \(s(j,0)\). By Lemma~\ref{lem:4}\emph{(i)}, if $k>3$, then $X$ has a 
    quasi-$k$-balanced decomposition
    into~2 paths.
    Suppose $k=3$.
    If $e_{j-1} \neq e_{j+1}$, then 
    $X$ does not contain exceptional subgraphs (see Figure~\ref{fig:badcase}),
    because if $e_{j-1}$ (resp. $e_{j+1}$) is in \(X\), then
    $P_{j}$ (resp. $P_{j+1}$) is nice for $e_{j-1}$ (resp. $e_{j+1}$).
    Thus, using Lemma~\ref{lem:4}\emph{(i)}
    the result follows. If $e_{j-1} = e_{j+1}$, 
    then by Observation~\ref{obs:cyc-exc}, $X$ does not
    contain an exceptional subgraph and the result holds, again, by Lemma~\ref{lem:4}\emph{(i)}.

    We now suppose $l\geq 1$.
    Let $eP_je_jP_{j+1}e_{j+1}$ be the begining of the nice extension $X(j, l)$, where $e$ may exist or not.
    If \(e\in X(j, l)\), we can suppose without loss of generality that \(P_j\) is nice for \(e\);
    otherwise, namely, if $e\in X(j, l)$ and $P_j$ is malicious for $e$, by definition of nice extensions, we have that
    \(e_{j+l+1} \in X(j, l)\) and \(P_{j+l+1}\) is nice for \(e_{j+l+1}\), and thus we can
    use the inverse sequence of \(X(j,l)\), instead of \(X(j,l)\) itself. 
    In particular, in the case that $k=3$, we have that 
    $eP_je_jP_{j+1}$ is not exceptional.

    If $k=3$ and $P_je_jP_{j+1}e_{j+1}$ is not exceptional, 
    then by Lemma~\ref{lem:4}\emph{(i)}, we have that 
    $eP_je_jP_{j+1}e_{j+1}$ has a quasi-$3$-balanced decomposition into two paths.
   For $k>3$, we have that $eP_je_jP_{j+1}e_{j+1}$ is an extension
    of $P_je_jP_{j+1}$ and thus Lemma~\ref{lem:4}\emph{(i)} claims that there exists
    a quasi-$k$-balanced decomposition of $eP_je_jP_{j+1}e_{j+1}$ into two paths.
    For $k\geq 3$, let \(X:=X(j, l)-eP_je_jP_{j+1}e_{j+1}\); the nice extension  
    of the trapped sequence \(s(j+2,l-2)\).
    If \(l = 1\), then \(X\) is simply a path of length \(2k-1\) or \(2k\).
    If \(l \geq 2\), by the induction hypothesis, \(X\) has a quasi-$k$-balanced decomposition
    into $l$ paths. Thus, by Observation~\ref{obs:quasi}, there exists
    a quasi-$k$-balanced decomposition of $X(j, l)$ into $l+2$ paths.

    We now study the case that $k=3$  and 
    $P_je_jP_{j+1}e_{j+1}$ is exceptional.
    Using Lemma~\ref{lem:4}\emph{(ii)}, we have that $e_jP_{j+1}e_{j+1}P_{j+2}e_{j+2}$
    does not contain an exceptional subgraph, and therefore, by Lemma~\ref{lem:4}\emph{(i)},
    it has a quasi-$3$-balanced decomposition into 2 paths. 
    Since $P_j$ is nice for $e$ (by previous assumption),
    we have that \(eP_j\) is a path of length \(6\) (if \(e\in X\)) or 
    simply  a path of length \(5\).
    It follows, by Observation~\ref{obs:quasi}, 
    that $eP_je_jP_{j+1}e_{j+1}P_{j+2}e_{j+2}$ can be decomposed into three paths
    that form a quasi-$3$-balanced decomposition.    
{ Analogously to the case above, (but taking \(X=X(j, l)-eP_je_jP_{j+1}e_{j+1}P_{j+2}e_{j+2}\)) we prove again that $X(j,l)$ has a quasi-3-balanced decomposition into 
     $l+2$ paths.}
    \end{proof}
%
%
%

\subsection*{Case that $W$ is a cyclic trapped sequence}    
Let $s:= s(0,r-1)$.
We show that $s$ has a quasi-$k$-balanced decomposition.
For the particular case that $r-1=1$, the result follows by 
Lemma~\ref{lem:4}\emph{(i)} and, in the case that $k=3$, in addition by Observation~\ref{obs:cyc-exc}.
We now suppose that $r-1\geq 2$.
Assume first that for $k=3$ and some \(j \in \) \(\{0,\ldots,r-1\}\),
we have that \(P_je_jP_{j+1}e_{j+1}\), \(e_{j-1}P_je_jP_{j+1}\) are not exceptionals; 
without loss of generality $j=0$.
Therefore, 
if $k=3$ and $P_{0}e_0P_{1}e_{1}$, $e_{r-1}P_{0}e_0P_{1}$ are not 
exceptionals (analogously if $k>3$), 
then by Lemma~\ref{lem:4} we can obtain a quasi-$k$-balanced decomposition of $e_{r-1}P_{0}e_0P_{1}e_{1}$ into two paths. 
In addition, we use Claim~\ref{niceext} to obtain a quasi-$k$-balanced decomposition of 
$s(j+2, l-3) = s-e_{r-1}P_{0}e_0P_{1}e_{1}$
and thus, by Observation~\ref{obs:quasi}, we have that $s$ has a $k$-balanced decomposition.
We suppose now that  $P_{j}e_jP_{j+1}e_{j+1}$ (analogously, for $e_{j+l}P_{j}e_jP_{j+1}$) 
is exceptional for $k=3$.
Then by Observation~\ref{obs:cyc-exc}, $P_{j} \neq P_{j+2}$ and by Lemma~\ref{lem:4}\emph{(ii)}, 
$e_jP_{j+1}e_{j+1}P_{j+2}e_{j+2}$
does not contain an exceptional subgraph, thus, again by Lemma~\ref{lem:4}(\emph{i}),
$e_jP_{j+1}e_{j+1}P_{j+2}e_{j+2}$ can be decomposed into two paths that form a quasi-3-balanced decomposition. 
As before, due to Claim~\ref{niceext} we have that $s-e_jP_{j+1}e_{j+1}P_{j+2}e_{j+2}$
admits a quasi-3-balanced decomposition, and thus, by Observation~\ref{obs:quasi},
we have that $s$ has a 3-balanced decomposition.

%

\section{Path-cycles into length-constrained  paths}\label{sec:theorem}

In this section, we prove Theorem~\ref{theo:main}.
Namely, we show that for every graph in $\mathcal{G}_k$ there exists
a $\{2k-1,2k,2k+1\}$-decomposition into paths such that 
the number of paths of length $2k-1$ equals the number of paths of length $2k+1$.
From now on, the length of a graph $G$ is denoted by $l(G)$.
The following lemma is useful in the proof of the upcoming results.
{In what follows, let \(C = y_0\cdots y_{2k-1}\) and \(P = v_0\cdots v_{\ell}\),
and take the indices of the vertices of $C$ modulo \(2k\).}

\begin{lemma}\label{lemma:chord}
	Let \(C\) be a cycle of length \(2k\)
	and let \(P\) be a path with its end vertices in \(V(C)\)
	such that $E(C)\cap E(P)= \emptyset$.
	Assume that \(H:=C\cup P\) has girth at least \(2k-2\).
	Then, \(l(P)\geq k-2\), and if \(y_0=v_0\),
	then \(v_\ell = y_k\).
	Moreover, if \(P\) contains a vertex of \(C\) as an internal vertex,
	then \(l(P)\geq 2k-3\).
\end{lemma}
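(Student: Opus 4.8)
The plan is to prove the three assertions about $H = C \cup P$ by exploiting the girth constraint, which forbids any cycle of length less than $2k-2$ in $H$. Since the end vertices of $P$ lie on $C$, the path $P$ together with one of the two arcs of $C$ between its endpoints forms a cycle in $H$. Writing the end vertices of $P$ as $y_a$ and $y_b$ on $C$, the cycle $C$ is split into two arcs of lengths $d$ and $2k-d$, where $d$ is the shorter distance along $C$ between $y_a$ and $y_b$. The key observation is that $P$ closes a cycle with each of these two arcs.

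First I would bound $l(P)$ from below. The two cycles formed by $P$ with the two arcs have lengths $l(P) + d$ and $l(P) + (2k-d)$. Since $E(C) \cap E(P) = \emptyset$, these are genuine cycles (no repeated edges), and each must have length at least $2k-2$ by the girth hypothesis. The shorter arc has length $d \leq k$, so $l(P) + k \geq l(P) + d \geq 2k-2$, giving $l(P) \geq k-2$, which is the first claim.

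Next, for the claim that $y_0 = v_0$ forces $v_\ell = y_k$: suppose $v_0 = y_0$ and $v_\ell = y_m$ for some $m$. The two arc-lengths are $m$ and $2k-m$, and I would want to show both cycles simultaneously satisfy the girth bound only when the configuration is consistent with $v_\ell = y_k$. The idea is to combine the lower bound $l(P) \geq k-2$ with an \emph{upper} bound: the whole of $H$ has only $2k + l(P)$ edges, and I expect a parity or counting argument pinning down $m=k$. Concretely, consider the two cycles $P$-plus-arc; their lengths sum to $2l(P) + 2k$. If $v_\ell = y_m$ with $m \neq k$, one arc is strictly shorter than $k$, tightening the bound on $l(P)$ against the other cycle and producing a contradiction with the girth on one side. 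I would check the small cases $m < k$ and $m > k$ carefully, using that the shorter arc gives the binding constraint.

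Finally, the case where $P$ has an internal vertex on $C$: if some internal vertex $v_i$ equals a vertex $y_c$ of $C$, then $P$ decomposes at $v_i$ into two subpaths $P_1 = v_0 \cdots v_i$ and $P_2 = v_i \cdots v_\ell$, each with both endpoints on $C$. Each $P_j$ closes a cycle with an arc of $C$, and applying the girth bound to these finer cycles forces each $P_j$ to be reasonably long; summing yields $l(P) \geq 2k-3$. The main obstacle I anticipate is the middle claim (pinning $v_\ell = y_k$ exactly): the lower bound $l(P)\geq k-2$ alone is too weak, so I will need to use the girth constraint on \emph{both} arcs at once, possibly together with the fact that $H \subseteq G$ inherits more structure, to rule out every antipodal-offset configuration except $m = k$. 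Care with indices modulo $2k$ and with the edge-disjointness of the two closing cycles will be where the argument is most delicate.
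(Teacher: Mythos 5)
Your first bound is fine and is exactly the paper's argument: the path plus the shorter arc (length \(d\leq k\)) violates the girth unless \(l(P)+d\geq 2k-2\). (One small imprecision: since \(P\) may meet the arc at internal vertices, \(P\cup\mathrm{arc}\) is in general only a closed trail, not a cycle; but it \emph{contains} a cycle of at most that length, so the inequality survives — the paper phrases it this way.) The middle claim, however, you correctly sensed you could not prove, and the reason is that \emph{as literally stated it is false}: for any \(k\geq 3\), let \(P\) have length \(2k-4\), join \(y_0\) to \(y_2\), and be internally disjoint from \(C\); then \(H\) has girth exactly \(2k-2\), yet \(v_\ell=y_2\neq y_k\). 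The claim is true — and is only ever invoked in the paper — in the extremal case \(l(P)=k-2\), and there your own two-arc inequality already finishes it: applying the girth bound to \emph{each} arc gives both arc lengths \(\geq 2k-2-(k-2)=k\), and since they sum to \(2k\), both equal \(k\), i.e. the endpoints are antipodal. So the ``extra structure'' you anticipated needing is simply this restriction to the tight case; no appeal to \(G\supseteq H\) is needed (nor available — the lemma is purely about \(H\)).

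The third claim is where your sketch has a genuine quantitative gap. Splitting \(P\) at the internal vertex \(z\in V(C)\) into \(P_1,P_2\) and applying the first bound to each half yields only \(l(P)\geq 2(k-2)=2k-4\), one short of \(2k-3\); ``summing'' does not yield the stated bound without a further idea. The paper closes this gap by a double count over \emph{three} cycles rather than two: the vertices \(x,z,y\) split \(C\) into three arcs, and one considers \(C_1=P(x,z)\cup C_y(x,z)\), \(C_2=P(z,y)\cup C_x(z,y)\), \(C_3=P\cup C_z(x,y)\), where \(C_w(u,v)\) is the arc from \(u\) to \(v\) avoiding \(w\). Every edge of \(P\) lies in exactly two of these and every edge of \(C\) in exactly one, so \(2\,l(P)+2k=l(C_1)+l(C_2)+l(C_3)\geq 3(2k-2)\), giving \(l(P)\geq 2k-3\). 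Alternatively, your split argument can be repaired using the corrected middle claim: if both halves had length exactly \(k-2\), then \(z\) would be antipodal on \(C\) to both \(x\) and \(y\), forcing \(x=y\) and contradicting that \(P\) is a path; hence some half has length \(\geq k-1\) and \(l(P)\geq 2k-3\). Either way, an ingredient beyond what you wrote is required.
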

\begin{proof}
	Let \(x,y\in V(C)\) be the end vertices of \(P\). Since \(l(C)=2k\),
	there is a path \(P'\) in \(C\) with end vertices \(x\) and \(y\),
	satisfies \(l(P')\leq k\).
	Trivially, \(P \cup P'\) contains a cycle
	and due to the assumption on the girth of $H$, we have \(l(P)+ l(P')\geq 2k-2\).
	The first part of the result follows.
Let us now suppose that  \(z\neq x,y\) is a vertex of \(C\) and an internal vertex of \(P\).
	Given distinct vertices \(u,v,w\in V(C)\), let \(C_w(u,v)\) 
	be the path in \(C\) with end vertices \(u\) and \(v\) that avoids \(w\).
	Consider the cycles \(C_1 = P(x,z) \cup C_y(x,z)\), 
	\(C_2 = P(z,y) \cup C_x(z,y)\), and \(C_3 = P \cup C_z(x,y)\).	
	Since \(H\) has girth at least \(2k-2\), we have \(l(C_i)\geq 2k-2\), for each \(i=1,2,3\).
	Note that every edge of \(P\) is contained in exactly two cycles of \(\{C_1,C_2,C_3\}\) 
	and each edge of \(C\) is contained in exactly one of these cycles.
	Then, we have 
	{\(2k + 2\,l(P) = l(C_1)+l(C_2)+l(C_3)  \geq 6k-6\)}. The lemma follows.
\end{proof}

In what follows we prove four lemmas that are used as the initial step in the proof of 
our main theorem.

\begin{lemma}\label{lemma:sccp}
	Let \(k\geq 3\).
	Let \(C\) be a cycle of length \(2k\)
	and \(P\) be a path of length \(\ell \in \{2k-1,2k,2k+1\}\)
	such that $E(C)\cap E(P)= \emptyset$, $V(C)\cap V(P)\neq \emptyset$
        and no end vertex of \(P\) is in \(V(C)\).
	If \(H:= C\cup P\) has girth at least \(2k-2\), 
	then \(H\) can be decomposed into two paths whose lengths are in \(\{2k,\ell\}\)
	so  that each path contains exactly 
	one end vertex of $P$ {as an end-vertex}. 
\end{lemma}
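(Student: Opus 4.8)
The plan is to build the two output paths by attaching each of the two pendants of $P$ (the maximal subpaths of $P$ at $v_0$ and at $v_\ell$ that contain no internal vertex of $C$) to a portion of the cycle, and to let the two paths share a single common end-vertex $w$ lying on $C$; the position of $w$ on $C$ is exactly the degree of freedom that tunes the two lengths to be $2k$ and $\ell$. First I would fix notation: let $v_a$ be the first and $v_b$ the last vertex of $P$ on $C$, so $1\le a\le b\le \ell-1$ since no end-vertex of $P$ lies on $C$, and write $P_0=v_0\cdots v_a$, $P_1=v_b\cdots v_\ell$ for the two pendants, of lengths $a\ge 1$ and $\ell-b\ge 1$. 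In $H$ the vertices $v_0,v_\ell$ have degree $1$, every intersection vertex has degree $4$ (it contributes two $P$-edges and two $C$-edges, which are distinct as $E(P)\cap E(C)=\emptyset$), and all other vertices have degree $2$. Hence a decomposition of $H$ into two paths forces each pendant to lie in the path carrying its degree-$1$ end, and a parity count of endpoints (only $v_0,v_\ell$ have odd degree) forces both paths to terminate at a single common even-degree vertex $w$. Thus the problem reduces to routing $C$ together with the portion of $P$ between $v_a$ and $v_b$ so that exactly one path of length $2k$ and one of length $\ell$ emerge.

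I would first treat the template case $V(C)\cap V(P)=\{v_a\}$ (so $a=b$). Writing $C=v_a\,y_1\cdots y_{2k-1}\,v_a$, I would cut $C$ at $w=y_{2k-a}$ and set $Q_1=P_0\,y_1\cdots y_{2k-a}$ and $Q_2=v_\ell\cdots v_a\,y_{2k-1}\cdots y_{2k-a}$. Then $l(Q_1)=a+(2k-a)=2k$ and $l(Q_2)=(\ell-a)+a=\ell$, both are genuine paths, and they meet only at $w$. (When $a=2k$, which can occur only for $\ell=2k+1$, I would instead give the short arc of length $\ell-a$ to $Q_1$.) This is the length-balancing identity to which everything reduces: the pendants are ``turned onto'' $C$ at the intersection vertices, and the free choice of the cut vertex $w$ absorbs the discrepancy between $a$ and the cycle length.

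For several crossings I would use Lemma~\ref{lemma:chord} to keep the analysis finite. Each subpath of $P$ between two consecutive intersection vertices has both ends on $C$, hence length at least $k-2$; since these gaps sum to $b-a\le \ell-2\le 2k-1$, there can be only a few crossings. Moreover, applying the ``moreover'' part of Lemma~\ref{lemma:chord} to $P[v_a,v_b]$ shows that the moment $P$ meets $C$ in a third vertex one gets $b-a\ge 2k-3$, which forces the two pendants to have total length at most $4$; this pins down the configurations with three or more crossings to a short list. In the remaining two-crossing case, $C$ together with the chord $P[v_a,v_b]$ is a theta graph with three internally disjoint $v_a$–$v_b$ branches, of lengths $d$, $2k-d$, and $l(P[v_a,v_b])$; I would send one pendant through one full branch, the other pendant through a second full branch, and let the two output paths split the third branch at $w$, selecting the branch assignment and the split point so that the two lengths become $2k$ and $\ell$. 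The required length always lies in the attainable interval because all three branch lengths are constrained by the girth bound via Lemma~\ref{lemma:chord}.

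The main obstacle is precisely this weaving: once $P$ crosses $C$ more than once I must guarantee that the two resulting edge-disjoint trails are truly \emph{simple} paths, i.e.\ no intersection vertex is traversed twice by the same path, while still landing both lengths exactly in $\{2k,\ell\}$. I expect to handle this through the flexibility in the choice of $w$ and of which branch each pendant uses, combined with the strong restrictions that the girth places on the crossing pattern through Lemma~\ref{lemma:chord}; in every case the final verification collapses to the single-crossing balancing identity computed above.
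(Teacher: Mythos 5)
Your plan follows the same route as the paper's own proof: order the intersection vertices along $P$, use Lemma~\ref{lemma:chord} to limit the number of crossings, settle a single crossing by the length-balancing identity, and treat two crossings via the theta graph $C\cup P[v_a,v_b]$, sending each pendant through a full branch and splitting the third branch at a vertex $w$. Your one-crossing case is correct, and your two-crossing scheme is exactly the paper's Case~1. However, your feasibility claim there (``the required length always lies in the attainable interval'') is asserted, not proved, and it is not automatic: the split vertex $w$ must be \emph{strictly} interior to the split branch, since otherwise one of the two trails revisits a crossing vertex and is not a path. There is a genuine boundary configuration --- $k=3$, both pendants of length $3$, chord of length $1$, both arcs of length $3$ --- where the natural assignment (pendant through one arc, that path aimed at length $2k$) places $w$ exactly on a crossing vertex, and one must instead aim that path at length $\ell$; the paper isolates precisely this subcase (its $l_1=k=3$ case).

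The genuine gap is your treatment of three or more crossings. Your claim that the ``moreover'' part of Lemma~\ref{lemma:chord} ``pins down the configurations with three or more crossings to a short list'' is true only for \emph{four} crossings, where $3k-5\le l(P[v_a,v_b])\le 2k-1$ forces $k\le 4$. With exactly three crossings your bound only says the two pendants have total length at most $4$; it does not bound $k$. Indeed, for every $k\ge 3$ there are three-crossing configurations satisfying all hypotheses (for instance, pendants of length $1$ and two sub-chords of length $k-1$ joining near-antipodal vertices), and the sub-chord lengths and crossing positions vary with $k$, so this is an infinite family, not a finite list that can be checked by inspection. That case requires a structural argument valid for all $k\ge 3$, which is what the paper's Case~2 supplies: if the first arc is longer than the first pendant, the one-crossing construction still applies; otherwise the girth condition applied to the two cycles through the sub-chords gives $c_1+c_2-l_1-l_4\ge 2k-5>0$, hence $c_2>l_4$, and one splits the arc between the second and third crossings at distance $l_4$ from the last crossing. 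Likewise the four-crossing case, where $k\le 4$ does hold, still needs the explicit analysis the paper carries out (for $k=4$ the lengths are forced to be $l_1=l_5=1$, $l_2=l_4=2$, $l_3=3$, and a specific exchange of one cycle edge works; for $k=3$ a separate argument is needed). As it stands, your proposal proves the lemma only when $P$ meets $C$ in at most two vertices.
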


\begin{proof}

In what follows, the indices of the vertices in $C$ are in $\mathbb{Z}_{2k}$.
We denote by \(C(y_i,y_j)\) (resp. $P(v_i,v_j)$) the path \(y_i\cdots y_j\) (resp. \(v_i\cdots v_j\)).
Let \(x_0,\ldots,x_r\) be the vertices of \(V(C)\cap V(P)\)
in the order that they appear when reading \(P\) starting from \(v_0\).
Suppose without loss of generality that \(x_0 = y_0\).

Suppose that $r=0$; namely, $|V(C)\cap V(P)|=1$.
Let \(t\) be the length of \(P(x_0,v_{\ell})\)
and without loss of generality suppose that $t \leq \lfloor\ell/2\rfloor \leq k$.
Since no end vertex of \(P\) is in \(V(C)\),
we have \(1 \leq t \leq k\).	
It implies that the paths \(P_1 = P(v_0,x_0) \cup C(y_0,y_{t})\) and \(P_2 = C(y_t,y_0) \cup P(x_0,v_{\ell})\) 
have lengths \(\ell\) and \(2k\), respectively.
Moreover, each of \(P_1\)  and \(P_2\) contains an end vertex of $P$. 
Thus, they form the desired decomposition of $H$.

From now on, we  can assume that { \(r>0\), i.e, \(|V(C)\cap V(P) > 1\)}.	
We claim that \(r\leq 3\) and  if \(r= 3\), then \(k\leq 4\).

\begin{claim}\label{claim:kgeq5}  
	\(r\leq 3\) and  if \(r= 3\), then \(k\leq 4\).
\end{claim}
	\noindent
	\emph{Proof.}
	Assume the opposite for the first conclusion, i.e. $r\geq 4$ (resp. \(r= 3\) for the second conclusion).
	Then, there are distinct \(x_0,x_1,x_2,x_3,x_4\) in \(V(C)\cap V(P)\) (resp. \(x_0,x_1,x_2,x_3\) in \(V(C)\cap V(P)\)).
	Due to the assumption on the length of $P$ and to that no end vertex
	of $P$ is in $V(C)$, \(P(x_0,x_{4})\) (resp. \(P(x_0,x_{3})\)) has length at most $2k-1$.
	For $r\geq 4$, using Lemma~\ref{lemma:chord}, we obtain that the length of \(P(x_0,x_4)\) 
	is at least \(2(2k-3)\) and thus, $k < 3$, a contradiction.
	For $r= 3$, again using Lemma~\ref{lemma:chord}, we obtain that the length of \(P(x_0,x_4)\) 
	is at least \(3k-5\) and thus, $k \leq 4$, as desired.~$\Box$ \bigskip

We divide the rest of the proof depending on $r\in\{1,2,3\}$. \smallskip

\noindent
	\textbf{Case 1: \(r = 1\).}	
	Let \(l_1,l_2,l_3\) be the lengths of \(P(v_0,x_0),P(x_0,x_1),P(x_1,v_\ell)\),
	respectively, and let \(c_1,c_2\) be the lengths of \(C(x_0,x_1)\) and \(C(x_1,x_0)\),
	respectively.
	We suppose without loss of generality that \(l_1 \leq l_3\) and \(c_2 \leq c_1\).
	Note that \(c_1 \geq k\).
	By Lemma~\ref{lemma:chord}, we have \(l_2 \geq k-2\).
	Thus \(l_1 + l_3 \leq k+3\) and \(1 \leq l_1 \leq (k+3)/2\).
	Since \(k \geq 3\), we have \(l_1 \leq k\).
	Moreover, \(l_1 = k\) if and only if \(k=3\).
	If \(l_1 < k\), we have \(l_1 < c_1\).
	Let \(P_1 = C(y_0,y_{l_1}) \cup P(x_0,v_\ell)\), and \(P_2 = P(v_0,x_0) \cup C(y_{l_1},y_0)\).
	Note that the only vertex in common between \(C(y_0,y_{l_1})\) and \(P\) is \(x_0 = y_0\).
	Therefore, \(P_1\) is a path of length \(l_1 + l_2 + l_3 = \ell\).
	Also, the only vertex in common between \(C\) and \(P(v_0,x_0)\) is \(x_0\), 
	and since \(l_1 \geq 1\), we have that \(C(y_{l_1},x_0)\) is not a cycle.
	Therefore \(P_2\) is a path of length \(2k\).
		
	Now, suppose \(l_1 = k = 3\).
	Since \(l_1 \leq l_3\), we have \(l_3 = 3\), and \(l_2 = 1\).
	Since \(l_2 = 1\), by Lemma~\ref{lemma:chord}, we have \(c_1 = c_2 = 3\).
	Let \(P_1 = C(y_0,y_2) \cup P(x_0,v_\ell)\), and \(P_2 = P(v_0,x_0) \cup C(y_2,y_0)\).
	Clearly, \(l(P_1)=6\) and \(l(P_2)=7\) and each of \(P_1\) and \(P_2\) contains one of the end vertices of \(P\).

\noindent
	\textbf{Case 2: \(r=2\).}
	Let \(x_0 = y_0\), \(x_1 = y_i\), and \(x_2 =y_j\).
	We suppose without loss of generality that \(0 < i < j\).
	Let \(l_1,l_2,l_3,l_4\) be the lengths of \(P(v_0,x_0),P(x_0,x_1),P(x_1,x_2),P(x_2,v_\ell)\),
	and let \(c_1,c_2,c_3\) be the lengths of \(C(y_0,y_i), C(y_i,y_j),C(y_j,y_0)\){, respectively}.
	If \(c_1 > l_1\), then we make \(P_1 = C(y_0,y_{l_1}) \cup P(x_0,v_\ell)\), 
	and \(P_2 = P(v_0,x_0) \cup C(y_{l_1},y_0)\).
	Clearly, \(l(P_1)=\ell\) and \(l(P_2)=2k\).	
	Thus, we can suppose \(c_1\leq l_1\).
	Consider the cycles \(P(x_0,x_1) \cup C(y_0,y_i)\) of length \(l_2 + c_1\),
	and \(P(x_1,x_2) \cup C(y_i,y_j)\) of length \(l_3 + c_2\).
	Since the girth of \(H\) is at least \(2k-2\), we have \(l_2 + c_1 \geq 2k-2\)
	and \(l_3+c_2 \geq 2k-2\),
	from which we obtain $c_1 + c_2 + l_2 + l_3 \geq 4k-4$.
	Moreover, since \(P\) is a path of length at most \(2k+1\), 
	we have	$l_1 + l_2 + l_3 + l_4 \leq 2k+1$.
	Thus, subtracting~ this inequality from~the previous one, we obtain
$c_1 + c_2 - l_1 - l_4 \geq 2k -5.
	$
	Since \(k\geq 3\), we have \(c_1 + c_2 - l_1 - l_4 > 0\), 
	which implies \(c_2 - l_4 > l_1 - c_1\).
	Since \(l_1 \geq c_1\), we have \(c_2 > l_4\).
	Thus, the paths \(C(y_{k-l_4},y_0) \cup P(v_0,x_2)\), 
	and \(P(x_2,v_\ell) \cup C(y_j,y_{k-l_4})\)
	of lengths \(\ell\), and \(2k\), respectively form the desired decomposition.

\noindent
	\textbf{Case 3: \(r = 3\).}
	Let \(l_1, \ldots,l_5\) be the lengths of 
	\(P(v_0,x_0)\), \(P(x_0,x_1)\), \(P(x_1,x_2)\), \(P(x_2,x_3)\), 
	\(P(x_3,v_\ell)\), respectively.	
	Due to Claim~\ref{claim:kgeq5}, we can suppose \(k\leq 4\).
	First, suppose \(k = 4\).
	We state that the following hold 
	\[l_1 = l_5 = 1,\, l_2 = l_4 = 2,\, l_3 = 3.\]
	By Lemma~\ref{lemma:chord}, we have \(l_2,l_3,l_4 \geq 2\), \(l_2+l_3\geq 5\) and \(l_3 + l_4 \geq 5\).
	Therefore, we have \(l_2 + l_3 + l_4 \geq 7\).
	Since \(l_1\geq 1\),  \(l_5 \geq 1\) and \(\ell \leq 9\),
	we have \(l_1 = l_5 = 1\), and \(l_2 + l_3 + l_4 = 7\).
	Moreover, \(\ell = 9\).
	Note that if \(l_3 \leq 2\), then \(l_2 = 3\) and \(l_3 = 3\), 
	implying \(l_2+l_3+l_4 \geq 8\), a contradiction.
	Therefore, we have \(l_3 \geq 3\) and \(l_2 = l_3 = 2\).
	
	Now, since \(l_2 = 2\), we have that \(x_1 = y_4\).
	Since \(l_3 = 3\), and \(x_2 \neq x_0\), we have \(x_2 \in \{y_1,y_7\}\).
	By symmetry, we can suppose \(x_2 = y_1\).
	Since \(l_4 = 2\), we have that \(x_3 = y_5\).
	Thus, \(y_7\) is not a vertex of \(P\).
	Let \(P_1 = (P \setminus x_0v_0) \cup x_0y_7\) and \(P_2 = (C \setminus x_0y_7) \cup x_0v_0\).
	Clearly \(P_1\) and \(P_2\) are 
	two paths whose lengths are in \(\{2k,\ell\}\).
	Moreover, each of \(P_1\) and \(P_2\) contains one of the end vertices of \(P\).
	
	We now suppose \(k=3\).
	By Lemma~\ref{lemma:chord},
	we have \(l_2 + l_3 \geq 3\). 
	Therefore, \(l_2 + l_3 + l_4 \geq 4\)
	and thus, at least one of \(l_1=1\), \(l_5=1\) holds.
	Suppose, without loss of generality, that \(l_1=1\).
	We claim that \(x_1 \notin \{y_1,y_5\}\).
	In fact if \(x_1 \in \{y_1,y_5\}\), then \(l_2\geq 3\),
	otherwise \(P(x_0,x_1) \cup y_0x_0\) would be a cycle with length smaller than \(4\).
	By Lemma~\ref{lemma:chord}, we have \(l_3+l_4 \geq 3\), hence \(l_2 + l_3 + l_4 \geq 6\)
	and \(\ell\geq 8\).
	Therefore, \(x_1 \notin \{y_1,y_5\}\).
	On the other hand, if a vertex \(y\) in \(\{y_1,y_5\}\) is not a vertex of \(P\),
	then \(P_1 = (P \setminus x_0v_0) \cup x_0y\) and \(P_2 = (C \setminus x_0y) \cup x_0v_0\)
	decompose \(H\) into paths of lengths in \(\{6,\ell\}\).
	Thus, we may suppose \(y_1,y_5 \in V(P)\).
	Since \(r = 3\), we have \(\{y_1,y_5\} = \{x_2,x_3\}\).
	Since \(P(x_2,x_3) \cup C(y_5,y_1)\) induce a cycle in \(H\),
	we have that \(l_4\geq 2\).
	By Lemma~\ref{lemma:chord}, we have \(l_2 + l_3 \geq 3\).
	Since \(l_2 + l_3 + l_4 \leq 5\), we have \(l_2 + l_3 = 3\), \(l_4 = 2\),
	and \(l_1 = l_5 = 1\).
	Let \(v\) be the neighbor of \(x_3\) in \(P(x_2,x_3)\).
	Since \(l_4 = 2\), we have that \(v\) is not a vertex of \(C\).
	Let \(P_1 = (P \setminus \{v_0x_0,vx_3\})\cup x_0x_3\) and \(P_2 = (C \setminus x_0x_3) \cup \{vx_3., v_0x_0\}\).
	Clearly, \(P_1\) is a path of length \(6\), and \(P_2\) is a path of length \(\ell = 7\).
	Moreover, each of \(P_1\) and \(P_2\) contains one of the end vertices of \(P\).
\end{proof}

We now extend Lemma~\ref{lemma:sccp}.

\begin{lemma}\label{lemma:sccp6.2}
	Let \(C\) be a cycle of length \(2k\)
	and \(P\) be a path of length \(2k\)
	such that $E(C)\cap E(P)= \emptyset$ and $V(C)\cap V(P)\neq \emptyset$.
	If \(H:= C\cup P\) has girth \(2k-2\), 
	then \(H\) can be decomposed into two paths whose lengths are in \(\{2k-1,2k,2k+1\}\).
\end{lemma}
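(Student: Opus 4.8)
The plan is to reduce to Lemma~\ref{lemma:sccp} whenever possible, and to treat the genuinely new feature of this statement---that an end vertex of $P$ may lie on $C$---by a direct cut-and-reattach argument. First I would dispose of the easy case: if \emph{no} end vertex of $P$ lies on $V(C)$, then with $\ell = 2k\in\{2k-1,2k,2k+1\}$ and using that the girth of $H$ is at least $2k-2$, all hypotheses of Lemma~\ref{lemma:sccp} are met, so $H$ decomposes into two paths of lengths in $\{2k,2k\}\subseteq\{2k-1,2k,2k+1\}$ and we are done. Hence from now on I assume some end vertex of $P$ lies on $C$ and, relabelling $C$ if necessary, that $v_0=y_0$.

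The main construction is to cut $C$ at one of the two edges incident with $y_0$ and reattach the resulting arc to $P$. Concretely, if $y_1\notin V(P)$ I set $P_1 = y_0y_1\cup P$ and $P_2 = C-y_0y_1 = y_1y_2\cdots y_{2k-1}y_0$. Then $P_1$ has vertex set $\{y_1\}\cup V(P)$, whose vertices are distinct precisely because $y_1\notin V(P)$, so $P_1$ is a path of length $2k+1$; and $P_2$ is a path of length $2k-1$. Since $E(P_1)=\{y_0y_1\}\cup E(P)$ and $E(P_2)=E(C)\setminus\{y_0y_1\}$ are disjoint and cover $E(H)$, the pair $\{P_1,P_2\}$ is the desired decomposition. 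Symmetrically, if $y_{2k-1}\notin V(P)$ one cuts $y_{2k-1}y_0$ instead. Note that this works whether or not the second end vertex of $P$ lies on $C$, so it remains only to rule out the case $y_1,y_{2k-1}\in V(P)$. Here I would apply Lemma~\ref{lemma:chord} to the subpaths $P(y_0,y_1)$, $P(y_0,y_{2k-1})$ and $P(y_1,y_{2k-1})$, combined with the short $C$-arcs joining their end vertices (of lengths $1$, $1$ and $2$ respectively): the resulting cycles have length at least $2k-2$, which forces $y_1,y_{2k-1}$ to appear on $P$ at $P$-distance at least $2k-3$ from $v_0$ and at mutual $P$-distance at least $2k-4$. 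Since $l(P)=2k$, this is compatible only when $k=3$.

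The main obstacle is therefore the finite family of configurations with $k=3$ in which both neighbours of $y_0$ on the hexagon $C$ lie on the length-$6$ path $P$, so that reattachment is blocked at both edges incident with $y_0$. I expect to handle these exactly as in Case~3 ($k=3$) of the proof of Lemma~\ref{lemma:sccp}: the bounds above pin the positions of $y_1$ and $y_{2k-1}$ along $P$ (they occur at $P$-distance $3$ and $5$ from $v_0$), so $P$ meets $C$ in three prescribed vertices, and this abundance of shared vertices permits a rerouting into two paths of length $2k=6$. For instance, when $y_5=v_3$ and $y_1=v_5$ one checks directly that $P_1 = v_6v_5v_0v_1v_2v_3v_4$ (using the chord $y_0y_1$) and $P_2 = v_4v_5y_2y_3y_4y_5y_0$ (using the chord $v_4y_1$ and the arc $y_1y_2\cdots y_0$) form such a decomposition, both of length $6$; the remaining sub-configurations, finitely many up to the reflection of $C$, are entirely analogous. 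Throughout I use only that the girth of $H$ is at least $2k-2$, which is exactly what makes Lemma~\ref{lemma:chord} applicable to the subpaths and keeps the small-$k$ case analysis finite.
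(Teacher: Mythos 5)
Your strategy is essentially the paper's own: handle the case with no end vertex of $P$ on $C$ by Lemma~\ref{lemma:sccp}, and otherwise cut $C$ at an edge incident to $y_0=v_0$ and reattach, which succeeds unless both $y_1$ and $y_{2k-1}$ lie on $P$. Your treatment of the ``exactly one end on $C$'' case by the same chord swap is fine (the paper instead attaches a pendant edge $zz'$ and reinvokes Lemma~\ref{lemma:sccp} with $\ell=2k+1$), and your girth computation correctly shows that the obstruction case forces $k=3$; indeed for $k\geq 4$ your argument is complete, and even shows that the hard case is vacuous, whereas the paper handles $k\geq 4$ via its neighbour-swap Claim~\ref{claim:c6p62}.

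The gap is in the $k=3$ endgame. Writing $a,b$ for the $P$-distances of $y_1,y_{2k-1}$ from $v_0$, your bounds are $a,b\geq 3$, $|a-b|\geq 2$, $\max(a,b)\leq 6$, and these do \emph{not} pin $\{a,b\}=\{3,5\}$: the pairs $(3,6)$ and $(4,6)$ also satisfy them, and both are realizable with girth $4$ (e.g.\ $v_0=y_0$, $v_3=y_1$, $v_6=y_5$). Worse, ``entirely analogous'' conceals the real difficulty: your rerouting uses the edge from $y_1$ (or $y_5$) to its $P$-neighbour on the $v_0$-side, and that neighbour may itself lie on $C$, blocking that particular swap. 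Concretely, take $v_0=y_0$, $v_2=y_4$, $v_3=y_1$, $v_6=y_5$ with $v_1,v_4,v_5$ new vertices; this graph has girth $4$, but the swap at $y_1$ fails since $C-y_0y_1+y_1v_2$ is not a path ($v_2\in V(C)$), and one must instead swap at $y_5$. So what actually needs proof is that in every surviving configuration at least one of the two swaps is available. The paper establishes exactly this, uniformly and without enumeration: Claim~\ref{claim:c6p62} is your swap, and if both swaps are blocked then triangle-freeness forces the two blocking neighbours to be $y_4$ and $y_2$, whence the five segments of $P$ have lengths at least $2,1,3,1,0$, giving $l(P)\geq 7>6$, a contradiction. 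Your proof needs this argument (or an honest finite check of all configurations your pinning claim missed, choosing the correct swap in each); as written, the $k=3$ analysis rests on a false intermediate claim and an unverified ``analogous'' step.
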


\begin{proof}
The case that no end vertex of $P$ is in $V(C)$ holds due to Lemma~\ref{lemma:sccp}. 
Suppose now that exactly one end vertex of $P$, say $z$, is in \(V(C)\).  
Let $P'$ be the path that consists of the union of $P$, 
a new vertex \(z'\) and a new edge \(zz'\). As $P'$ has length~\(2k+1\),
by Lemma~\ref{lemma:sccp}, the graph $C \cup P'$ has a decomposition
into two paths of lengths \(2k\) and \(2k+1\).
Since the edge \(zz'\) is an end edge of one of these paths, by removing
it from such path we obtain a decomposition of $H$ into two paths 
of length \(2k-1\) and \(2k+1\), or into two paths, each of length~\(2k\).

We now assume that both end vertices of $P$ are in \(V(C)\).
Moreover, we can assume that
$y_1, y_{2k-1} \in \{x_0,\cdots,x_r\}$.
To see this, we suppose, without loss of generality, that \(y_1 \notin \{x_0,\cdots,x_r\}\).
Hence \(C \setminus x_0y_1\) and \(P \cup x_0y_1\) are paths that decompose
	\(H\), whose lengths are \(2k-1\) and \(2k+1\), respectively.

%

The next claim helps to complete the proof of the lemma.

\begin{claim}\label{claim:c6p62}
If \(x \in \{y_1,y_{2k-1}\}\)
	and the neighbor \(x'\) of \(x\) in the path \(P(x_0,x)\) is not in \(V(C)\),
	then $H$ admits a decomposition into two paths, each of length \(2k\).
\end{claim}
\noindent
\emph{Proof.}
The paths \(P(x',x_0) + x_0x + P(x,v_{\ell})\)
	and \(C - x_0x + xx'\) form a $2k$-decomposition of $H$.~$\Box$

%
%


Recall that for every \(i\) in \(\{0,\ldots,r-1\}\), 
the path \(P(x_i,x_{i+1})\) has length at least \(k-2\).
Thus, if \(k\geq 4\), the path \(P(x_i,x_{i+1})\) has length at least \(2\).
Therefore, the neighbor \(x'\) of \(x_{i+1}\) in the path \(P(x_i,x_{i+1})\)
\big(and also in the path \(P(x_0,x_{i+1})\)\big)
is not a vertex of \(C\),
and by Claim~\ref{claim:c6p62}, we can decompose \(H\) into two paths of length \(2k\).

In consequence, we can assume \(k=3\).
Let us assume that $H$ does not have a $\{5,6,7\}$-decomposition 
into paths.
If \(y_1=x_i\), then by Claim~\ref{claim:c6p62},
\(P(x_{i-1},x_i)\) has length \(1\) and thus, since \(H\) is triangle-free \(x_{i-1} = y_4\).
Analogously, if \(y_5=x_j\), then \(x_{j-1} = y_2\).
Suppose without loss of generality that \(i < j\). 
We can write $$P = P(x_0,x_{i-1}) \cup P(x_{i-1},x_i) 
		\cup P(x_i,x_{j-1}) \cup P(x_{j-1},x_j) \cup P(x_j,v_{\ell}).$$
We have \(x_{i-1} = y_4\), thus \(P(x_0,x_{i-1})\) has length at least \(2\),
because $H$ is triangle-free.
In addition,  
we have that \(P(x_i,x_{j-1})\) has length at least \(3\),
because \(x_i = y_1\) and \(x_{j-1} = y_2\).
In consequence, \(P\) has length at least \(7\), a contradiction.
\end{proof}

\begin{lemma}\label{cor:sccp5}
	Let \(C\) be a cycle of length \(2k\)
	and \(P\) be a path of length \(2k-1\) such that $E(C)\cap E(P)= \emptyset$ and $V(C)\cap V(P)\neq \emptyset$.
	If \(H:= C\cup P\) has girth \(2k-2\), 
	then \(H\) can be decomposed into two paths whose lengths are in \(\{2k-1,2k\}\).
\end{lemma}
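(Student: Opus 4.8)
The plan is to mirror the case analysis of Lemma~\ref{lemma:sccp6.2}, organizing by the number of end vertices of $P$ lying on $C$. Since $l(H)=l(C)+l(P)=2k+(2k-1)=4k-1$, any decomposition of $H$ into two paths of lengths in $\{2k-1,2k\}$ automatically uses one path of each length, so it suffices to exhibit such a decomposition. If no end vertex of $P$ lies in $V(C)$, then $P$ has length $2k-1\in\{2k-1,2k,2k+1\}$ and Lemma~\ref{lemma:sccp} applies verbatim, producing two paths of lengths in $\{2k,2k-1\}$. If exactly one end vertex $z$ of $P$ lies in $V(C)$, I would attach a new vertex $z'$ and edge $zz'$ at $z$, obtaining a path $P'$ of length $2k$; appending a pendant edge creates no cycle, so $C\cup P'$ still has girth at least $2k-2$, no end vertex of $P'$ lies in $V(C)$, and $V(C)\cap V(P')\neq\emptyset$. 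Applying Lemma~\ref{lemma:sccp} with $\ell=2k$ decomposes $C\cup P'$ into two paths, each of length $2k$ and each containing exactly one end vertex of $P'$; deleting $zz'$ from the path having $z'$ as an end vertex yields a path of length $2k-1$ and leaves the other untouched, which is the desired decomposition of $H$.

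The real work is the case in which both end vertices of $P$ lie on $C$, and I expect this to be the main obstacle. Put $x_0=v_0=y_0$. I would first reduce to $y_1,y_{2k-1}\in V(C)\cap V(P)$: if, say, $y_1\notin V(P)$, then moving the edge $y_0y_1$ from $C$ to $P$ splits $H$ into the path $C\setminus y_0y_1$ of length $2k-1$ and the path $P\cup y_0y_1$ of length $2k$, already as required. The engine of the argument is the analogue of Claim~\ref{claim:c6p62}: for $x\in\{y_1,y_{2k-1}\}$, if the neighbor $x'$ of $x$ on $P(x_0,x)$ is not in $V(C)$, then $P(x',x_0)+x_0x+P(x,v_\ell)$ and $C-x_0x+xx'$ decompose $H$. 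Here the first path has length $l(P)=2k-1$ and the second has length $2k$, so this is exactly a $\{2k-1,2k\}$-decomposition; the short path that had to be produced by hand in Lemma~\ref{lemma:sccp6.2} now appears for free, precisely because $P$ is one edge shorter.

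For $k\geq 4$, Lemma~\ref{lemma:chord} guarantees that consecutive vertices of $V(C)\cap V(P)$ lie at $P$-distance at least $k-2\geq 2$, so the $P$-neighbor of $y_1$ toward $x_0$ is an internal vertex of $P$ off $C$; the Claim then applies and finishes the case. The genuine difficulty is $k=3$, where $C$ is a $6$-cycle, $P$ has length $5$, $H$ has girth at least $4$, and now consecutive meeting vertices may be $P$-adjacent, so the Claim can fail simultaneously at $y_1$ and at $y_5$.

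To settle $k=3$ I would argue by contradiction, assuming $H$ has no $\{5,6\}$-decomposition; then the Claim must fail at both $y_1$ and $y_5$. Failure at $y_1$ forces its $P$-predecessor to be a $C$-vertex joined to $y_1$ by a chord, and the girth bound forces that vertex to be the antipode $y_4$; symmetrically, the $P$-predecessor of $y_5$ must be its antipode $y_2$. Writing $P$ along its meeting vertices as the concatenation of $P(x_0,x_{i-1})$, $P(x_{i-1},x_i)$, $P(x_i,x_{j-1})$, $P(x_{j-1},x_j)$, $P(x_j,v_\ell)$ with $x_{i-1}=y_4$, $x_i=y_1$, $x_{j-1}=y_2$, $x_j=y_5$, the girth bound forces $l(P(y_0,y_4))\geq 2$ and $l(P(y_1,y_2))\geq 3$, while the two adjacency segments $P(y_4,y_1)$ and $P(y_2,y_5)$ contribute length $1$ each. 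Hence $l(P)\geq 2+1+3+1=7>5$, a contradiction; therefore the Claim applies and $H$ admits the required $\{2k-1,2k\}$-decomposition.
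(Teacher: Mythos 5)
Your proof is correct, but in the only genuinely hard case it takes a different route from the paper's. The paper's own proof is a three-line uniform reduction: attach a pendant edge at \emph{both} ends of \(P\), adding new vertices \(z_1',z_2'\) and edges \(z_1z_1',z_2z_2'\). The extended path has length \(2k+1\), neither of its end vertices lies on \(C\) (they are brand-new vertices), and the girth is unchanged, so Lemma~\ref{lemma:sccp} applies regardless of where the ends of \(P\) sit; each of the two resulting paths (of lengths \(2k\) and \(2k+1\)) ends in one of the pendant edges, and stripping those edges leaves paths of lengths \(2k-1\) and \(2k\). You used exactly this pendant-edge idea, but only at one end and only in the case where exactly one end of \(P\) lies on \(C\); doubling it would have dissolved the both-ends-on-\(C\) case as well, with no case analysis at all. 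Instead, for that case you re-ran the machinery of Lemma~\ref{lemma:sccp6.2}: the reduction to \(y_1,y_{2k-1}\in V(P)\), the analogue of Claim~\ref{claim:c6p62}, Lemma~\ref{lemma:chord} for \(k\geq 4\), and the girth-forced antipodal chords \(y_4y_1\) and \(y_2y_5\) yielding \(l(P)\geq 2+1+3+1=7>5\) for \(k=3\). All of these steps check out: your claim's decomposition \(P(x',x_0)+x_0x+P(x,v_\ell)\) and \(C-x_0x+xx'\) indeed has lengths exactly \(2k-1\) and \(2k\), and your final contradiction is even cleaner than the corresponding one in Lemma~\ref{lemma:sccp6.2} because \(7>5\) is forced outright, whereas for a path of length \(2k\) the bound \(7>6\) is what rules out the configuration. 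So your argument is a valid alternative; what it buys is the extra insight that the \(k=3\) obstruction relevant for length-\(2k\) paths genuinely cannot occur for length-\((2k-1)\) paths, while the paper's proof buys brevity and uniformity, invoking only Lemma~\ref{lemma:sccp} once.
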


\begin{proof}
	{ Let \(z_1,z_2\) be the two end-vertices of \(P\).}
	First, we add two new vertices \(z_1'\), \(z_2'\) and two new edges \(z_1z_1',z_2z_2'\) to \(P\).
	By Lemma~\ref{lemma:sccp}, we obtain a decomposition of this new graph into two paths,
	one of length \(2k\) and the other of length \(2k+1\) such that each path contains
	one of the new vertices $z_1', z_2'$.
	By removing the edges \(z_1z_1'\) and \(z_2z_2'\) from them, 
	we obtain two paths of lengths \(2k-1\) and~\(2k\).
\end{proof}

Finally, we consider a lemma that helps decomposing the union of 2 cycles into paths.

\begin{lemma}\label{lemma:2cycles}
	Let \(C\), $C'$ be cycles of length \(2k\)
	such that $E(C)\cap E(C')= \emptyset$ and $V(C)\cap V(C') \neq \emptyset$.
	If \(H:= C\cup C'\) has girth at least \(2k-2\), 
	then \(H\) can be decomposed into two paths of length \(2k\).
\end{lemma}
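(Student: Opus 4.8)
The plan is to use a common vertex of the two cycles as a pivot and recombine $C$ and $C'$ through it into two paths of length exactly \(2k\). Fix \(v\in V(C)\cap V(C')\). Note first that every vertex of \(H\) has even degree (\(2\) if it lies on a single cycle, \(4\) if it is shared), so in any decomposition of \(H\) into two paths each shared vertex must be internal to both paths, and the two paths must share the \emph{same} pair of endpoints, which are necessarily non-shared (degree-\(2\)) vertices. This motivates the construction: split \(C\) into two arcs at \(v\) and at a second vertex \(u\), split \(C'\) into two arcs at \(v\) and at a vertex \(w\), and form \(P_1\) and \(P_2\) by pairing one \(C\)-arc with one \(C'\)-arc, joined at \(v\), so that both are \(u\)--\(v\)--\(w\) paths. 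If the \(C\)-arcs have lengths \(i\) and \(2k-i\) and the \(C'\)-arcs lengths \(j\) and \(2k-j\), then pairing so that \(i+j=2k\) forces \(l(P_1)=i+j=2k\) and \(l(P_2)=(2k-i)+(2k-j)=2k\), and \(\{P_1,P_2\}\) partitions \(E(H)\). The base case \(|V(C)\cap V(C')|=1\) (a figure-eight) already works by taking \(u\) and \(w\) to be the antipodes of \(v\): the two resulting paths are internally disjoint and have length \(2k\).

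To make this work in general I would control the shared vertices via the girth hypothesis. For any shared vertex \(x\neq v\), the \(C'\)-arc from \(v\) to \(x\) is a path with both endpoints on \(C\), edge-disjoint from \(C\), and contained in \(H\); hence by Lemma~\ref{lemma:chord} it has length at least \(k-2\) (and at least \(2k-3\) if it meets \(C\) internally, which by the length bound forces \(k=3\)). Symmetrically, the \(C\)-arc from \(v\) to \(x\) has length at least \(k-2\). Since the shorter arc of each cycle has length at most \(k\), every shared vertex lies at distance in \(\{k-2,k-1,k\}\) from \(v\) in both \(C\) and \(C'\). Thus all shared vertices other than \(v\) are clustered near the two antipodes of \(v\), there are only boundedly many of them (at most five on each cycle), and non-shared vertices are available near the antipodes to serve as the endpoints \(u,w\).

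The decisive step is then to choose the split points \(u,w\) and the arc-pairing so that both \(P_1\) and \(P_2\) are \emph{simple} paths. Simplicity can fail only if some shared vertex \(x\neq v\) lies on both the \(C\)-arc and the \(C'\)-arc assigned to the same path, i.e. its two occurrences are not separated between \(P_1\) and \(P_2\). Using the clustering above, I would argue that one can always either swap which \(C'\)-arc is paired with a given \(C\)-arc, or shift \(u\) and \(w\) by one or two vertices (keeping the constraint \(i+j=2k\) and keeping the endpoints non-shared), so as to separate the two occurrences of every shared vertex. I expect this last bookkeeping to be the main obstacle: it reduces to a short but careful case analysis over the few admissible placements of shared vertices near the antipodes — exactly the flavour of the case analyses in Lemma~\ref{lemma:sccp} and Lemma~\ref{lemma:sccp6.2}, with the \(k=3\) configurations (where a \(C'\)-arc may run through an intermediate shared vertex) handled separately. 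The girth bound is precisely what keeps this analysis finite, and in each surviving case an explicit choice of the two arc-pairs produces the desired decomposition into two paths of length \(2k\).
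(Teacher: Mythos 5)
Your structural setup is sound (the two paths must share the same pair of endpoints, both non-shared vertices, and every shared vertex is internal to both paths), and your pivot-at-\(v\) framework does contain a valid proof: in fact the paper's own proof is exactly the special case of your construction in which the split points \(u\) and \(w\) are taken \emph{adjacent} to \(v\). But as written your argument has a genuine gap: the decisive step --- choosing \(u\), \(w\) and the arc-pairing so that both paths are \emph{simple} --- is never carried out; you explicitly defer it to ``a short but careful case analysis'' that you expect to work. That step is the entire content of the lemma (the degree-parity observations and the clustering of shared vertices via Lemma~\ref{lemma:chord} are routine), so the proposal does not yet prove the statement. Moreover, your guiding heuristic of placing \(u\) and \(w\) near the antipodes of \(v\) steers you into the hardest territory: by your own clustering observation, the antipodal region is precisely where all the shared vertices live, so every arc may contain shared vertices and the pairing constraints genuinely interact; and for \(k=3\) the clustering restriction is vacuous (every vertex is within distance \(3\) of \(v\)), while configurations such as a neighbor \(y_1\) of \(v\) coinciding with the antipode \(z_3\) of \(v\) on \(C'\) show that an antipode itself can be shared, forcing the shifting-and-swapping you gesture at but do not verify.

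The missing idea that collapses all of this is to take \(u\) and \(w\) to be \emph{neighbors} of \(v\): pair the arc of \(C\) of length \(2k-1\) with the single edge \(vz_1\) of \(C'\), and the edge \(vy_1\) of \(C\) with the arc of \(C'\) of length \(2k-1\). Then \(P_1=(C - vy_1)+vz_1\) and \(P_2=(C'-vz_1)+vy_1\) both have length \(2k\), and simplicity reduces to a single condition: \(y_1\notin V(C')\) and \(z_1\notin V(C)\). The existence of such a pair of neighbors is a short girth argument (this is the claim the paper proves): if a \(C'\)-neighbor of \(v\) lay on \(C\), that matching of an edge of \(C'\) with the shorter \(C\)-arc between its ends would create a cycle of length at most \(k+1\), forcing \(k+1\geq 2k-2\), i.e.\ \(k=3\); and for \(k=3\), girth at least \(4\) forces any such neighbor to equal the antipode \(y_3\), so the two (distinct) \(C'\)-neighbors of \(v\) cannot both lie on \(C\), and symmetrically for the \(C\)-neighbors of \(v\). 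With this choice no analysis of shared vertices near the antipodes is needed at all; I would encourage you to either adopt it or actually execute the case analysis your proposal currently leaves open.
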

\begin{proof}
Let $C=y_0 \cdots y_{2k-1}$ and $C'=z_0 \cdots z_{2k-1}$. Without loss of generality $y_0=z_0$.
First, we claim that  $y_1$ or $y_{2k-1}$ is not in $V(C)\cap V(C')$ and 
$z_1$ or $z_{2k-1}$ is not in $V(C)\cap V(C')$.
Suppose \(z_{2k-1}\) is a vertex of \(C\).
Let \(P\) be a path in \(C\) of length at most $k$ and end vertices \(y_0\), \(z_{2k-1}\).
Note that \(P \cup y_0z_{2k-1}\) is a cycle of length \(c\) at most \(k+1\) in \(H\).
Since the girth of \(H\) is at least \(2k-2\), we have
\[
	k+1 \geq c \geq 2k-2
\]
Therefore,  \(k=3\).
Now, since the girth of \(H\) is at least \(4\),
and \(C\) has length \(6\),
if \(z_i\) is a vertex of \(C\), for \(i \in \{1,5\}\),
then \(z_i = y_3\).
Therefore \(z_1 = z_5 = y_3\),
and \(y_0z_1,y_0z_5\) is a cycle of length \(2\), a contradiction.

Now, suppose without loss of generality, that $y_1$ and $z_1$ are not in $V(C)\cap V(C')$, then the graphs $C - y_0y_1 + y_0z_1$ 
and $C' - y_0z_1 + y_0y_1$ are paths of length \(2k\).
\end{proof}

{
In order to prove Theorem~\ref{theo:main}, we consider the following definition.
\begin{definition}\label{def:complete}
Let \(G\) be a graph in \(\mathcal{G}_k\),
and let \(\mathcal{L}\) be a \(\{2k-1,2k,2k+1\}\)-decomposition of \(G\) with cycles of length \(2k\)
such that the number of paths of length $2k-1$ equals the number of paths
of length $2k+1$. 
We say that \(\mathcal{L}\) is \emph{complete} if the following two conditions hold.
\begin{itemize}
	\item[(i)]	the cycles in \(\mathcal{L}\) are vertex-disjoint; and
	\item[(ii)]	if \(v\) is a vertex of a cycle in \(\mathcal{L}\),
			then \(\mathcal{L}\) is \(k\)-balanced at \(v\).
\end{itemize}
\end{definition}
}

\begin{proof}[\emph{\textbf{Proof of Theorem~\ref{theo:main}}}]
Let \(G \in \mathcal{G}_k\). 
Due to Theorem~\ref{theo:dec6}, there exists a 
\(k\)-balanced decomposition $\mathcal{L}$ of $G$ with each cycle of length \(2k\). 
Actually, because of Lemma~\ref{lemma:2cycles}, we can also assume
that the cycles in \(\mathcal{L}\) are vertex-disjoint.
In consequence, \(\mathcal{L}\) is a complete decomposition of \(G\).
%
%
Among all complete decompositions of $G$, we consider
one, say \(\mathcal{D}\), that minimizes the number of cycles;
note that \(\mathcal{D}\) is not necessarily \(k\)-balanced. 
If \(\mathcal{D}\) has no cycles, then \(\mathcal{D}\) is the desired decomposition.
Therefore, we assume that \(\mathcal{D}\) has at least one cycle. 
Let us first show the following statement.

\begin{claim}\label{claim:ccp}
	For every cycle \(C\) in \(\mathcal{D}\)
	there are at least two paths in \(\mathcal{D}\) of length~\(2k\)
	such that each of them has exactly one end vertex in \(V(C)\).
\end{claim}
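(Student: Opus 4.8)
The plan is to prove Claim~\ref{claim:ccp} by contradiction, exploiting the minimality of $\mathcal{D}$ with respect to the number of cycles. Suppose that some cycle $C$ in $\mathcal{D}$ does not satisfy the conclusion, so that there is at most one path of length $2k$ in $\mathcal{D}$ having exactly one end vertex in $V(C)$. My strategy is to produce, from $\mathcal{D}$, a new complete decomposition with strictly fewer cycles, which contradicts the choice of $\mathcal{D}$. The mechanism for reducing the number of cycles is to merge the cycle $C$ with an incident path (or cycle) via the local rerouting operations supplied by Lemmas~\ref{lemma:sccp}, \ref{lemma:sccp6.2}, \ref{cor:sccp5} and \ref{lemma:2cycles}.

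First I would look at how the $2k$ edges at the vertices of $C$ that do not belong to $C$ are distributed among the other members of $\mathcal{D}$. Since $G$ is $2k$-regular and $C$ uses exactly two edges at each of its $2k$ vertices, there are exactly $2k \cdot (2k-2)$ edge-endpoints at $V(C)$ that are covered by the remaining paths and cycles of $\mathcal{D}$. Because $\mathcal{D}$ is complete, condition~(i) guarantees that no other cycle of $\mathcal{D}$ meets $V(C)$, so every such edge-endpoint belongs to a path. I would then classify these incident paths according to whether $C$ meets them at an end vertex or at an internal vertex, and according to the path's length in $\{2k-1,2k,2k+1\}$. The key point is that a path $Q$ which shares at least one vertex with $C$ and has no end vertex on $C$ is exactly the configuration handled by Lemma~\ref{lemma:sccp} (and its length-specific companions), so $C\cup Q$ can be re-decomposed into two paths; one of the resulting paths inherits an end vertex of $Q$, i.e.\ has exactly one end vertex in $V(C)$. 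Performing this swap destroys the cycle $C$ while preserving the length multiset constraint, and thus lowers the cycle count.

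The heart of the argument is therefore to show that, under the assumption that fewer than two length-$2k$ paths meet $V(C)$ in exactly one end vertex, one of the merge lemmas must nonetheless apply, yielding a contradiction. I would argue that if no incident path had a vertex of $C$ as an internal vertex and an end vertex off $C$ (the hypothesis of Lemma~\ref{lemma:sccp}), then the edges at $V(C)$ would have to be covered entirely by paths ending on $C$; a counting argument on the $2k$ vertices of $C$ together with the balance condition~(ii) and the fact that the number of length-$(2k-1)$ paths equals the number of length-$(2k+1)$ paths would then force the existence of at least two length-$2k$ paths with exactly one end vertex in $V(C)$, contradicting our supposition. Conversely, whenever some incident path does have an internal vertex on $C$, Lemma~\ref{lemma:sccp} (or Lemma~\ref{cor:sccp5}, Lemma~\ref{lemma:sccp6.2} depending on the length) applies and we merge to reduce the cycle count. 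The girth hypothesis $2k-2$ is exactly what lets these lemmas fire, and it must be checked that the relevant $C\cup Q$ still has girth at least $2k-2$, which holds since both lie in $G$.

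The main obstacle I anticipate is the bookkeeping of the balance condition after a swap: when I re-decompose $C\cup Q$ into two new paths, their lengths lie in $\{2k,\ell\}$ where $\ell$ is the length of $Q$, so the equality between the number of length-$(2k-1)$ and length-$(2k+1)$ paths is automatically preserved, but I must verify that condition~(ii) of completeness (balance at vertices lying on the remaining cycles) is not violated at the vertices where the new paths attach. Since the swap is local to $V(C)$ and no other cycle touches $V(C)$ by condition~(i), balance at vertices of the surviving cycles is untouched, and the new decomposition is again complete with one fewer cycle. Reconciling the counting so that the ``at most one length-$2k$ path'' hypothesis genuinely rules out every applicable lemma---and hence forces the desired two length-$2k$ paths---is the delicate part, and I expect it to require a careful case analysis on which vertices of $C$ are end vertices versus internal vertices of the incident paths.
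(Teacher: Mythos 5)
Your overall strategy (contradiction with the minimality of $\mathcal{D}$, reduction of the cycle count via the merge lemmas) is indeed the paper's, but two of your steps fail as stated. First, your dichotomy is misaligned with the hypotheses of the lemmas: Lemma~\ref{lemma:sccp} requires that \emph{no} end vertex of the path lies on $C$, not that the path has ``an internal vertex on $C$ and an end vertex off $C$''. Consequently your ``conversely'' claim---that whenever an incident path has an internal vertex on $C$ some merge lemma applies---is false: a path of length $2k+1$ with one end vertex on $C$ and an internal vertex on $C$ is handled by none of Lemmas~\ref{lemma:sccp}, \ref{lemma:sccp6.2}, \ref{cor:sccp5}, and such paths legitimately survive in the paper's final configuration. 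Second, in the complementary case your purely counting-based argument cannot force two paths of length $2k$: the balance condition~(ii) at a vertex $v$ of $C$ is satisfied by, say, one path of length $2k-1$ and one of length $2k+1$ ending at $v$, together with paths of length $2k$ having \emph{both} ends on $C$, with no path of the desired kind anywhere; the global equality between the numbers of paths of lengths $2k-1$ and $2k+1$ is of no use because it is not local to $C$. The paper removes exactly these obstructions \emph{before} counting: it shows via Lemma~\ref{cor:sccp5} that no path of length $2k-1$ meets $C$ at all, and via Lemma~\ref{lemma:sccp6.2} that every path of length $2k$ meeting $C$ has exactly one end vertex on $C$---both merges apply to paths \emph{ending} on $C$, so your dichotomy never fires them. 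Only after these three structural steps does the parity/balance count at a vertex $v$ of $C$ succeed, and even then an extra argument is needed when exactly two paths end at $v$ (the path $P^*$ through $v$ and balance at a second vertex $u$ of $C$).

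There is a further gap in your completeness bookkeeping: the assertion that ``the swap is local to $V(C)$'' is incorrect. When $C\cup Q$ is re-decomposed, the end vertices of $Q$ become end vertices of the two new paths, and these vertices may lie on \emph{other} cycles of $\mathcal{D}$, arbitrarily far from $C$; condition~(i) says nothing about them. The correct argument, as in the paper, tracks lengths: the new paths have lengths in $\{2k,\ell(Q)\}$ and each contains exactly one end vertex of $Q$, so the number of paths of length $2k+1$ ending at any such vertex does not increase, and balance there is preserved. For the merge of Lemma~\ref{lemma:sccp6.2} (length-$2k$ path with both ends on $C$) one argues differently: both end vertices of $Q$ lie on $C$, hence by condition~(i) on no surviving cycle, so condition~(ii) is vacuous at them even though the new paths may have lengths $2k-1$ and $2k+1$.
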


\noindent
\emph{Proof.}
Let \(\mathcal{P}\) be the set of paths in \(\mathcal{D}\) that have vertices in common with \(C\).
Since the degree of each vertex in \(G\) is greater than \(5\), 
and any two cycles of \(\mathcal{D}\)
are vertex-disjoint, we have \(\mathcal{P} \neq \emptyset\); indeed, for each vertex of $C$
there are at least two paths in $\mathcal{D}$ containing such vertex.

We claim that every path in \(\mathcal{P}\)
has an end vertex in \(V(C)\).
In fact, suppose that there is a path \(P\) in \(\mathcal{P}\) 
of length \(\ell \in \{2k-1,2k,2k+1\}\) that 
has no end vertices in \(V(C)\). 
By Lemma~\ref{lemma:sccp} we can decompose \(C\cup P\)
into paths \(P_1, P_2\) of lengths \(2k, \ell\), respectively. Moreover,
each path $P_1, P_2$ contains an end vertex of $P$. Hence, 	
we have \(\mathcal{D}' = (\mathcal{D} \setminus \{C, P\}) \cup \{P_1,  P_2\}\)
is a decomposition 
of \(G\) into paths of lengths in \(\{2k-1,2k,2k+1\}\) and cycles of length \(2k\),
and the number of cycles in $\mathcal{D}'$ is strictly smaller than the number of cycles
in $\mathcal{D}$. If, in addition, \(\mathcal{D}'\) is complete, then we have a contradiction
to the minimality of $\mathcal{D}$. Indeed, 
since $\mathcal{D}$ satisfies~(i), $\mathcal{D}'$ also does.
Note that, since $\mathcal{D}$ satisfies~(ii) and $\mathcal{D}'$ is trivially
balanced at the common end vertex of $P_1, P_2$,
the only vertices where \(\mathcal{D}'\) may contradict property~(ii) are the end vertices of $P$;
let $v_1, v_2$ be the end vertices of $P$ which are end vertices of $P_1, P_2$, respectively.
Suppose \(\mathcal{D}\) is balanced at $v_1, v_2$.
If the lengths of $P_1, P_2$ are in $\{2k-1,2k\}$, 
then \(\mathcal{D}'\) is balanced at $v_1, v_2$.
If the length of  $P_2$ is $2k+1$, then the length of $P$ is \(2k+1\) as well and thus,
the number of paths of length \(2k+1\) ending at $v_2$ 
in \(\mathcal{D}'\) is the same as the number
of paths of length \(2k+1\) ending at $v_2$ in \(\mathcal{D}\). 
Hence, \(\mathcal{D}'\) satisfies~(ii).

We now claim that \(\mathcal{P}\) does not contain paths of length \(2k-1\).
In fact, if there is a \(P\) of length \(2k-1\) in \(\mathcal{P}\),
due to Lemma~\ref{cor:sccp5}, \(C\cup P\) can be decomposed into paths \(P_1\) and \(P_2\)
of lengths \(2k-1\) and \(2k\). As $\mathcal{D}$ is complete, 
\((\mathcal{D} \setminus \{C, P\}) \cup \{P_1, P_2\}\) is complete as well,
and it has 
less cycles than \(\mathcal{D}\), a contradiction.
	
Suppose that there is a path \(P\) of length \(2k\) in \(\mathcal{P}\)
such that both end vertices are in \(V(C)\).
By Lemma~\ref{lemma:sccp6.2}, 
we can decompose \(C\cup P\) into paths \(P_1\) and \(P_2\) of length in \(\{2k-1,2k,2k+1\}\).
Since both end vertices of \(P\) are in \(V(C)\), because of~(i),
they are not vertices of a cycle in the decomposition \((\mathcal{D} \setminus \{C, P\}) \cup \{P_1,  P_2\}\),
and thus, it is complete; a contradiction to the minimality of the number of cycles in $\mathcal{D}$.
Therefore, every path of length \(2k\) in \(\mathcal{P}\) has exactly one end vertex that is a vertex of \(C\).

We conclude that if \(P\) is an element of \(\mathcal{P}\),
then one of the following happens:
either	\(P\) has length \(2k+1\) and at least one end vertex of \(P\)
is in \(V(C)\), or \(P\) has length \(2k\) and exactly one end vertex of \(P\)
is in \(V(C)\).
Therefore, there exists a vertex $v$ in $V(C)$ that is an end vertex of a path in $\mathcal{P}$.
Since the degree of $v$ is even (recall it is \(2k\)) and $\mathcal{D}$ is balanced at $v$, there 
is a positive even number of paths
in $\mathcal{P}$ ending at $v$. Furthermore, at most half of them have length \(2k+1\); 
that is, at least half of them have length \(2k\). 
Thus, if there are at least four paths, the result follows. 
Therefore, we suppose that exactly two paths $P, P'$ end at $v$.
If the length of each of these paths is \(2k\), the claim holds. 
If not, one of them has length \(2k\) and the other one has length \(2k+1\). 
Since the degree of $v$ is at least \(6\), there is a path $P^*$
that contains $v$ as an internal vertex and $P^* \notin \{P,P'\}$. 
If $P^*$ has length \(2k\), the claim follows. 
If $P^*$ has length \(2k+1\), then there is an end vertex $u$ of $P^*$ in $V(C)$ and $u \neq v$. 
As $\mathcal{D}$ is balanced at $u$, there is a path  $P''$ of length \(2k\) with $u$ 
as an end vertex. 
Since $P''$ has exactly one end vertex in $V(C)$, we have $P'' \notin \{P,P'\}$ and the result follows.
$\Box$

\smallskip

Let \(P\) be a path of length \(2k\) in \(\mathcal{D}\). 
This path exists due the assumption of the existence
of cycles and Claim~\ref{claim:ccp}.
We observe that the number $N$ of cycles in \(\mathcal{D}\)
that contain an end vertex of \(P\) is in \(\{0,2\}\).
In fact, since cycles in \(\mathcal{D}\) are vertex-disjoint, 
we have \(N \leq 2\) and
if \(N = 1\), then, by Lemma~\ref{lemma:sccp6.2},
we can obtain a complete decomposition of \(G\) with less cycles than $\mathcal{D}$, a contradiction
to the minimality of $\mathcal{D}$.
	
Now, consider the auxiliary graph \(K\) with 
the set of cycles in \(\mathcal{D}\) as vertex set of \(K\),
and such that two vertices \(C_i\), \(C_j\) form an edge
if and only if there exists a path of length \(2k\) in \(\mathcal{D}\) with an end vertex in $C_i$
and an end vertex in $C_j$.
Because of the previous observation and by Claim~\ref{claim:ccp}, 
the minimum degree of \(K\) is \(2\) and thus,
\(K\) contains a cycle \(C_0C_1\cdots C_{t-1}C_0\).
For each \(i \in \mathbb{Z}_{t}\), let \(P_i\) be a path in 
\(\mathcal{D}\) such that each cycle $C_i$, $C_{i+1}$
contains an end vertex of $P_i$.
By Lemma~\ref{lemma:sccp6.2}, for each \(i \in \mathbb{Z}_{t}\),
there exists a decomposition of \(C_i\cup P_i\) into two paths of lengths in $\{2k-1,2k,2k+1\}$ 
and thus, we can obtain a decomposition of \(\bigcup_{i \in \mathbb{Z}_t} (C_i\cup P_i)\) into $2t$
paths of lengths in $\{2k-1,2k,2k+1\}$ such that the number
of paths of length $2k-1$ equals the number of paths of length $2k+1$. 
This yields a complete decomposition of \(G\)
with $t$ cycles less than \(\mathcal{D}\), 
a contradiction to the minimality of \(\mathcal{D}\).
\end{proof}

\section{Proof of Lemma~\ref{lem:4}}\label{sub:techlemmas}

We recall Lemma~\ref{lem:4}. In the following, $G \in \mathcal{G}_k$, 
$M$ is a perfect matching of $G$ and $\mathcal{P}$ is a $(2k-1)$-decomposition
of $G-M$ into paths.

\begin{lemmarec}[Lemma~\ref{lem:4}]
Let $e_{i-1}P_i \cdots P_{i+2} e_{i+2}$ be an $(M,\mathcal{P})$-alternating trail
and $X$ be an extension of $P_ie_iP_{i+1}$. 
The following two statements hold.
\begin{itemize}
 \item[\emph{(i)}] If $X$ does not contain an exceptional sequence, then $X$ has a 
 quasi-$k$-balanced decomposition into 2 paths. 
 \item[\emph{(ii)}] If $X$ is an exceptional sequence (in particular, $k=3$),
 then $e_{i-1} \neq e_{i+1}$.
 If, in addition, $X = P_ie_iP_{i+1}e_{i+1}$ (analogously for $e_{i-1}P_ie_iP_{i+1}$), 
 then neither $e_iP_{i+1}e_{i+1}P_{i+2}$, nor $P_{i+1}e_{i+1}P_{i+2}e_{i+2}$ is exceptional. 
\end{itemize}
\end{lemmarec}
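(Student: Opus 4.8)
The plan is to reduce the whole statement to a single local invariant: the degree of a vertex inside $X$. Write $P_i=a_0\cdots a_{2k-1}$ and $P_{i+1}=b_0\cdots b_{2k-1}$ with $e_i=a_{2k-1}b_0$, so that, when they occur in $X$, the edge $e_{i-1}$ meets $P_i$ at $a_0$ and $e_{i+1}$ meets $P_{i+1}$ at $b_{2k-1}$. Since each of $P_i,P_{i+1}$ contributes at most $2$ to the degree of any vertex and $M$ contributes at most $1$, every vertex of $X$ has degree at most $5$, and it has degree exactly $5$ only if it is an internal vertex of both $P_i$ and $P_{i+1}$ and is incident to one of the extending edges $e_{i-1},e_{i+1}$. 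The guiding observation is that a graph containing a vertex of degree at least $5$ has no decomposition into two paths, because two paths cover at most $4$ edges at any vertex. Hence statement~(i) amounts to exhibiting a split whenever $X$ has no degree-$5$ vertex, and the core of the matter is that such a vertex forces the exceptional configuration.

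For (i) I would first record the restrictions the girth places on matching edges that ``return'' to their own path. If $P_i$ is malicious for $e_i$, say $b_0=a_s$, then $a_s\cdots a_{2k-1}$ closed by $e_i$ is a cycle of length $2k-s$, so $s\le 2$; symmetrically, if $P_{i+1}$ is malicious for $e_i$ then $a_{2k-1}\in\{b_{2k-3},b_{2k-2}\}$, and the extending edges $e_{i-1},e_{i+1}$ obey the analogous bounds. These estimates make the configuration finite. I would then read $X$ as a walk and, depending on which end paths are malicious for which edges, either split $X$ at a single vertex or perform a short local reroute around a malicious attachment, producing two paths whose lengths lie in $\{2k-1,2k,2k+1\}$ and sum to $|E(X)|$; the girth bounds on $V(P_i)\cap V(P_{i+1})$ guarantee the pieces are genuinely paths. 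For the balance condition, note that a path of length at most $2k$ never spoils $k$-balance, so it suffices to arrange that any path of length $2k+1$ has its free end at the far endpoint of $e_{i-1}$ or $e_{i+1}$ --- precisely the vertices exempted in the definition of a quasi-$k$-balanced decomposition --- and shares its other end with a shorter path. This makes the resulting decomposition quasi-$k$-balanced.

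For (ii) I would work directly from the template of Definition~\ref{def:exceptional}: with $k=3$ one has $e_i=a_5b_0$ and $b_0=a_1$, $e_{i+1}=b_5b_2$ with $b_2=a_3$ and $b_4=a_5$. The inequality $e_{i-1}\neq e_{i+1}$ follows from the fact that every vertex is the end of exactly one path of $\mathcal{P}$: the matching edge at $a_3$ is $e_{i+1}=b_5a_3$, while $e_{i-1}$ is the matching edge at $a_0$; equality would force $a_0\in\{b_5,a_3\}$, and $a_0\neq a_3$, whereas $a_0=b_5$ would make this vertex an end of both $P_i$ and $P_{i+1}$. For the second part I would compare the two shifted windows against the template. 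The template requires the matching edge joining the two paths to meet the second vertex of the first path ($y_0=x_1$) together with the identification $y_4=x_5$; since $e_{i+1}$ meets $P_{i+1}$ at its end $b_5$ and at the internal vertex $b_2$, matching either orientation of the template forces an equality between two distinct vertices of a single path (for instance $b_1=b_2$), a contradiction. Hence neither $e_iP_{i+1}e_{i+1}P_{i+2}$ nor $P_{i+1}e_{i+1}P_{i+2}e_{i+2}$ is exceptional.

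The hard part is the structural claim underlying (i): that a degree-$5$ vertex can occur only as the exceptional configuration, and in particular only when $k=3$. A vertex $v$ of degree $5$ is the common end of an extending matching edge, say $e_{i+1}=b_{2k-1}v$, and is simultaneously an internal vertex $a_p$ of $P_i$ and $b_q$ of $P_{i+1}$. The malicious cycle through $e_{i+1}$ forces $q\le 2$, and the cycle $a_p\cdots a_{2k-1}\,e_i\,b_0\cdots b_q$ forces $p\le q+2$; the main work is to feed these and the remaining short cycles back into one another to pin down the identifications $a_1=b_0$, $a_3=b_2$, $a_5=b_4$ and to show that for $k\geq 4$ the forced cycles would drop below the girth bound, so that no degree-$5$ vertex exists and the split of (i) always succeeds. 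I expect this girth bookkeeping, rather than the constructions themselves, to be the principal obstacle.
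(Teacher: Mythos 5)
Your part (ii) is essentially sound and close to the paper's own argument: the paper rules out the shifted windows by comparing lollipop types against the template of Definition~\ref{def:exceptional}, while you phrase the same comparison as a forced equality between two distinct vertices of a single path; your derivation of $e_{i-1}\neq e_{i+1}$ from the fact that every vertex is the end of exactly one path of $\mathcal{P}$ is also correct. The problem is part (i), where your proposal has two gaps, and the first one is fatal.

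Your whole strategy for (i) rests on the structural claim that a degree-$5$ vertex forces the exceptional configuration (and is impossible for $k\geq 4$), which you explicitly defer as ``the hard part.'' This claim is not merely unproven: it cannot be derived from the girth bound, which is the only hypothesis you invoke. Take $k=3$, $P_i=a_0\cdots a_5$, $P_{i+1}=b_0\cdots b_5$, $e_i=a_5b_0$ with $b_0\notin V(P_i)$ and $a_5\notin V(P_{i+1})$, let the two paths meet only in the single vertex $a_2=b_1$, and let $e_{i+1}=b_5b_1$. Every cycle of $X=P_ie_iP_{i+1}e_{i+1}$ has length $5$, so the girth condition holds; the vertex $a_2=b_1$ has degree $5$; your constraints are met with $q=1$, $p=2$; and $X$ is not exceptional (it has $11$ vertices and $12$ edges, while the exceptional graph has $9$ vertices and $12$ edges). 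The analogous configuration with $v=b_1=a_2$ exists for every $k\geq 4$ as well, so the claimed girth contradiction for large $k$ also fails. What is missing is the hypothesis the paper builds in from the start and under which the lemma is actually applied (Claim~\ref{niceext} and the cyclic case): that $e_i$ is \emph{trapped}, i.e.\ both $P_i$ and $P_{i+1}$ are malicious for $e_i$ in the sense of Definition~\ref{def:trapped}. The identifications $a_1=b_0$ and $a_5=b_4$ that you hope to extract by ``girth bookkeeping'' encode exactly this maliciousness (girth only pins down the positions); the paper's proof presupposes it throughout, since its notation $P(x,x')$, $P'(x,x')$ and Claim~\ref{lem:3} only make sense when $x,x'\in V(P_i)\cap V(P_{i+1})$. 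Without that input your program cannot close; with it, you would be redoing the paper's analysis.

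Second, even granting the structural claim, the converse half of your reduction --- that absence of a degree-$5$ vertex yields a quasi-$k$-balanced decomposition into two paths --- is compressed into one sentence (``split $X$ at a single vertex or perform a short local reroute''), but this is where nearly all of the paper's work lies: Claim~\ref{lem:3} classifies the possible intersection patterns of $P_i$ and $P_{i+1}$ (Figure~\ref{fig:pepconfig}), and Cases 1--3 of the paper's proof then construct, pattern by pattern, explicit decompositions, checking in each case both that the two pieces are genuinely paths and that any piece of length $2k+1$ ends at an odd-degree vertex of an extension edge, as quasi-balance requires. Maximum degree at most $4$ is necessary but not sufficient for a two-path decomposition, so this half cannot be waved through either. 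Your degree-$5$ observation is a nice way to see \emph{why} the exceptional configuration is an obstruction, but as it stands the proposal contains neither of the two implications it needs.
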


Let $P$ be a path and $x$, $y$ be vertices of $P$. In what follows, the notation
$P(x,y)$ stands for the subpath of $P$ with end vertices $x$ and $y$;
we refer to it as a \emph{segment} of $P$; 
{if \(x=y\), then}
$P(x,y)$ is simply a vertex of $P$.

Recall that for a trapped sequence  $PeP'$,
it holds that \(e \subset V(P)\cap V(P')\).
The following claim follows due to the girth condition
on $2k-2$ and since \(P,P'\) do not have common end vertices.

%


\begin{claim}\label{lem:3}
Let $PeP'$ be a trapped sequence, $e{:=}xx'$
and suppose that \(|V(P)\cap V(P')| >2\).
If $k>3$, then \(|V(P)\cap V(P')| =3\),
the length of $P(x,x')$ and of $P(x,x')$ is $2k-2$
and the middle vertex of $P(x,x')$ corresponds to the middle vertex of $P(x,x')$.
If $k=3$, then
$PeP'$ is isomorphic to one of the graphs described in Figure~\ref{fig:pepconfig}. 
\end{claim}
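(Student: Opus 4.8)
The plan is to parametrize both paths and translate every hypothesis into a girth inequality on vertex positions. Write $P = p_0\cdots p_{2k-1}$ and $P' = q_0\cdots q_{2k-1}$, and use that $e=xx'$ joins an end vertex of each path to an internal vertex of the other, so that $x=p_0=q_b$ and $x'=q_0=p_a$ with $1\le a,b\le 2k-2$. Since $\mathcal{P}$ is a decomposition, $P$ and $P'$ are edge-disjoint and $e\notin E(P)\cup E(P')$; hence $e\cup P(x,x')$ and $e\cup P'(x,x')$ are simple cycles of lengths $a+1$ and $b+1$. The girth bound $2k-2$ then forces $a,b\ge 2k-3$, so $a,b\in\{2k-3,2k-2\}$ and the two ``tails'' $P(x',p_{2k-1})$ and $P'(x,q_{2k-1})$ have length at most $2$. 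This first step is routine.

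Next I would pick an extra common vertex $z=p_i=q_{i'}$, which exists because $|V(P)\cap V(P')|>2$; to keep the cycles I build simple, I choose $z$ to be the common vertex nearest $x'$ along $P'$. Pairing the chord-paths $P(x,z),P(z,x'),P'(x',z),P'(z,x)$ with $e$ and with the arcs of the two fundamental cycles produces a handful of simple cycles, each of length at least $2k-2$. For $a=b=2k-2$ and $z$ internal to both $P(x,x')$ and $P'(x,x')$, two of these cycles (not using $e$) give $i\ge i'$ and $i'\ge i$, hence $i=i'$, while the two cycles through $e$, of lengths $1+i+i'$ and $4k-3-i-i'$, give $2k-3\le 2i\le 2k-1$; the only even value is $2i=2k-2$, so $z=p_{k-1}=q_{k-1}$ is the common middle. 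As there is a unique middle, every extra common vertex coincides with it, giving $|V(P)\cap V(P')|=3$.

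It remains to exclude the other placements of $z$ when $k>3$, and this is where the real work and the main obstacle lie. If $a=2k-3$ (symmetrically $b=2k-3$), then $e\cup P(x,x')$ is a shortest cycle, so any chord-path subtending an arc of length $d$ must itself have length at least $(2k-2)-d$; feeding this into the length budget $l(P)=2k-1$ shows that a common vertex on the main part of both paths is impossible, and I would push the same inequalities to force $a=b=2k-2$. A common vertex on a tail is ruled out by a short-cycle argument: the $P'$-edge it forces, together with $e$ and a tail edge, closes a cycle of length at most $k+1$, which is smaller than $2k-2$ for $k>3$ (for instance, when $a=2k-2$ and $q_{2k-3}=p_{2k-1}$, the triangle $p_0p_{2k-2}p_{2k-1}$). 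The delicate point throughout is guaranteeing internal disjointness so that the girth bound actually applies; I handle this by always selecting extremal common vertices and by treating the finitely many overlap patterns of the two tails explicitly.

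Finally, for $k=3$ the bound $2k-2=4$ leaves slack in all the inequalities above, so several placements survive; I would enumerate the solutions of the resulting small system of inequalities to obtain exactly the finite list of configurations in Figure~\ref{fig:pepconfig}. The main obstacle is thus not any single inequality but the exhaustive, disjointness-sensitive case analysis over the positions of the extra common vertices and the two tails --- straightforward individually, but easy to get wrong collectively.
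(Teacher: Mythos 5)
Your proposal is correct and takes essentially the same approach as the paper: the paper gives no written proof of this claim at all, asserting only that it ``follows due to the girth condition on $2k-2$ and since $P,P'$ do not have common end vertices,'' and your parametrization with girth inequalities is exactly that argument made explicit (your key inequalities---$i\ge i'$, $i'\ge i$, $2k-3\le 2i\le 2k-1$ forcing the common middle when both segments $P(x,x')$, $P'(x,x')$ have length $2k-2$, and the contradictions excluding length-$(2k-3)$ segments and tail intersections for $k>3$---all check out). One simplification: since $P$, $P'$ and $e$ are pairwise edge-disjoint, every closed walk you build is a closed trail and hence decomposes into cycles, so its total length is automatically at least the girth; this removes the need for the ``internal disjointness'' bookkeeping you flag as the delicate point.
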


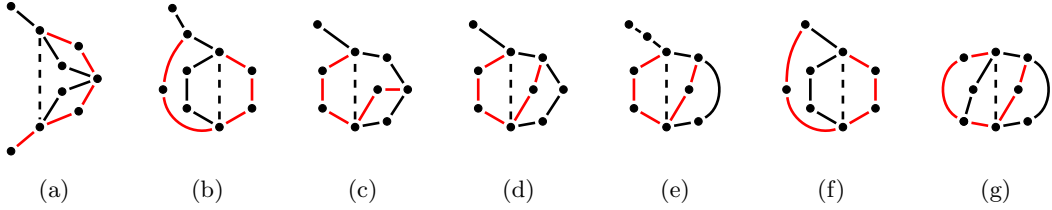
\begin{figure}[h] 
\centering 
\subfigure[]{\begin{tikzpicture}[scale = .5]

	\node (0) [tiny black vertex] at (0,0) {};
	\node (a) [tiny black vertex] at (120:1cm) {};
	\node (b) [tiny black vertex] at (160:1cm) {};
	\node (c) [tiny black vertex] at (200:1cm) {};
	\node (d) [tiny black vertex] at (240:1cm) {};
	\node (1) [tiny black vertex] at (140:2cm) {};
	\node (2) [tiny black vertex] at (220:2cm) {};
	\node (x) [tiny black vertex] at (140:3cm) {};
	\node (y) [tiny black vertex] at (220:3cm) {};	

	\draw[line width=1pt, color=black] (x) -- (1) -- (b) -- (0) -- (c) -- (2);
	\draw[line width=1pt,color=red] (1) -- (a) -- (0) -- (d) -- (2) -- (y);
	
	\draw[line width=1pt,dashed] (1) -- (2);
	
\end{tikzpicture}}\hspace{.5cm}
\subfigure[]{\begin{tikzpicture}[scale = .5]

	\node (0) [tiny black vertex] at (30:1cm) {};
	\node (1) [tiny black vertex] at (90:1cm) {};
	\node (2) [tiny black vertex] at (150:1cm) {};
	\node (3) [tiny black vertex] at (210:1cm) {};
	\node (4) [tiny black vertex] at (270:1cm) {};
	\node (5) [tiny black vertex] at (330:1cm) {};
	
	\node (x) [tiny black vertex] at (120:2.5cm) {};		
	\node (y) [tiny black vertex] at (120:1.7cm) {};	
	\node (z) [tiny black vertex] at (180:1.5cm) {};
	
	\node (w) [] at (0,-1.5) {};

	\draw[line width=1pt,color=black] (x) -- (y) -- (1) -- (2) -- (3) -- (4);
	\draw[line width=1pt,color=red]  (z)  (4) -- (5) -- (0) -- (1);
	\draw[line width=1pt,color=red] (y) to [bend right=15] (z);
	\draw[line width=1pt,color=red] (z) to [bend right=50] (4);
	
	\draw[line width=1pt,dashed] (4) -- (1);
	
\end{tikzpicture}}\hspace{.5cm}
\subfigure[]{\begin{tikzpicture}[scale = .5]

	\node (0) [tiny black vertex] at (45:1.2cm) {};
	\node (1) [tiny black vertex] at (90:1cm) {};
	\node (2) [tiny black vertex] at (150:1cm) {};
	\node (3) [tiny black vertex] at (210:1cm) {};
	\node (4) [tiny black vertex] at (270:1cm) {};
	\node (5) [tiny black vertex] at (315:1.2cm) {};
	
	\node (x) [tiny black vertex] at (120:2cm) {};		
	\node (y) [tiny black vertex] at (0:.6cm) {};	
	\node (z) [tiny black vertex] at (0:1.4cm) {};
	
	\node (w) [] at (0,-1.5) {};

	\draw[line width=1pt,color=black] (x) -- (1) -- (0) -- (z) -- (5) -- (4);
	\draw[line width=1pt,color=red]  (1) -- (2) -- (3) -- (4) -- (y) -- (z);
	
	\draw[line width=1pt,dashed] (4) -- (1);
	
\end{tikzpicture}}\hspace{.5cm}
\subfigure[]{\begin{tikzpicture}[scale = .5]

	\node (0) [tiny black vertex] at (45:1.2cm) {};
	\node (1) [tiny black vertex] at (90:1cm) {};
	\node (2) [tiny black vertex] at (150:1cm) {};
	\node (3) [tiny black vertex] at (210:1cm) {};
	\node (4) [tiny black vertex] at (270:1cm) {};
	\node (5) [tiny black vertex] at (315:1.2cm) {};
	
	\node (x) [tiny black vertex] at (120:2cm) {};		
	\node (y) [tiny black vertex] at (0:.6cm) {};	
	\node (z) [tiny black vertex] at (0:1.4cm) {};
	
	\node (w) [] at (0,-1.5) {};

	\draw[line width=1pt,color=black] (x) -- (1) -- (0) -- (z) -- (5) -- (4);
	\draw[line width=1pt,color=red]  (1) -- (2) -- (3) -- (4) -- (y) -- (0);
	
	\draw[line width=1pt,dashed] (4) -- (1);
	
\end{tikzpicture}}\hspace{.5cm}
\subfigure[]{\begin{tikzpicture}[scale = .5]

	\node (0) [tiny black vertex] at (45:1.2cm) {};
	\node (1) [tiny black vertex] at (90:1cm) {};
	\node (2) [tiny black vertex] at (150:1cm) {};
	\node (3) [tiny black vertex] at (210:1cm) {};
	\node (4) [tiny black vertex] at (270:1cm) {};
	\node (5) [tiny black vertex] at (315:1.2cm) {};
	
	\node (x) [tiny black vertex] at (120:2cm) {};		
	\node (y) [tiny black vertex] at (0:.6cm) {};	
	\node (z) [tiny black vertex] at (110:1.5) {};
	
	\node (w) [] at (0,-1.5) {};

	\draw[line width=1pt,color=black] (x) -- (z) -- (1) -- (0)  (5) -- (4);
	\draw[line width=1pt,color=red]  (1) -- (2) -- (3) -- (4) -- (y) -- (0);	
	\draw[line width=1pt,color=black] (5) to [bend right=60] (0);

	\draw[line width=1pt,dashed] (4) -- (1);
	
\end{tikzpicture}}\hspace{.5cm}
\subfigure[]{\begin{tikzpicture}[scale = .5]

	\node (0) [tiny black vertex] at (30:1cm) {};
	\node (1) [tiny black vertex] at (90:1cm) {};
	\node (2) [tiny black vertex] at (150:1cm) {};
	\node (3) [tiny black vertex] at (210:1cm) {};
	\node (4) [tiny black vertex] at (270:1cm) {};
	\node (5) [tiny black vertex] at (330:1cm) {};
	
	\node (x) [tiny black vertex] at (120:2cm) {};		
	\node (z) [tiny black vertex] at (180:1.5cm) {};
	
	\node (w) [] at (0,-1.5) {};

	\draw[line width=1pt,color=black] (x) -- (1) -- (2) -- (3) -- (4);
	\draw[line width=1pt,color=red]   (4) -- (5) -- (0) -- (1);
	\draw[line width=1pt,color=red] (x) to [bend right=15] (z);
	\draw[line width=1pt,color=red] (z) to [bend right=50] (4);
	
	\draw[line width=1pt,dashed] (4) -- (1);
	
\end{tikzpicture}}\hspace{.5cm}
\subfigure[]{\begin{tikzpicture}[scale = .5]

	\node (0) [tiny black vertex] at (45:1.2cm) {};
	\node (1) [tiny black vertex] at (90:1cm) {};
	\node (2) [tiny black vertex] at (135:1.2cm) {};
	\node (3) [tiny black vertex] at (225:1.2cm) {};
	\node (4) [tiny black vertex] at (270:1cm) {};
	\node (5) [tiny black vertex] at (315:1.2cm) {};
	
	\node (x) [tiny black vertex] at (3) {};		
	\node (y) [tiny black vertex] at (0:.6cm) {};	
	\node (z) [tiny black vertex] at (180:.6) {};
	
	\node (w) [] at (0,-1.5) {};

	\draw[line width=1pt,color=black] (x) -- (z) -- (1) -- (0)  (5) -- (4);
	\draw[line width=1pt,color=red]  (1) -- (2)  (3) -- (4) -- (y) -- (0);	
	\draw[line width=1pt,color=red] (2) to [bend right=60] (3);
	\draw[line width=1pt,color=black] (5) to [bend right=60] (0);

	\draw[line width=1pt,dashed] (4) -- (1);
	
\end{tikzpicture}}
 \caption{Dashed edges depict matching edges. One path is depicted in black, the other one in red.
 Figures~{(a), (b), (c), (d) and (e)} correspond to \(|V(P)\cap V(P')|= 3\)
 and {(f), (g)} to \(|V(P)\cap V(P')|= 4\). }
\label{fig:pepconfig}
\end{figure}


\begin{proof}[Proof of Lemma~\ref{lem:4}{(i)}]
We first set some notation.
We write $X = e_{i-1} P_ie_i P_{i+1}e_{i+1}$ as $e^*PeP'e'$
{($e^*, e'$ might not exist)}.
Let $e^*{:=}zz', e{:=}xx'$ and $e'{:=}yy'.$
{
{Further,
in the segment \(P(x,x')\), let \(\tilde{z}\) and \(\hat{z}\) be the neighbors of \(x\) and \(x'\),
respectively;
and, in the segment, \(P'(x,x')\) let \(\hat{y}\) and \(\tilde{y}\) 
be the neighbors of \(x\) and \(x'\),
respectively.}
Thus, \(X\) is the trail 
\(zz'\cdots x \tilde{z} \cdots \hat{z}x'x\hat{y}\cdots \tilde{y}x'\cdots y'y\)
and \(P\), \(P'\) are the segments of \(X\) given by \(z'\cdots x \tilde{z} \cdots \hat{z}x'\),
\(x\hat{y}\cdots \tilde{y}x'\cdots y'\), respectively (see Figure~\ref{fig:npup}).
Note that the segments $P(z',x)$, $P'(y',x')$ are paths of length 1 (edges) or 2, and the segments $P(\tilde{z},\hat{z})$, $P'(\tilde{y},\hat{y})$ 
are paths of length $2k-3$ or $2k-4$.}

We split the proof into three cases:{ (1) \(e^*\neq e', z\notin V(P)\) and \(y\notin V(P')\),
(2) \(e^*\neq e'\) and \(z\in V(P)\), and (3) \(e^* = e'\)}.
Note that if \(e^*\neq e'\), then \(|\{z,z',y,y'\}| = 4\). In particular \(z\neq y'\) and \(y\neq z'\).
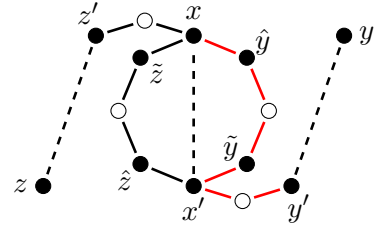
\begin{SCfigure}[4][h]
\hspace*{.5cm}
\begin{tikzpicture}[scale = 1]

	\node (0) [white vertex] at (0:1cm) {};
	\node (1) [black vertex] at (45:1cm) {};
	\node (2) [black vertex] at (90:1cm) {};
	\node (3) [black vertex] at (135:1cm) {};
	\node (4) [white vertex] at (180:1cm) {};
	\node (5) [black vertex] at (225:1cm) {};
	\node (6) [black vertex] at (270:1cm) {};
	\node (7) [black vertex] at (315:1cm) {};

	\node (z) [black vertex] at (-2,-1) {};
	\node (z') [black vertex] at (-1.3,1) {};
	\node (z'') [white vertex] at (-.65,1.2) {};
	\node (y) [black vertex] at (2,1) {};
	\node (y') [black vertex] at (1.3,-1) {};
	\node (y'') [white vertex] at (.65,-1.2) {};

	\draw[line width=1pt,color=black] (z') -- (z'') -- (2) -- (3) -- (4) -- (5) -- (6);
	\draw[line width=1pt,color=red]  (y') -- (y'') -- (6) -- (7) -- (0) -- (1) -- (2);
	
	\draw[line width=1pt,dashed] (6) -- (2) (z') -- (z) (y') -- (y);

	
	\node (1) [] at (45:1.3cm) {\(\hat{y}\)};
	\node (2) [] at (90:1.3cm) {\(x\)};
	\node (3) [] at (135:.7cm) {\(\tilde{z}\)};
	\node (5) [] at (225:1.3cm) {\(\hat{z}\)};
	\node (6) [] at (270:1.3cm) {\(x'\)};
	\node (7) [] at (315:.7cm) {\(\tilde{y}\)};

	\node (z) [] at (-2.3,-1) {\(z\)};
	\node (z') [] at (-1.4,1.3) {\(z'\)};
	\node (y) [] at (2.3,1) {\(y\)};
	\node (y') [] at (1.4,-1.3) {\(y'\)};
	
\end{tikzpicture}\hspace*{-0.4cm}\vspace{-.8cm}
 \caption{ $P$ and $P'$ are represented by black and red lines, respectively.
Dashed edges depict  matching edges.
For $k=3$, unfilled vertices may exist or not.} \label{fig:npup}
\end{SCfigure}

\vspace*{0.5cm}


\begin{itemize}
 \item[\underline{CASE 1}] $e^* \neq e'$, $z \notin V(P)$ and $y \notin V(P')$. 
 \end{itemize}
In other words, \(P\) is nice for \(e^*\) and \(P'\) is nice for \(e'\).

We first assume that $y' \neq \tilde{z}$. By the symmetry of~$X$,
this situation also covers the case that $z' \neq \tilde{y}$.
If $y \neq \tilde{z}$, then due to Claim~\ref{lem:3} the following subgraphs:   
$$P_1:= (P-x\tilde{z}) \cup e \cup  e^*  \,\,\, \text{and} \,\,\, P_2:= P' \cup x\tilde{z} \cup e'$$
form a path decomposition of $X$. Since the length of $(P-x\tilde{z}) \cup e$ is $2k-1$, 
the path $P_1$ has its length in $\{2k-1,2k\}$,
depending on whether~$e^*$ exists.
Analogously, we obtain that the path $P_2$ has its length in $\{2k,2k+1\}$.
Moreover, if $l(P_2)=2k+1$, then $e'$ belongs to $X$ and there is only
one vertex $y$ at which $\{P_1,P_2\}$ is not balanced. However, $y \in e'$ has odd degree in $X$.
Hence, $\{P_1,P_2\}$ is a quasi-$k$-balanced decomposition of~$X$.  
In the forthcoming analysis we use the same argument to guarantee that a path decomposition
is quasi-$k$-balanced, and thus, we might omit details.
Assume now that $y = \tilde{z}$. Therefore $k=3$, otherwise there is a contradiction to the girth condition
on~$X$. Due to Claim~\ref{lem:3}, we have $y' \notin V(P)$. 
Note that if $P_1\cup e'$ has length~7, then $e^*$ belongs to $X$. 
Thus, $\{P_1 \cup e', P_2 - e'\}$ is a quasi-balanced path decomposition of $X$.

Secondly, we suppose that $y' = \tilde{z}$ and $z' = \tilde{y}$.
Then, according to Claim~\ref{lem:3}, $k=3$
and $PeP'$ is isomorphic to the graph depicted in~(g) of Figure~\ref{fig:pepconfig}. 
Since $X$ is triangle-free, $z \notin V(P')$ and $y \notin V(P)$.
Let $z''$ (resp. $y''$) denote the neighbor of $z'$ (resp. $y'$) in $P$ (resp. $P'$). 
Therefore, the subgraph $P(z'',x') \cup x'z' \cup e^*$ and its complement with respect to $X$
form a quasi-balanced decomposition of $X$. 
{We clarify that along this work, the complement is edge-wise.}
That is, $K'$ is the complement of $K$ with respect to $X$ if $K'$ is obtained from $X$ by removing all edges
of $K$ and all isolated vertices of $X-E(K)$; by abuse of notation, $K' = X - K$.

\begin{itemize}
 \item[\underline{CASE 2}] $e^* \neq e'$ and $z \in V(P)$ 
\end{itemize}
In this case $k=3$. Otherwise, there is a contradiction to the condition on the girth of~$X$.
Recall that the vertex sets of $e^*$, $e$ and $e'$ are pairwise disjoint.
Trivially, if $z \in V(P)$, then~$z$ is an internal vertex of $P(x,x')$.
Since the segment $P(x,x')$ has length at least 3, there exists internal vertex of $P(x,x')$, say $n_z$,
such that $n_z$ is a neighbor of $z$.
Observe that $n_z$ is determined by the location of $z$, except in the particular case that $P(x,x')$ has length~4
and \(z\) is the middle vertex of \(P(x,x')\),
in which case $n_z \in \{\tilde{z},\hat{z}\}$.

We use the following subgraphs to split the proof into subcases:
$$H_1:=(P(x,x')-zn_z) \cup P'(x,x') \,\, \text{and} \,\, H_2: = X- H_1.$$
Note that $H_1$ and $H_2$ decompose the edge set of $X$ into two trails. 
Moreover, due to that the lengths of \(P(x,x')\) and \(P'(x,x')\) are in \(\{3,4\}\),
and \(l(H_1) = l(P(x,x')) + l(P'(x,x')) -1\), we have \(l(H_1)\in\{5,6,7\}\).

If \(H_1\) and \(H_2\) are paths and \(H_2\) has length in \(\{5,6,7\}\),
then $\{H_1,H_2\}$ is a quasi-balanced decomposition of $X$;
observe that if $H_1$ (resp. $H_2$) has length 7, then the decomposition $\{H_1,H_2\}$ is balanced
at all vertices of $X$, except at $z$ (resp. $y$), and  $z \in e^*$ (resp. $y \in e'$)
and  $e^*$ (resp. $e'$) belongs to $X$.
Note that if \(e'\in X\) and \(l(H_1)=5\), then \(l(H_2)=8\).
We assume that \(\{H_1, H_2\}\) is not a $\{5,6,7\}$-decomposition of \(X\) into paths.
Thus, one of the following situations holds:
\begin{itemize}
 \item $H_1$ is not a path.
 \item $H_2$ is not a path.
 \item the length of $H_1$ or $H_2$ is not in $\{5,6,7\}$. 
\end{itemize}

If $H_1$ is not a path, then due to Claim~\ref{lem:3}, $PeP'$ is isomorphic to the graph depicted in~(a)
of Figure~\ref{fig:pepconfig}.
Thus, $l(H_1)=7$
and $l(H_2)\in \{5,6\}$ subject to the existence of~$e'$. 
Moreover, \(P(x,x')\) and \(P'(x,x')\) have length \(4\) and \(z\in \{\hat{z},\tilde{z},z^*\}\),
where \(z^*\) is the middle vertex of \(P(x,x')\).
If \(z = z^*\), then \(X\) is exceptional{, a contradiction} (see Figure~\ref{fig:badcase}).
If \(z = \tilde{z}\), then \(\{x,z',\tilde{z}\}\) induces a triangle.
Thus, we assume that \(z = \hat{z}\) and thus, $n_z = z^*$. 

If $y\neq\hat{y}$ (resp.  $y=\hat{y}$), then  
$H_1'= H_1 - n_z\hat{y} $ and $H_2'= H_2 \cup n_z\hat{y}$ (resp. $H_1'=H_1 - n_z\tilde{z} $
 and $H_2'=H_2 \cup n_z\tilde{z}$)
form a quasi-balanced decomposition of $X$. This is because
$l(H'_1)\in\{5,6\}$, and if $l(H'_2)=7$, then $\{H'_1,H'_2\}$ is balanced at all vertices
of $X$, but at $y\in e'$ and $e'$ belongs to $X$.
Therefore, from now on, we can assume that $H_1$ is a path.

Let us discuss the second scenario; namely, $H_2$ is not a path.
We first assume that the cycles in $H_2$ arise from the intersection of the paths $P$ and $P'$. 
By definition of $H_2$, the segments of \(P\) and \(P'\) in \(H_2\) are \(P(z',x)\), \(P'(y',x')\), and \(zn_z\).
Recall that $z''$  (resp. $y''$) denotes the neighbor of $z'$ (resp. $y'$) in $P$ (resp. $P'$). 
We have  \(y' =  z''\), \(z' = y''\), or \(y' = n_z\). Further, due to Claim~\ref{lem:3}, exactly one of these equalities holds.
Note that \(z'' \neq y''\); otherwise, $X$ would have a triangle on  $\{x, x', y''\}$.
Suppose $y'=z''$. Clearly, this case is possible only if $z'' \neq x$ 
and \(y''\neq x'\).
Due to Claim~\ref{lem:3}, \(P(x,x')\) and \(P'(x,x')\) have length \(3\).
Since $X$ is triangle-free, $y \notin \{\hat{y},\tilde{z}\}$.
Let us consider the following decomposition of $X$:
$$H'_1:=e^* \cup P(z',x) \cup e  \cup P'(x',\hat{y}) \,\,\, \text{and} \,\,\, H_2': = X - H_1'.$$
If $y \neq \hat{z}$, then $\{H'_1, H'_2\}$
is a quasi-balanced decomposition of $X$. 
If $y = \hat{z}$, then \(z\neq \hat{z}\), otherwise \(z\) is incident to two matching edges, 
which implies $n_z = \hat{z}$ and $z = \tilde{z}$.
Thus, $(H'_1-e^*- z'z'')\cup e' \cup n_zz$ and its complement with respect to $X$
form a quasi-balanced decomposition.

We now suppose that $y' = n_z$.
Then, $y' \neq \hat{z}$.
Thus, \(n_z = \tilde{z}\) or \(P(x,x')\) has length~\(4\).
We claim that \(z=\hat{z}\).
In fact, if \(z = \tilde{z}\), 
then \(P(x,x')\) has length \(4\), and
\(z''=x\) and \(\{x,z',\tilde{z}\}\) induces a triangle in \(X\).
If \(z=z^*\), then the paths $H_1-z\hat{z}+ zn_z$ and 
$H_2-zn_z+ z\hat{z}$ form a quasi-balanced decomposition of $X$.

If $y \notin \{\tilde{y},\hat{y}\}$, then the path $e^* \cup P(z',n_z) \cup P'(n_z,\tilde{y})$
and its complement form a quasi-balanced decomposition of $X$.
Let $e_{x}$ denote the edge incident to $x$ in the segment $P(x,z')$.
If $y \in \{\tilde{y},\hat{y}\}$, then 
$H'_1 = e_x \cup P'(x,y) \cup e' \cup P'(y',x') \cup zx' \,\, \text{and} \,\,  
H'_2=X-H'_1$
form a $\{5,6,7\}$-decomposition 
of $X$ into paths.
Moreover, if $H'_1$ (resp. $H'_2$) has length 7, then $\{H'_1 ,H'_2\}$ is balanced
at all vertices of $X$ but at $y$ (resp. at $z$). Thus, 
$\{H'_1 ,H'_2\}$ is a quasi-balanced decomposition of $X$.

We now need to study the case that $z'=y''$. 
According to Claim~\ref{lem:3}, $PeP'$ is isomorphic to the graph depicted in~(b) or~(f)
of Figure~\ref{fig:pepconfig}.
Thus, the segments $P(x,x')$ and $P'(x,x')$  have length 3, 
and we have $z = \tilde{z}$ and $y \in \{n_z,\tilde{y},\hat{y}\}$;
note that, if \(z = \hat{z}\), 
then \(\{y'',x',\hat{z}\}\) induces a triangle in \(X\).
Moreover, $y' \neq \hat{z}$; otherwise $X$
creates a triangle or a multiple edge.
If $y\neq n_z$, then $H'_1:= e^* \cup P(z',x) \cup  P'(x,x') \cup x'\hat{z}$
and $H'_2= X - H'_1$
is a $\{5,6,7\}$-decomposition of $X$ into paths with $l(H'_1)=7$ and $l(H'_2)\in\{5,6\}$.
As $\{H'_1,H'_2\}$ is balanced at all vertices of $X$ except at $z$, the result follows.
Finally, if $y = n_z =\hat{z}$, then the paths
$$ H'_1:= P(\hat{z},z')  \cup z'x' \cup x'\tilde{y} \,\,\, \text{and} \,\,\, H'_2= X - H'_1$$
form the desired decomposition of $X$; observe that $l(H'_1)=6$, $l(H'_2)=7$ and $H'_2$ ends at~$z$.

We now move to the case that 
the cycles in $H_2$ do not arise from the intersection of the paths $P$ and $P'$. 
Hence, we necessarily have $y \in \{z'', n_z \}$. 
If \(y=z''\), then \(e^*\cup P(z',x)\cup e \cup P'(x',\hat{y})\) 
and \(X-H_1'\)
form a quasi-balanced decomposition of \(X\).
If \(y=n_z\), then \(zn_z\cup e^*\cup P(z',x)\cup e \cup P'(x',y')\)
and \(X-H_1'\) form a quasi-balanced decomposition of \(X\).
This completes the analysis for the case that $H_2$ is not a path.

Finally, we study the case that $H_1$ and $H_2$ are paths, but the length of at least
one of them is not in $\{5,6,7\}$. 
Note that, by definition, $5\leq l(H_1) \leq 7$.
On the other hand, $5\leq l(H_2) \leq 8$. 
The case $l(H_2)=8$ arises whenever $e'$ exists and 
$P(z',x)$, $P'(y',x')$ have length 2. 
In this case, if $y \neq \tilde{z}$, then the path 
$(H_1-x\tilde{z})\cup P(z',x)$ and its complement form a quasi-balanced
decomposition of $X$.
If $y = \tilde{z}$, then
the paths $H_1 \cup e^*$, of length 6, and $H_2-e^*$, of length 7,
form a quasi-balanced decomposition of $X$. The lemma follows.

\begin{itemize}
 \item[\underline{CASE 3}] $e^* = e'$  
\end{itemize}
Again, in this case $k=3$. Otherwise, there is a contradiction to the girth condition on~$X$.
Since all vertices of $X$ have even degree in $X$, every  $\{5,6,7\}$-decomposition
of $X$ into paths is balanced at every vertex of $X$.
If $PeP'$ is not isomorphic to~(a) of Figure~\ref{fig:pepconfig}, $z' \neq \tilde{y}$, and 
\(y' \neq \tilde{z}\), then, by Claim~\ref{lem:3},
we have that $P(\tilde{z},x')\cup e \cup P'(x,\tilde{y})$
and its complement form a  $\{5,6,7\}$-decomposition of $X$ into paths.
Suppose that $PeP'$ is isomorphic to~(a) of Figure~\ref{fig:pepconfig} and let $u$ be the vertex 
in $V(P)\cap V(P')-\{x,x'\}$.
Then, by Claim~\ref{lem:3}, $z' \neq \tilde{y}$ and \(y'\neq\tilde{z}\),
and thus the paths $\tilde{z}u \cup P'(u,x) \cup e \cup x'\hat{z}$ and its complement
form a $\{5,6,7\}$-decomposition of $X$.

Finally, suppose that at least one of  the following equalities $z' = \tilde{y}$, 
\(y'=\tilde{z}\),
occurs. Therefore, $z''\neq x$ and $y'' \neq x'$; since otherwise
\(\{x,x',\tilde{y}\}\) or \(\{x',y',\tilde{y}\}\) induces a triangle in~$X$.
Without loss of generality assume that $z' = \tilde{y}$ holds.
Then, $x\hat{y}\cup P(x,x')\cup x'\tilde{y}\cup \tilde{y}z''$
and its complement with respect to $X$ form the desired decomposition of~$X$.
\end{proof}
We say that a graph is an \emph{$(k,l)$-lollipop} 
if it can be obtained from the edge disjoint union of a path of length $k$, say $P$,
 and a cycle of length $l$, say $C$, satisfying that $V(P)\cap V(C)$ is an end vertex of $P$. 
\begin{proof}[Proof of Lemma~\ref{lem:4}{(ii)}]
We consider the names of the vertices of $P,P',e, e', e^*$ as in the proof of Lemma~\ref{lem:4}\emph{(i)}. 
Without loss of generality we assume that $X:=PeP'e'$.
If  $e^* = e'$, then $P'$ is not a path, a contradiction. Thus, $e^* \neq e'$.
Observe that a necessary condition on 
$eP'e'\tilde{P}$ (and $P'e'\tilde{P}\tilde{e}$) to be exceptional is 
that $P'e'$ is a $(1,5)$-lollipop, but $P'e'$ is a $(2,4)$-lollipop.
 \end{proof}

\end{document}

\input{fabio}